\newtheorem{theorem}{Theorem}
\newtheorem{lemma}{Lemma}
\newtheorem{observation}{Observation}
\newtheorem*{claim*}{Claim}
\newtheorem{corollary}{Corollary}
\newtheorem{example}{Example}
\theoremstyle{definition}
\newtheorem{definition}{Definition}
\crefname{claim}{Claim}{Claims}
\Crefname{claim}{Claim}{Claims}
\crefname{table}{Table}{Tables}
\Crefname{table}{Table}{Tables}
\crefname{observation}{Observation}{Observations}
\Crefname{observation}{Observation}{Observations}
\crefname{equation}{Equation}{Equations}
\Crefname{equation}{Equation}{Equations}
\newcommand{\probdecision}{$1|T_{\max} \le \ell, \sum U_j \le k|$\xspace}
\newcommand{\decprob}[3]{
	\begin{center}
		\begin{minipage}{0.96\linewidth}
			\noindent
			\textsc{#1}
			\begin{compactdesc}
			 \item[\textbf{Input:}]  #2
			 \item[\textbf{Question:}]  #3
			\end{compactdesc}
		\end{minipage}
	\end{center}
}
\newcommand{\optprob}[3]{
	\begin{center}
		\begin{minipage}{0.96\linewidth}
			\noindent
			\textsc{#1}
			\begin{compactdesc}
			 \item[\textbf{Input:}]  #2
			 \item[\textbf{Question:}]  #3
			\end{compactdesc}
		\end{minipage}
	\end{center}
}
\title{Minimizing the Number of Tardy Jobs and Maximal Tardiness on a Single Machine is NP-complete\thanks{Supported by the ISF, grant No.~1070/20.}
}
\author{Klaus Heeger\footnote{Department of Industrial Engineering and Management, Ben-Gurion~University~of~the~Negev, Beer-Sheva, Israel. \texttt{heeger@post.bgu.ac.il}.} \and
Danny Hermelin\footnote{Department of Industrial Engineering and Management, Ben-Gurion~University~of~the~Negev, Beer-Sheva, Israel. \texttt{hermelin@bgu.ac.il}.} \and
Michael L.~Pinedo\footnote{Stern School of Business, New York University, New York City, USA. \texttt{mlp5@stern.nyu.edu}.} \and
Dvir Shabtay\footnote{Department of Industrial Engineering and Management, Ben-Gurion~University~of~the~Negev, Beer-Sheva, Israel. \texttt{dvirs@bgu.ac.il}.}}
\date{}
\begin{document}
\maketitle

\begin{abstract}
This paper resolves a long-standing open question in bicriteria scheduling regarding the complexity of a single machine scheduling problem which combines the number of tardy jobs and the maximal tardiness criteria. We use the lexicographic approach with the maximal tardiness being the primary criterion. Accordingly, the objective is to find, among all solutions minimizing the maximal tardiness, the one which has the minimum number of tardy jobs. The complexity of this problem has been open for over thirty years, and has been known since then to be one of the most challenging open questions in multicriteria scheduling. We resolve this question by proving that the problem is strongly NP-complete. We also prove that the problem is at least weakly NP-complete when we switch roles between the two criteria (i.e., when the number of tardy jobs is the primary criterion). Finally, we provide hardness results for two other approaches (constraint and a priori approaches) to deal with these two criteria.
\end{abstract}

\section{Introduction}
\label{sec:intro}
Since the early stages of classical scheduling theory, the main focus has been the optimization of a single criterion such as the makespan, maximal tardiness, total tardiness, or number of tardy jobs. However, in many practical cases, service and production organizations need to take more than a single objective into account when trying to produce an efficient schedule. For example, a production firm may want to balance its ability to meet job due dates with its ability to control the amount of work-in process held in the shop. Meeting the first objective may prioritize scheduling jobs with an early due date first, while meeting the second objective may prioritize scheduling jobs with short processing times first. As another example, consider a pizzeria that charges no money for late deliveries. Accordingly, the pizzeria owners may try to provide a delivery schedule that minimizes the number of tardy deliveries. However, such a strategy may yield an unfair solution when late deliveries have huge tardiness. In order to produce balanced delivery schedules, they may consider the maximal tardiness as an additional criterion to evaluate the quality of a delivery schedule. Given the above deficiency of traditional scheduling models, the field of multicriteria scheduling has gained a lot of attention from the late 80's on. Ever since, the literature on multicriteria scheduling has expanded considerably, with several survey papers and books published over the years~\cite{MultiagentSchedulingBook,DBLP:journals/eor/Hoogeveen05,LeeV93,DBLP:journals/rairo/TKindtB01,MulticriteriaSchedulingBook}. 

Consider two different minimization objectives $\gamma_1$ and $\gamma_2$ for a given scheduling problem. In a typical setting there are not enough resources to compute the entire set of Pareto optimal solutions, as this set is usually quite large. Therefore, one needs a way to define which solution is the most desired in this set. There are essentially three established approaches to tackle this issue, each of which defines a different problem for a given pair of scheduling criteria $\gamma_1$ and $\gamma_2$ (see, e.g.,~\cite{DBLP:journals/eor/Hoogeveen05}).
\begin{itemize}
\item The \emph{lexicographic} approach: Find a solution that minimizes $\gamma_2$ among all solutions that minimize~$\gamma_1$. This variant is usually denoted by~$1||Lex(\gamma_1,\gamma_2)$ in the single machine setting, where $\gamma_1$ is called the \emph{primary criterion} and $\gamma_2$ is the \emph{secondary criterion}.
\item  The \emph{constraint} approach: Given a threshold parameter~$\ell$, find a solution that minimizes $\gamma_1$ subject to the constraint that $\gamma_2 \le \ell$. In the single machine setting with no additional constraints this variant is usually denoted by~$1|\gamma_2 \le \ell |\gamma_1$.
\item The \emph{a priori} approach: Find a solution that minimizes $\alpha \gamma_1+\gamma_2$, where $\alpha$ is a given constant that indicates the relative importance of criterion~$\gamma_1$ with respect to criterion $\gamma_2$. This variant is usually denoted by $1||\alpha \gamma_1+\gamma_2$ in the single machine setting. 
\end{itemize}
Note that the lexicographic approach is a special case of the a priori approach. 

In this paper we consider two of the most basic and classical scheduling objectives: The first is the maximal tardiness criterion, typically denoted by~$T_{\max}$, which measures the maximum tardiness of any job in the schedule. It is well-known that the $1||T_{\max}$ problem, the problem of minimizing $T_{\max}$ on a single machine, is solvable in $O(n\log n)$ time by processing the jobs based on the Earliest Due Date (EDD) rule~\cite{Jackson1956}. The second criterion, denoted $\sum U_j$, is the total number of tardy jobs in the schedule. The corresponding single machine $1||\sum U_j$ problem is also solvable in $O(n\log n)$ time, due to an algorithm presented by Moore in the late 60s~\cite{Moore1968}. 
It follows that minimizing either the maximal tardiness or the number of tardy jobs can be done in polynomial time when the scheduling is done on a single machine. However, when both criteria are considered together, the resulting bicriteria problems become much harder to analyze and solve. In fact, the computational complexity status of the problem using either the constraint, lexicographic, or a priori approach was mentioned as an open problem by several different authors. 

Lee and Vairaktarakis~\cite{LeeV93} published in 1993 an influential survey on bicriteria single machine scheduling problems. They focused on the lexicographic approach, and mentioned several open problems involving either one or both of the $T_{\max}$ and $\sum U_j$ criteria. Later on, all of these were resolved by Huo \emph{et al.}~\cite{Huo2,HuoLZ07}, apart from $1||Lex(T_{\max},\sum U_j)$ and $1||Lex(\sum U_j,T_{\max})$ which were left open. This is summarized in~\cite{HuoLZ07} with the following quote:
\begin{quote}
\begin{center}
\textit{``Despite much efforts spent on $1||Lex(T_{\max}, \sum U_j) $ and $1||Lex(\sum U_j, T_{\max})$, their complexity remain open. Although we cannot prove it, we conjecture that they are both NP-hard. It will be worthwhile to settle this issue in the future."}  
\end{center}    
\end{quote}
The complexity status of the problem was mentioned as open also in later surveys by T’kindt and Billaut~\cite{DBLP:journals/rairo/TKindtB01} and Hoogeveen~\cite{DBLP:journals/eor/Hoogeveen05}, and also in the book on multicriteria scheduling by T’kindt and Billaut~\cite{MulticriteriaSchedulingBook}. The $1|T_{\max} \leq \ell |\sum U_j$ problem is mentioned in the book on multiagent scheduling by Agnetis \emph{et al.}~\cite{MultiagentSchedulingBook}, who wrote 
\begin{quote}
\begin{center}
\textit{``The complexity of this problem still stands out as one of the most prominent open issues in theoretical scheduling."}    
 \end{center}
\end{quote}
As such, the complexity status of these problems have been established as the main open problem in multicriteria scheduling.

\subsection{Our Results}

We determine the computational complexity of single machine bicriteria scheduling involving objectives~$T_{\max}$ and $\sum U_j$ using either the constraint, lexicographic, or a priori approach, by showing that all problems are unlikely to admit polynomial-time algorithms. 

Our first main result involves the constraint variant of the problem. In its decision form, the problem asks to determine whether there exists a schedule with $T_{\max} \leq \ell$ and $\sum U_j \leq k$, and we denote this problem by~$1|T_{\max} \le \ell, \sum U_j \le k|$. We prove that this problem is strongly NP-complete by a reduction from \textsc{3-Partition}. 
\begin{restatable}{theorem}{ThmStrong}
\label{thm:strong}%
$1|T_{\max} \le \ell, \sum U_j \le k|$ is strongly NP-complete.
\end{restatable}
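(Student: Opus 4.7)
The plan is to reduce from the strongly NP-hard problem 3-Partition. An instance consists of $3m$ positive integers $a_1, \ldots, a_{3m}$ and a bound $B$ with $\sum_i a_i = mB$ and $B/4 < a_i < B/2$, asking whether these integers can be partitioned into $m$ triples each summing to exactly $B$. The goal is to construct a scheduling instance whose feasible schedules are in one-to-one correspondence with 3-Partition solutions, where the timeline is naturally carved into $m$ slots of length $B$.

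The construction would introduce $3m$ item jobs $J_i$ with $p_{J_i}=a_i$, together with an auxiliary family of separator jobs whose due dates delimit the $m$ slots. The separators would have small processing time and staggered due dates roughly of the form $jB+j$ for $j=1,\ldots,m-1$, so that an intended schedule places a 3-Partition triple of items inside each slot and a separator at each slot boundary. The thresholds $\ell$ and $k$ would then be tuned so that both $T_{\max}\le \ell$ and $\sum U_j\le k$ are tight in the intended schedule. The forward direction (from a 3-Partition solution to a feasible schedule) is then a direct verification by computing completion times slot by slot.

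The reverse direction, which establishes the only-if direction, is the technical heart. I would exploit the two constraints in a complementary fashion: $\sum U_j\le k$ limits the number of tardy jobs, which forces the separators (or a designated subset thereof) to meet their due dates, giving upper bounds on the cumulative item processing time before each slot boundary; meanwhile, $T_{\max}\le \ell$ bounds the completion time of every job, which combined with the fact that the total processing time equals the makespan yields matching lower bounds on the same cumulative sums. Together with the gap property $B/4<a_i<B/2$, which ensures that any slot of length exactly $B$ must contain exactly three items, these two-sided bounds force the 3-Partition structure.

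The main obstacle I anticipate is the joint calibration of the separator jobs, their due dates, and the thresholds $\ell$ and $k$, so that both the upper and lower bounds on each slot's contents emerge simultaneously from the two constraints. A single family of separators is typically sufficient only for the upper bounds; extracting the matching lower bounds will likely require either a second family of auxiliary jobs, or a careful interplay in which ``tardy but bounded'' jobs (controlled by $T_{\max}\le \ell$) are forced into narrow completion windows, effectively simulating release dates that the pure problem does not provide. Membership in NP is immediate, since any feasible schedule is a polynomial-size certificate.
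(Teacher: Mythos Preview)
Your high-level plan---reduce from \textsc{3-Partition}, use separator jobs at slot boundaries, let $\sum U_j\le k$ give upper bounds on prefix sums and $T_{\max}\le\ell$ give lower bounds---is the natural first attempt, but as you yourself flag, the lower-bound half is where it breaks, and your proposal does not actually close that gap. The obstacle is structural: $T_{\max}\le\ell$ is equivalent to giving every job~$J$ a deadline $d(J)+\ell$ with a \emph{single global} offset~$\ell$. In the standard \textsc{3-Partition}-to-scheduling reduction one sets item deadlines at the far end and separator deadlines at the slot boundaries, and these are independent choices; here that freedom is gone. If separators have due dates~$\approx jB$ and you want their (implicit) deadlines to bite near~$jB$ as well, then $\ell$ must be small---but then item jobs, whose due dates you have not specified, either acquire tight deadlines that pin them to particular slots in advance (destroying the reduction), or get large due dates making them all tardy, in which case $T_{\max}\le\ell$ yields no slot-wise lower bound at all. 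Your suggestion of ``tardy but bounded'' jobs simulating release dates points in a reasonable direction but is not a construction, and upper bounds on prefix sums alone (even together with $\sum a_i=mB$ and $B/4<a_i<B/2$) do not force each slot to sum to exactly~$B$.

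The paper's reduction is built around a different encoding that sidesteps this. Rather than placing item~$a_i$ into a slot, it creates for every pair~$(i,j)$ two jobs $J_{i,j},\neg J_{i,j}$ (and a parallel pair $J^*_{i,j},\neg J^*_{i,j}$) of nearly equal large processing time; the budget $k=2mn$ forces exactly one of each pair to be early, and ``$J_{i,j}$ early'' encodes the \emph{cumulative} bit $a_i\in S_1\cup\cdots\cup S_j$. Two delimiter jobs per period, whose earliness respectively enforces $\sum_{i\in S_1\cup\cdots\cup S_j}a_i\le jt$ and $\ge jt$, supply both inequalities you were missing. Crucially, $T_{\max}\le\ell$ is used \emph{not} to bound slot contents but to enforce consistency of the bits across periods: $\ell$ is set to roughly half a period so that, e.g., $J^*_{i,j}$ early forces $J_{i,j}$ early, and $J_{i,j}$ early forces $J^*_{i,j+1}$ early, making the cumulative sets nested. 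All of this is separated across four numeric scales $\alpha^3\gg\alpha^2\gg\alpha\gg 1$. None of these ingredients---the paired yes/no jobs, the cumulative encoding, the two delimiters per slot giving both inequality directions, or the use of $T_{\max}$ for inter-period consistency rather than intra-slot bounds---appears in your sketch, and I do not see a way to complete it without machinery of comparable weight.
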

\noindent Following this, we show that there exists an easy reduction from $1|T_{\max} \le \ell, \sum U_j \le k|$ to the $1||Lex(T_{\max}, \sum U_j)$ problem, as well an easy reduction from $1||Lex(T_{\max}, \sum U_j)$ to the $1||\alpha T_{\max}+\sum U_j$ problem. As both reductions preserve strong NP-hardness, this directly yields: 
\begin{corollary}
$1||Lex(T_{\max}, \sum U_j)$ and $1||\alpha T_{\max}+\sum U_j$ are both strongly NP-complete.
\end{corollary}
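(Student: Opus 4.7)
My plan is to chain two short polynomial-time reductions: from $1|T_{\max} \le \ell, \sum U_j \le k|$ to $1||Lex(T_{\max}, \sum U_j)$, and then from $1||Lex(T_{\max}, \sum U_j)$ to $1||\alpha T_{\max}+\sum U_j$. Since \Cref{thm:strong} supplies strong NP-hardness at the source, and each reduction introduces only numerical parameters of magnitude polynomial in the input size, both target problems inherit strong NP-hardness.

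For the first reduction, given an instance $I$ of $1|T_{\max} \le \ell, \sum U_j \le k|$ with threshold $\ell$ and bound $k$, I would first check in polynomial time via the EDD rule whether $I$'s job set admits any schedule with $T_{\max} \le \ell$; if not, $I$ is a trivial NO-instance. Otherwise I construct an instance $I'$ of $1||Lex(T_{\max}, \sum U_j)$ by adding a single forcing job $J_0$ with $p_0 = \ell$ and $d_0 = 0$, and shifting every original due date by $+\ell$. Two structural observations do the work: (i) $J_0$'s tardiness is always at least $\ell$, with equality exactly when $J_0$ is placed first; and (ii) once $J_0$ is placed first, the $+\ell$ due-date shift on the other jobs exactly cancels the $\ell$ delay introduced by $p_0$, so each original job has the same tardiness in $I'$ as in $I$. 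Together these imply that the minimum $T_{\max}$ on $I'$ equals exactly $\ell$ and that, among schedules realizing this minimum, $J_0$ is always tardy while the remaining jobs' tardy-count is identical to the corresponding schedule of $I$. Hence $(I,k)$ is a YES-instance if and only if the lex optimum for $I'$ satisfies $\sum U_j \le k+1$.

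For the second reduction, I keep the same job set and set $\alpha = n+1$. Since $\sum U_j$ is integer-valued in $\{0,1,\dots,n\}$ and $T_{\max}$ is a nonnegative integer (processing times and due dates being integers), any schedule whose $T_{\max}$ exceeds the minimum by at least $1$ has objective value larger by at least $\alpha = n+1$ from the $T_{\max}$ term alone, which strictly dominates the at-most-$n$ variation possible from the $\sum U_j$ term. Therefore every optimizer of $\alpha T_{\max}+\sum U_j$ must first minimize $T_{\max}$ and, among such schedules, minimize $\sum U_j$, which is exactly the lex objective; knowing the a priori optimum thus immediately yields the lex optimum.

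There is no substantive obstacle beyond the routine verification that both reductions preserve strongness: the new parameters ($p_0 = \ell$, $d_0 = 0$, the $+\ell$ shift, and $\alpha = n+1$) are bounded by quantities already present in the source instance, so the reductions remain polynomial-time even under unary encoding. Chaining with \Cref{thm:strong} then yields strong NP-hardness of both $1||Lex(T_{\max}, \sum U_j)$ and $1||\alpha T_{\max}+\sum U_j$.
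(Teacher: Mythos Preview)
Your proposal is correct, and both reductions differ in detail from the paper's while achieving the same goal.

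For the first reduction, the paper adds a forcing job at the \emph{end}: a job $J^*$ with $p(J^*)=P=\sum_j p_j$ and $d(J^*)=2P-\ell$, so that any schedule not placing $J^*$ last has $T_{\max}\ge P>\ell$, while placing it last yields $T_{\max}=\ell$ exactly. Your construction instead adds a forcing job at the \emph{start} (with $p_0=\ell$, $d_0=0$) and shifts the remaining due dates by $+\ell$. Both are equally clean and equally preserve strongness; yours has the minor aesthetic advantage that the added processing time is only $\ell$ rather than $P$, while the paper's avoids the due-date shift.

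For the second reduction the approaches genuinely diverge. You fix the job data and choose $\alpha=n+1$, exploiting that the integer-valued $T_{\max}$ term then dominates the bounded $\sum U_j$ term. The paper keeps $\alpha$ arbitrary and instead scales all processing times and due dates by $2n\lceil\alpha\rceil$, so that every tardiness value becomes a multiple exceeding the range of $\alpha\sum U_j$. The paper's route is strictly stronger: it establishes strong NP-hardness for \emph{every fixed} $\alpha>0$, whereas your reduction only shows hardness when $\alpha$ is part of the input (since your chosen $\alpha$ depends on $n$). For the corollary as literally stated this suffices, but if one reads the a~priori problem as parameterised by a constant $\alpha$ (as the paper's later Corollary~2 makes explicit), your argument would need the paper's scaling trick to cover that case.
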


Unfortunately, we could not find a direct reduction from $1|T_{\max} \le \ell, \sum U_j \le k|$ to $1||Lex (\sum U_j, T_{\max})$. Thus, we are forced to design an alternative reduction for this last variant. While we cannot find such a reduction from a strong NP-hard problem, we are still able to devise a reduction from the weakly NP-complete \textsc{Partition} problem, giving us our second main result of the paper. 
\begin{theorem}
\label{thm:weak}%
$1||Lex(\sum U_j, T_{\max})$ is weakly NP-complete.
\end{theorem}
\noindent Thus, altogether our results resolve the complexity of all single machine bicriteria problems involving the $T_{\max}$ and $\sum U_j$ criteria, answering the long standing open question posed in~\cite{MultiagentSchedulingBook,DBLP:journals/eor/Hoogeveen05,HuoLZ07,Huo2,LeeV93,DBLP:journals/rairo/TKindtB01,MulticriteriaSchedulingBook}.

\subsection{Related Work}

Shantikumar~\cite{Shanthikumar} designed a branch-and-bound procedure for $1||Lex(\sum U_j, T_{\max})$ which requires exponential time in the worst case. He also presented a polynomial-time algorithm for minimizing the maximal tardiness when the set of early jobs is given in advance. Chen and Bulfin~\cite{ChenBul} designed a branch-and-bound procedure for solving $1||Lex(T_{\max}, \sum U_j)$. By applying a set of numerical tests, they showed that their algorithm was able to solve instances of up to 40 jobs in less than one minute of computer time.
Finally, Huo \emph{et al.}~\cite{HuoLZ07} implemented several heuristics for $1||Lex (T_{\max}, \sum U_j)$ and tested them on instances with up to 200 jobs; their best heuristic was on average less than 1\% worse than the optimal.

Huo \emph{et al.}~\cite{HuoLZ07} also showed that the weighted problems $1|| Lex(\max w_jT_j, \sum U_j)$ and $1|| Lex (\sum U_j, \max w_jT_j)$, where each job has its own weight and one of the two criteria is to minimize the maximal weighted tardiness, are both weakly NP-complete. The results in this paper supersede both of these hardness results, as they are for the unweighted versions of these problems. The constraint problem $1|T_{\max} \le \ell | \sum U_j$ is a special case of $1 | \overline{d}_j|\sum U_j$. In this problem, each job has a due date~$d_j$ as well as a deadline~$\overline{d}_j$ which must be met. The goal is to minimize the number of tardy jobs while meeting all deadlines. When $\overline{d}_j = d_j + \ell$ for each job~$j$, this problem becomes $1|T_{\max} \le \ell | \sum U_j$. Yuan showed that~$1 | \overline{d}_j| \sum U_j$
is strongly NP-complete~\cite{Yuan17}, and this result is superseded by our results as well.

\section{Preliminaries}
\label{sec:prelims}

We use standard terminology from computer science and scheduling.
For each $n \in \mathbb{N}$, we denote by $[n]:= \{1, \ldots, n\}$ the set of positive integers up to (and including) $n$.
For a bijective function~$f:X \rightarrow Y$, we denote its inverse by $f^{-1}$; thus, we have $f^{-1} (y) =x$ if and only if $f(x) = y$.

\paragraph*{Scheduling.}
We consider scheduling problems involving a set of $n$ jobs $\mathcal{J}$; where all are available at time zero to be non-preemptively processed on a single machine. Each job~$J \in \mathcal{J}$ has two nonnegative, integral parameters: its \emph{processing time} $p (J)$ and its \emph{due date} $d (J)$.
A \emph{schedule} is a bijection from $[n]$ to the set~$\mathcal{J}$ of jobs, where the job~$J\in \mathcal J$ with $\sigma (i) = J$ is the $i$-th job to be processed on the single machine.

Given a schedule~$\sigma$, the \emph{completion time} of job~$J$ is $C_{\sigma} (J) := \sum_{i=1}^{\sigma^{-1} (J)} p (\sigma (i))$.
We will focus on the following two objectives common in the scheduling literature: $T_\sigma (J) :=\max\{0,C_\sigma (J)-d(J)\}$, which is the \emph{tardiness} of job $J$; and $U_\sigma (J)$ which is a binary tardiness indicator that equals 1 if $C_{\sigma} (J) > d (J)$ and 0 otherwise.
If the schedule $\sigma $ is clear from context, then we may also drop the subscript and just write $C (J)$, $T(J)$, or $U(J)$.
If $U(J) =1$, then we say that job $J$ is \emph{tardy}. Otherwise, $J$ is an \emph{early} job.

We study the scheduling problems arising from combining the objective total number of tardy jobs~($\sum U_j$) with the maximum tardiness~$T_{\max} = \max_{J \in \mathcal J} T (J)$ using the lexicographic, constraint, or a priori approach as described in \Cref{sec:intro}.
We will examplarily describe the problem resulting from the constrained approach using $T_{\max} $ as bounded objective:

\optprob{$1|T_{\max} \le \ell| \sum U_j$}{
A set~$\mathcal J$ of jobs with processing times $p : \mathcal J\rightarrow \mathbb{N}$, due dates $d: \mathcal J \rightarrow \mathbb{N}$, and an integer~$\ell \in \mathbb{N}$.
}{
Find a schedule~$\sigma$ minimizing $\sum_{J \in \mathcal J} U (J)$ among all schedules with $\max_{J \in \mathcal J} T (J) \le \ell$.
}
For the constrained version, we will call a schedule \emph{feasible} if it obeys the constraint.
For example, in the context of problem $1|T_{\max} \le \ell| \sum U_j$, a schedule is feasible if $T (J) \le \ell$ for every job~$J \in \mathcal{J}$.
A schedule is \emph{optimal} if it is feasible and, among all feasible schedules, minimizes the objective.
Taking again $1|T_{\max} \le \ell| \sum U_j$ as an example, a schedule is optimal if $T (J) \le \ell$ for every job $J \in \mathcal J $ and among all such schedules, it minimizes the number of tardy jobs.
Note that the lexicographic version is a special case of the constrained version where the upper bound on the constrained objective is the minimum value of this objective over all schedules;
thus, the notions of feasible and optimal schedule extend to the lexicographic case, where a schedule is feasible if it minimizes the primary objective.

\paragraph*{Canonical schedules.}

In order to minimize the maximum tardiness, one schedules the jobs according to non-decreasing due date~\cite{Jackson1956}.
By using modified due dates, this observation can easily be extended to finding a schedule with maximum tardiness~$\ell$ subject to the condition that a given subset~$\widetilde{\mathcal{J}}$ of jobs is early (see e.g.\ \cite[Observation 3.2]{HeegerHS23}). This leads to following definition which is central to our paper:
\begin{definition}[\textbf{Canonical schedule}]
\label{def:canonical}
Let $\mathcal{\widetilde J}$ be a subset of jobs. The \emph{canonical schedule} for $\mathcal{\widetilde J}$ is the schedule where jobs are processed in non-decreasing order of the modified due date~$\widetilde{d}$ defined by  
\[
\widetilde{d}(J) := 
\begin{cases}
d(J) & \text{if } J \in\widetilde{\mathcal J}\\
d(J) + \ell & \text{if } J \notin\widetilde{\mathcal J}
\end{cases}
\]
\end{definition}
\noindent The name canonical is justified due to the fact that if there exists a schedule with a maximum tardiness at most $\ell$ where each job from~$\widetilde{\mathcal J}$ is early, than there also such a schedule which is canonical. 
\begin{observation}\cite[Observation 3.2]{HeegerHS23}\label{obs:almost-edd}\label{lem:generalized-EDD}
    Let $\mathcal{ \widetilde J}$ be a subset of jobs.
        If there is a schedule with maximum tardiness at most $\ell$ where each job from~$\widetilde{\mathcal{J}}$ is early, then the canonical schedule for $\widetilde{\mathcal{J}}$ is such a schedule.
\end{observation}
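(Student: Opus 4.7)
The plan is to reduce the statement to Jackson's classical EDD result by using the modified due dates $\widetilde d$ as a bookkeeping device. The first observation I would make is that, for any schedule $\sigma$, the two simultaneous conditions ``the maximum tardiness is at most $\ell$'' and ``every job in $\widetilde{\mathcal J}$ is early'' are jointly equivalent to the single uniform bound
\[
C_\sigma (J) \le \widetilde d (J) \qquad \text{for every } J \in \mathcal J.
\]
Indeed, for $J \in \widetilde{\mathcal J}$ the job is early iff $C_\sigma(J) \le d(J) = \widetilde d(J)$, and for $J \notin \widetilde{\mathcal J}$ having tardiness at most $\ell$ amounts to $C_\sigma(J) \le d(J) + \ell = \widetilde d(J)$. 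This collapses both requirements into a standard feasibility condition with respect to a single due-date function.

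With this reformulation in hand, the statement becomes precisely: if some schedule satisfies $C(J) \le \widetilde d(J)$ for all $J$, then so does the EDD schedule for $\widetilde d$. This is a direct consequence of Jackson's rule~\cite{Jackson1956}, which asserts that sorting jobs in non-decreasing order of due date minimizes the maximum lateness $L_{\max} = \max_J (C(J) - \widetilde d(J))$. Since the canonical schedule for $\widetilde{\mathcal J}$ is by definition the EDD schedule for~$\widetilde d$, if the minimum achievable $L_{\max}$ is at most zero (which is witnessed by the hypothesized schedule), then the canonical schedule also achieves $L_{\max} \le 0$, and reversing the equivalence above finishes the proof.

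I do not anticipate a real obstacle, since Jackson's rule is classical; the only subtle point is the translation between the two formulations of the constraints. Should a self-contained argument be preferred, one can prove the reduction directly by an adjacent pairwise-interchange argument: if consecutive jobs $J, J'$ appear in the order $J, J'$ but $\widetilde d(J) > \widetilde d(J')$, swapping them leaves the completion time of $J'$ equal to the former completion time of $J$, and the completion time of $J$ after the swap is bounded by the former completion time of $J'$; hence both $C(J) \le \widetilde d(J)$ and $C(J') \le \widetilde d(J')$ still hold after the swap, while all other completion times are unchanged. Iterating such swaps transforms any feasible schedule into the canonical one without ever violating feasibility.
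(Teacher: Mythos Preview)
The paper does not give its own proof of this observation; it is quoted verbatim from~\cite{HeegerHS23} and stated without argument. Your reduction to Jackson's rule via the modified due dates $\widetilde d$ is exactly the right idea and constitutes a correct proof: the equivalence $\bigl(T_{\max}\le\ell \text{ and every }J\in\widetilde{\mathcal J}\text{ early}\bigr)\iff\bigl(C_\sigma(J)\le\widetilde d(J)\text{ for all }J\bigr)$ is valid, and EDD with respect to $\widetilde d$ minimizes $\max_J(C(J)-\widetilde d(J))$, so if any schedule makes this nonpositive then the canonical one does too.

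One small slip in your alternative exchange argument: after swapping adjacent jobs $J,J'$ (in that order) to $J',J$, it is \emph{not} true that the new completion time of $J'$ equals the old completion time of $J$; what holds is that the new $C(J)$ equals the old $C(J')$, while the new $C(J')$ is at most the old $C(J')$. The feasibility check then reads: new $C(J)=\text{old }C(J')\le\widetilde d(J')<\widetilde d(J)$ and new $C(J')\le\text{old }C(J')\le\widetilde d(J')$. With this correction the interchange argument goes through as you intended.
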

\noindent Note that given a subset $\mathcal{\widetilde J}$ of jobs, a schedule minimizing $T_{\max}$ among all schedules where each job from~$\widetilde{\mathcal J}$ is early can be found using \Cref{obs:almost-edd} and binary search.

\section{Strong NP-completeness}
\label{sec:strong-NP}
\newcommand{\blocklength}{\Delta}
\newcommand{\firstblocklength}{\blocklength_1}
\newcommand{\secondblocklength}{\blocklength_2}
\newcommand{\partitiontarget}{t}
\newcommand{\firsthalfend}{D^*}

This section describes a proof for \Cref{thm:strong}:
\ThmStrong*
\noindent In order to show \Cref{thm:strong}, we reduce from \textsc{3-Partition} which is strongly NP-complete~\cite{GareyJohnson}.
\decprob{3-Partition}{
A multiset of $n$ integers~$a_1, \ldots, a_{n}$.
}{
Is there a partition~$(S_1, \ldots, S_m)$ of $[n]$ where $n = 3m$ such that $\sum_{i \in S_j} a_i = \partitiontarget$ for every $j \in [m]$ where $\partitiontarget:= \sum_{i=1}^{n} a_i /m$?
}
\noindent 
We will interpret $1|\sum U_j \le k, T_{\max} \le \ell|$ as the decision version of the constrained $1|T_{\max} \le \ell|\sum U_j$ problem, implying that the feasibility of a schedule refers to its maximum tardiness, i.e., we call a schedule feasible if its maximum tardiness is at most $\ell$. We will use throughout our construction a sufficiently large constant~$\alpha$ defined by $\alpha = 10n^2 \cdot t$.

For ease of presentation, we partition the timeline into $m$ \emph{time periods}, each corresponding to a different index $j \in [m]$. Define $\delta$ to be the value
$$
\delta = 4n\cdot  \alpha^3 + 2 \cdot \alpha^2 +  (2m + 1)t \cdot \alpha + mt \,.
$$
The total length of each time period will be $\delta$, where the first period starts at time $\Delta_0=0$. In this way, the $j$-th time period starts at time $\Delta_{j-1}$ and ends at time $\Delta_j = \Delta_{j-1} + \delta$. Each period will consist of two \emph{halves}.
The length of the first half of the $j$-th time period will equal
$$
\delta^*_j = 2n \cdot \alpha^3 + \alpha^2  +  (2m -2j + 1)t   \cdot \alpha + (m - j)t\,. 
$$
In this way, the first half of the $j$-th time period ends at time $\Delta^*_j=\Delta_{j-1}+\delta^*_j$, which is the time when the second half starts.
Note that the length of the first and second half of each period is not exactly but only roughly half of the length of the whole period (more specifically, each half has length half of the whole period if ignoring the lower-order terms multiplied by $\alpha$ or $\alpha^0$). 

\subsection{Construction}
\label{sec:construction}

We next describe how to construct an instance of \probdecision from a given instance of \textsc{3-Partition}. The high level idea is as follows: We will construct $m$ \emph{groups} of jobs, one for each index~$j \in [m]$, where the $j$-th group will encode all the integers in $a_1,\ldots,a_n$ that are selected for the solution sets $S_1 \cup \ldots \cup S_j$. Let $j \in [m]$. The $j$-th job group will consist of the $4n$ jobs $J_{i,j}$, $\neg J_{i,j}$, $J^*_{i,j}$, and $\neg J^*_{i,j}$ where $i \in[n]$. Intuitively speaking, scheduling~$J_{i,j}$ and $J^*_{i,j}$ early will correspond to the situation where $a_i \in S_1 \cup \ldots \cup S_j$, while scheduling~$\neg J_{i,j}$ and~$\neg J^*_{i,j}$ early corresponds to the situation where $a_i \notin S_1 \cup \ldots \cup S_j$. Further, we add so-called \emph{delimiter jobs}~$D_j^*$ and $D_j$, ensuring that $\sum_{i \in S_1\cup \ldots \cup S_j} a_i \ge j \cdot \partitiontarget$ respectively $\sum_{i \in S_1 \cup  \ldots\cup S_j} a_i \le j \cdot \partitiontarget$ (see \Cref{lem:early-delimiter-jobs}). Together, accounting for all $j \in [m]$, this will ensure that the sum of integers in each~$S_j$ is exactly $t$.

\paragraph{Job construction.} For each $i \in [n]$ and each $j \in [m]$, we construct four \emph{number jobs} $J_{i,j}$, $\neg J_{i,j}$, $J^*_{i,j}$, and~$\neg J^*_{i,j}$, with the following processing times and due dates:
\begin{itemize}
\item $p(J^*_{i,j})=\alpha^3$ and $d(J^*_{i,j})=\Delta_{j-1} +  2i \cdot \alpha^3+ 0.1 \cdot \alpha^2 $.
\item $p(\neg J^*_{i,j})=\alpha^3 +  a_i$ and $d(\neg J_{i,j}^*)=\Delta_{j-1} + (2i-1)\cdot \alpha^3+ 0.1 \cdot \alpha^2$.
\item $p(J_{i,j}) = \alpha^3 + a_i \cdot \alpha$ and $d(J_{i,j}) = \Delta^*_j + 2i \cdot \alpha^3 + 0.1 \cdot \alpha^2 $.
\item $p(\neg J_{i,j}) = \alpha^3 $ and $d(\neg J_{i,j}) = \Delta^*_j + (2i-1) \cdot \alpha^3+ 0.1 \cdot \alpha^2 $.
\end{itemize}
Note that we have \begin{align*}
d(J_{n, j-1}) &< d(\neg J_{1,j}^*) < d(J_{1,j}^*) < d(\neg J_{2,j}^*)< d(J_{2,j}^*) < \ldots < d( \neg J_{n, j}^*) < d(J_{n, j}^*) \\
& < d(\neg J_{1,j}) < d(J_{1,j}) < d(\neg J_{2,j}) < d(J_{2,j}) < \ldots < d(\neg J_{n, j}) < d(J_{n, j}) < d(\neg J_{1, j+1}^*).
\end{align*}

Furthermore, for each $j \in [m]$, we add two \emph{delimiter jobs} $D_j^*$ and $D_j $ in addition to the $4n$ jobs defined above. These jobs are used to indicate the end of the first and second halves of the $j$-th period. That is, time~$\Delta^*_j$ and time~$\Delta_j$, respectively. The characteristics of $D_j^*$ and $D_j$ are defined as follows:
\begin{itemize}
\item $p(D_j^*) = \alpha^2 + (m - j) t  \cdot  \alpha$ and $d(D_j^*) = \Delta^*_j $, and
\item $p(D_j) = \alpha^2 + j t \cdot \alpha $ and $d(D_j) = \Delta_j$.
\end{itemize}

We will also need to add some filler jobs for the first and last time period. We start by constructing the job~$F_0$ with processing time and due date $p(F_0)=d(F_0)= m \cdot t \cdot \alpha$. The role of this job is to offset all due dates by $m \cdot t \cdot \alpha$, resulting in these due dates being expressible by slightly simpler formulas.
Thus, this job will always be scheduled first in any feasible schedule for the entire \probdecision instance. Furthermore, we construct $n$ filler jobs $F^1_1,\ldots,F^1_n$ for the first time period, and $n$ filler jobs $F^m_1,\ldots,F^m_n$ for the last time period. These filler jobs will have the following characteristics:
\begin{itemize}
\item $p(F_i^1) = \alpha^3$ and due date $d(F_i^1) = d(J^*_{i, 1})$ for each $i \in [n]$ (so $F_i^1$ is a copy of $J^*_{i,1}$), and 
\item $p(F_i^m) = \alpha^3$ and $d(F_i^m) = \Delta_m + 2i \cdot \alpha^3 + 0.1 \cdot \alpha^2$ for each $i \in [n]$.
\end{itemize}

\begin{table}[h!]
\begin{center}
\begin{tabular}{c|c|c}
Job & Processing Time & Due Date\\
\hline
\hline
$J^*_{i,j}$ & $\alpha^3$ & $\Delta_{j-1} + 2i \cdot \alpha^3  + 0.1 \cdot \alpha^2 $\\
$\neg J^*_{i,j}$ & $\alpha^3 + a_i$ & \quad $\Delta_{j-1} + (2i-1)\cdot \alpha^3 + 0.1 \cdot \alpha^2 $ \quad \\
\hline
$D_j^*$ & $ \quad \alpha^2 + (m - j)t \cdot \alpha$ \quad & $\Delta^*_j$\\
\hline
\hline
$J_{i,j}$ & $\alpha^3 +  a_i \cdot \alpha$ & $\Delta^*_j+ 2i \cdot \alpha^3 + 0.1 \cdot \alpha^2 $\\
$\neg J_{i,j}$ & $\alpha^3 $ & $\Delta^*_j + (2i-1) \cdot \alpha^3+ 0.1 \cdot \alpha^2  $\\
\hline
$D_j$ & $ \alpha^2 + jt \cdot \alpha$ & $\Delta_j$\\
\hline
\hline
$F_0$ & $ m t  \cdot \alpha$ & $m t \cdot \alpha$\\
\hline
$F_i^1$ & $\alpha^3$ & $d(J^*_{i,1})$\\
\hline
$F_i^m$ & $\alpha^3$ & $\Delta_m + 2i \cdot \alpha^3 + 0.1 \cdot \alpha^2$
\end{tabular}
\end{center}
\caption{The processing times and due dates of all jobs constructed for the \probdecision instance.
}
\label{tab:jobs}
\end{table}

\paragraph{Parameters $\boldmath{k}$ and $\boldmath{\ell}$.}

We have described above all the jobs constructed for the \probdecision instance. For an overview of their processing times and due dates, see \Cref{tab:jobs}. To finish our construction, we set the target number~$k$ of tardy jobs to
$$
k \,\,=\,\, 2 m \cdot n,
$$
and the target maximum tardiness $\ell$ to
$$
\ell \,\,=\,\, 2n \cdot \alpha^3 + \alpha^2 + 0.1 \cdot \alpha^2 .
$$
Intuitively, the target maximum tardiness allows each job to be tardy by approximately half a period.

\paragraph{Proof roadmap.}

We now briefly sketch the structure of the proof. The main technical difficulty is to show that it indeed suffices to consider canonical schedules for job sets corresponding to an encoding of a solution to the \textsc{3-Partition} instance as described above. That is, job sets containing either both of $J_{i,j}$ and $J_{i, j}^*$ or both of $\neg J_{i,j}$ and $\neg J_{i,j}^*$ for every $i\in [n]$ and $j \in [m]$, and whenever $J_{i,j}$ and $J_{i,j}^*$ are contained in the set, then also $J_{i ,j+1}$ and $J_{i, j+1}^*$ are contained in the set.
This will be proven in \Cref{sec:candidates} and \Cref{lem:only-candidate}, and the corresponding sets of jobs are called \emph{candidate sets}.
Before formally describing candidate sets, we give necessary definitions and helpful observations in \Cref{sec:jobsets}. Assuming that it is indeed sufficient to consider the canonical schedules for candidate sets, we will show the correctness of the reduction in \Cref{sec:correctness}.
We finally prove that we may restrict ourselves to canonical schedules for candidate sets in \Cref{sec:lem5proof}.

\subsection{Job sets $\boldmath{\mathcal{J}_{\le j}}$ and $\boldmath{\mathcal{J}^*_{\le j}}$}
\label{sec:jobsets}

We now introduce notation and observations useful for the proof of correctness.
In particular, we will make frequent use of the following sets of jobs, containing the non-filler and non-delimiter jobs with due dates in the first respectively second half of the $j$-th period: 
For each $j \in [m]$, we use $\mathcal{J}^*_j := \{J_{i,j}^*, \neg J_{i,j}^* : i \in [n]\}$ and $\mathcal{J}_j := \{J_{i,j}, \neg J_{i,j} : i \in [n]\}$.
Next, we define
$\mathcal{J}_{\le 0}=\{F_0\} \cup \{F_i^1 : i \in [n]\}$. Then, for each $j \in [m]$, we define
$$
\mathcal{J}^*_{\le j} := \mathcal{J}_{\le j-1} \cup \mathcal{J}^*_{j} \cup \{  D^*_{j}\}
$$
and
$$
\mathcal{J}_{\le j} := \mathcal{J}_{\le j}^* \cup \mathcal{J}_{j} \cup \{  D_{j}\}. 
$$
In other words, for $j > 0$, $\mathcal{J}_{\le j}$ contains all jobs with due dates at most $\Delta_j$, whereas $\mathcal{J}_{\le j}^*$ contains all jobs with due dates at most $\Delta_j^*$.

We first observe that in any feasible schedule, all jobs from $\mathcal J_{\le j}^*$ except $D_j^*$ must be completed by the end of the $j$-th period (i.e., by time $\Delta_j$), and all jobs from $\mathcal J_{\le j}$ except for $D_j$ must be completed by the end of the first half of the period~$j+1$ (i.e., by time~$\Delta_{j + 1}^*$).
\begin{observation}
\label{eq:negJ}%
 For each $j \in [m]$, we have 
\[\max_{J \in \mathcal J_{\le j}^* \setminus \{D_j^*\}} d(J) + \ell < \Delta_j.\]
\end{observation}
\begin{proof}
The observation follows by easy calculations:
\begin{align}
\max_{J \in \mathcal J_{\le j}^* \setminus \{D_j^*\}} d(J) + \ell & = d(J^*_{n, j}) + \ell \notag\\
& = \Delta_{j -1} + 2 n \cdot \alpha^3 + 0.1 \cdot \alpha^2 + 2n\cdot \alpha^3+ 1.1 \cdot \alpha^2 \notag \\
& = \Delta_{j -1} + 4n\cdot \alpha^3  + 1.2 \cdot \alpha^2  \notag\\
& = \Delta_j - \delta + 4n\cdot \alpha^3  + 1.2 \cdot \alpha^2  \notag\\
& = \Delta_j - 0.8 \cdot \alpha^2 - (2m +1)  t\cdot \alpha  - m t< \Delta_j. \qedhere 
\end{align}
\end{proof}
\begin{observation}
    \label{eq:negtildeJ}
     For each $j \in [m]$, we have 
    \[\max_{J \in \mathcal J_{\le j} \setminus \{D_j\}} d(J) + \ell < \Delta_{j+1}^*.\]
\end{observation}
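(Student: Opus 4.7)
The strategy is to carry out exactly the same kind of direct arithmetic verification that established \Cref{eq:negJ}, just shifted by half a period. First I would identify the maximizing job. From the chain of due-date inequalities displayed in the construction, the due dates of the jobs in $\mathcal J_j = \{J_{i,j}, \neg J_{i,j} : i \in [n]\}$ are strictly increasing in the order $\neg J_{1,j}, J_{1,j}, \ldots, \neg J_{n,j}, J_{n,j}$, and each of these strictly exceeds every due date in $\mathcal J_{\le j}^*$ (in particular, $d(D_j^*) = \Delta_j^* < d(\neg J_{1,j})$) as well as every due date in $\mathcal J_{\le j-1}$. Hence the maximum over $\mathcal J_{\le j}\setminus\{D_j\}$ is attained at $J_{n,j}$ with $d(J_{n,j}) = \Delta_j^* + 2n\alpha^3 + 0.1\cdot\alpha^2$.

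Next I would plug in $\ell = 2n\alpha^3 + 1.1\cdot\alpha^2$ and compute
\[
d(J_{n,j}) + \ell \;=\; \Delta_j^* + 4n\alpha^3 + 1.2\cdot\alpha^2.
\]
The remaining task is to show this is strictly less than $\Delta_{j+1}^* = \Delta_j^* + (\delta - \delta_j^*) + \delta_{j+1}^*$. Expanding using the definitions of $\delta$, $\delta_j^*$, and $\delta_{j+1}^*$ gives
\[
\Delta_{j+1}^* - \Delta_j^* \;=\; 4n\alpha^3 + 2\alpha^2 + (2m-1)t\cdot\alpha + (m-1)t,
\]
so the difference $\Delta_{j+1}^* - (d(J_{n,j}) + \ell) = 0.8\cdot\alpha^2 + (2m-1)t\cdot\alpha + (m-1)t$, which is strictly positive. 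This yields the claim.

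There is no real obstacle here: the statement is a purely arithmetic counterpart to \Cref{eq:negJ}, and the only mild bookkeeping is unfolding $\Delta_{j+1}^*$ as $\Delta_j^*$ plus the second half of period $j$ plus the first half of period $j+1$, then noticing that the $2\alpha^2$ contribution from the two half-period lengths dominates the $1.2\cdot\alpha^2$ slack coming from $\ell$. (For $j = m$ the statement is vacuous, since there is no $(m{+}1)$-st period to discuss; I would simply restrict $j \in [m-1]$.)
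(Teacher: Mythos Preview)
Your proposal is correct and follows essentially the same approach as the paper: both identify $J_{n,j}$ as the maximizer, compute $d(J_{n,j})+\ell=\Delta_j^*+4n\alpha^3+1.2\alpha^2$, and then verify that $\Delta_{j+1}^*-\Delta_j^*=4n\alpha^3+2\alpha^2+(2m-1)t\alpha+(m-1)t$ leaves the positive slack $0.8\alpha^2+(2m-1)t\alpha+(m-1)t$. Your remark about the $j=m$ case is a reasonable side observation that the paper does not make explicit.
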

\begin{proof}
    The proof follows by calculations analogous to the proof of \Cref{eq:negJ}:
    \begin{align}
\max_{J \in \mathcal J_{\le j} \setminus \{D_j\}} d(J) + \ell & = d(J_{n, j}) + \ell\notag \\
& = \Delta_{j}^* + 2n \cdot \alpha^3 + 0.1 \cdot \alpha^2 + 2n\cdot \alpha^3  + 1.1 \cdot \alpha^2 \notag\\
& = \Delta_{j}^* + 4n\cdot \alpha^3 +1.2 \cdot \alpha^2 \notag\\
& = \Delta_{j+ 1}^* - \delta + mt + 4n\cdot \alpha^3 +1.2 \cdot \alpha^2 \notag\\
& = \Delta_{j+1}^* - 0.8 \cdot \alpha^2  - (2m - 1) t \cdot \alpha - (m-1) t < \Delta_{j+1}^*.\notag\qedhere
\end{align}
\end{proof}

We next compare the total processing time of all jobs in $\mathcal{J}_{\le j}^*$ to $\Delta_j$. This value represents, assuming that all jobs from $\mathcal{J}_{\le j}^*$ are completed by the end of the $j$-th period (which is true in all feasible schedules for $\mathcal{J}_{\le j}^* \setminus \{D_j^*\}$ by \Cref{eq:negJ}), how much processing time is left for other jobs.
Similarly, we compare the total processing time of all jobs in $\mathcal{J}_{\le j}^*$ to $\Delta_j^*$, representing the time by which all jobs from $\mathcal{J}_{\le j}^*$ exceed the first $j-1$ periods and the first half of the $j$-th period.
The resulting identity will be used frequently:
\begin{observation}\label{obs:deadline}\label{eq:tildeJ_Delta*}
    For each $j \in [m]$, we have 
    \begin{align*}
        p(\mathcal{J}^*_{\le j}) &= \Delta_j - (n \cdot \alpha^3 + \alpha^2 + 2j t \cdot \alpha)\\
        & =\Delta_j^* + n \cdot \alpha^3 + jt \,.
    \end{align*}
\end{observation}
\begin{proof}
The proof is by induction on~$j$. For $j = 1$, we have 
\begin{align*}
p(\mathcal{J}_{\le 1}^* ) & = \sum_{i= 1}^{n} \bigl(p(J^*_{i, 1}) + p(\neg J^*_{i, 1})\bigr)+ p(D^*_{1}) + \sum_{i =1}^n p(F_i^1) + p(F_0)\\
& = 2n \cdot \alpha^3 + m t + \alpha^2 +  (m-1) t \cdot \alpha + n \cdot \alpha^3 + m t \cdot \alpha\\
& = 3n \cdot \alpha^3 + \alpha^2 + (2m - 1) t \cdot \alpha + m \cdot t  = \Delta_1 - n \cdot \alpha^3 - \alpha^2 - 2t\cdot \alpha\,.
\end{align*}

For $j> 1$, note that
\[
\mathcal{J}_{\le j}^* \setminus \mathcal{J}^*_{\le j-1} = \mathcal{J}^*_j \cup \mathcal J_{j-1} \cup \{D_j^*, D_{j-1}\}\,.
\]
Thus, we have (using induction for the second equality)
\begin{align*}
p(\mathcal{J}^*_{\le j}) & = p(\mathcal{J}^*_{\le j-1}) + p (\mathcal{J}^*_{\le j} \setminus \mathcal{J}^*_{\le j-1}) \\
& = \Delta_{j-1} - \Bigr( n\cdot \alpha^3 + \alpha^2 + 2(j-1) t\cdot \alpha\Bigr) + p(\mathcal{J}_j^*) + p(\mathcal{J}_{j-1}) + p(D_j^*) + p(D_{j-1})\\
& = \Delta_{j-1} - n \cdot \alpha^3  - \alpha^2 - 2(j-1) t\cdot \alpha + 2n \cdot \alpha^3 + m t + 2n \cdot \alpha^3 + m  t\cdot \alpha \\
& \qquad + \alpha^2+ (m-j)t\cdot \alpha + \alpha^2 +  (j-1) t \cdot \alpha\\
& = \Delta_{j-1} + 3n\cdot \alpha^3  + \alpha^2 + (2m -2j + 1)t \cdot \alpha +  m t \\
& = \Delta_j - n \cdot \alpha^3 - \alpha^2 -  2j t\cdot \alpha \,. 
\end{align*}

For the second equation, we have
    \begin{align*}
        p(\mathcal{J}^*_{\le j}) & = \Delta_j - (n \cdot \alpha^3 + \alpha^2 + 2jt \cdot \alpha)\notag\\
        & = \Delta_j^* + 2n \cdot \alpha^3 + \alpha^2 + 2jt \cdot \alpha + jt -n \cdot \alpha^3 - \alpha^2 - 2jt \cdot \alpha\notag\\
        & = \Delta_j^* + n \cdot \alpha^3 + jt\,.\qedhere
    \end{align*}
\end{proof}

Analogously to \Cref{obs:deadline}, we now compare the total processing time of all jobs from $\mathcal{J}_{\le j-1})$ to the end of the first half of the $j$-th period and the end of the $(j-1)$-th period:

\begin{observation}
\label{obs:ptildeJ}\label{eq:Jj-1_Dj-1}%
For each $j \in [m]$, we have
\begin{align*}
    p(\mathcal{J}_{\le j-1}) &= \Delta^*_j -( n\cdot \alpha^3  + \alpha^2 + (m - j) t\cdot \alpha + (m - j)  t)\\
    & =\Delta_{j-1} + n \cdot \alpha^3 + (m-j+1)t \cdot \alpha \,.
\end{align*}
\end{observation}

\begin{proof}
We have (using \Cref{obs:deadline} for the second equality)
\begin{align*}
p(\mathcal{J}_{\le j-1}) & = p(\mathcal{J}_{\le j}^*) - \Bigl(p(\mathcal{J}_j^*) + p(D_j^*)\Bigr)\\
& = \Delta_j - \Bigl(n \cdot \alpha^3 + \alpha^2 + 2jt\cdot \alpha\Bigr) - \Bigl(2n\cdot \alpha^3 + m t + \alpha^2 + (m -j)t\cdot \alpha\Bigr)\\
& = \Delta_j - \Bigl(3n \cdot \alpha^3 + 2 \cdot \alpha^2  + (m + j) t \cdot \alpha + m t\Bigr)\\
& = \Delta_{j-1} + \delta - \Bigl(3n \cdot \alpha^3 + 2 \cdot \alpha^2 + (m+j) t \cdot \alpha + m t\Bigr)\\
& = \Delta_j^* + 2n\cdot \alpha^3 + \alpha^2 + 2j t \cdot \alpha + j t - \Bigl(3n \cdot \alpha^3+ 2 \cdot \alpha^2  + (m+j) t \cdot \alpha + m t\Bigr)\\
& = \Delta_j^* - n \cdot \alpha^3 - \alpha^2 -  (m - j) t \cdot \alpha - (m -j) \cdot t\\
        & = \Delta_{j-1} + 2n \cdot \alpha^3 + \alpha^2 + (2m -2j + 1) t \cdot \alpha + (m -j )t - n \cdot \alpha^3 - \alpha^2 - (m-j)t \cdot \alpha - (m-j) t\notag\\
        & = \Delta_{j-1} + n \cdot \alpha^3 + (m-j+1)t \cdot \alpha\,.\qedhere
\end{align*}
\end{proof}

\subsection{Candidate Sets}
\label{sec:candidates}

We will now consider a special kind of subsets of jobs called \emph{candidate sets} that encode solutions to the given \textsc{3-Partition} instance, as discussed in Section~\ref{sec:construction}. As we will later see in \Cref{sec:correctness}, it suffices to consider canonical schedules for candidates sets, since whenever there exists a feasible schedule with at most $k$ tardy jobs, there also exists such a schedule which is canonical for some candidate set (Lemma~\ref{lem:only-candidate}). We start by defining a candidate set of jobs:

\begin{definition}
A \emph{candidate set} is a set $\widetilde{\mathcal{J}}$ of jobs such that for each $i \in [n]$ and $j\in [m]$, exactly one of~$J^*_{i,j}$ and $\neg J^*_{i,j}$, and exactly one of $J_{i,j}$ and $\neg J_{i,j}$ is contained in $\widetilde{\mathcal{J}}$, and every filler or delimiter job is contained in $\widetilde{\mathcal{J}}$.

\end{definition}
Whenever we will talk in the following about the canonical schedule for a candidate set $\widetilde{\mathcal{J}}$, the possible tie between $F_i^1$ and $J_{i,1}^*$ will always be broken in favor of $J_{i,1}^*$ i.e., if both $F_i^1$ and $J_{i,1}^*$ are contained in $\widetilde{\mathcal{J}}$, then $F_i^1$ will be scheduled after $J_{i,1}^*$ in the canonical schedule for $\widetilde{\mathcal{J}}$.
We remark that not all candidate sets lead to feasible schedules.
\begin{example}
Assume that $n =4$, $m= 2$, and 
$$
\widetilde{\mathcal{J}}  = \mathcal{J}_{F} \cup \mathcal{J}_D \cup\{J^*_{1,1}, \neg J^*_{2,1}, \neg J^*_{3,1}, J^*_{4,1}, J^*_{1,2}, J^*_{2,2 }, J^*_{3, 2}, J^*_{4,2}\} \cup \{J_{1,1}, \neg J_{2,1}, \neg J_{3,1}, J_{4,1}, J_{1,2}, J_{2,2 }, J_{3, 2}, J_{4,2}\} 
$$
where $\mathcal{J}_F$ is the set of filler jobs and $\mathcal J_D$ is the set of delimiter jobs.
Then the canonical schedule for~$\widetilde{\mathcal J}$ is
\[\begin{array}{rrrrrrrrrrrrrrrrrrrr}
F_0, & J^*_{1,1} , & F_1^1, &  \neg J^*_{2, 1},  & F_2^1, &  \neg J^*_{3,1}, &  F_3^1, & J^*_{4,1} , & F_4^1, & D^*_1\\
& \neg J^*_{1,1} , & J_{1,1}, &  \neg J_{2, 1},  & J^*_{2,1}, &  \neg J_{3,1}, &  J^*_{3, 1}, & \neg J^*_{4,1} , & J_{4,1},  & D_1\\
\hline
& \neg J_{1,1}, & J^*_{1, 2} , &  J^*_{2, 2},  & J_{2,1}, &  J^*_{3,2}, & J_{3,1},
& \neg J_{4,1}, & J^*_{4, 2} , & D^*_2\\
& \neg J^*_{1,2} , & J_{1,2}, &  \neg J^*_{2, 2},  & J_{2,2}, &  \neg J^*_{3,2}, &  J_{3, 2},
& \neg J^*_{4,2} , & J_{4,2}, &  D_2\\
\hline
& \neg J_{1,2}, & F_1^m, &\neg J_{2,2}, & F_2^m, &\neg J_{3,2}, & F_3^m, &\neg J_{4,2}, &F_4^m\,.
\end{array}\]
The horizontal lines separate the time periods, and each row
represents one half of a period.
\end{example}

Canonical schedules for candidate sets are the focal point of our construction. In what follows we derive important structural properties of such schedules that will prove useful later on. We begin by first classifying which jobs are early in the canonical schedule of some candidate set (\Cref{lem:early-delimiter-jobs,lem:early-filler-jobs,lem:early-number-jobs}). We then give a necessary and sufficient condition for the feasibility of such a schedule, i.e., when do such schedules have maximum tardiness at most~$\ell$ (\Cref{lem:candidate->feasible}).
All of this will allow us to easily classify when a candidate set leads to a feasible schedule, and also help us in calculating how many tardy jobs such a schedule has.

We begin by proving that all filler jobs are early in the canonical schedule of any candidate set. We then categorize when number jobs and delimiter jobs are early in such schedules.  

\begin{lemma}\label{lem:early-filler-jobs}
In the canonical schedule for any candidate set, all filler jobs are early.
\end{lemma}

\begin{proof}
Job~$F_0$ is the first job and thus completed at time $p(F_0) = d(F_0)$. For each $i \in [n]$, job~$F_i^1$ completes after the filler jobs~$\{F_0\} \cup \{F_{i_0}^1 :i_0 \le i\}$ are completed, along with an additional $i$ number jobs. Thus, job~$F_i^1$ completes by time
$$
p(F_0) + i \cdot p(F_i^1) + i \cdot \max_{J} p(J) < 2i \cdot \alpha^3 + 0.1 \cdot \alpha^2 = d(F_i^1)\,.
$$
Next consider job~$F_i^m$ for $i \in [n]$. This job is completed after all jobs in~$\mathcal{J}^*_{\le m}$ are completed, all jobs in $\{J_{i_0, m},\neg J_{i_0, m} : i_0 \le i\}$ are completed, exactly one job in~$\{J_{i_0, m}, \neg J_{i_0, m}\}$ is completed for each $i_0 > m$ is completed, and the delimiter job $D_m$ and all filler jobs $\{F_{i_0}^m:i_0 \le i\}$ are completed. Thus, using \Cref{obs:deadline}, job~$F_i^m$ completes by time
\begin{align*}
p(\mathcal J^*_{\le m}) & + \sum_{i_0 = 1}^i\bigl(p(J_{i_0, m}) + p(\neg J_{i_0, m})\bigr) + \sum_{i_0=i+1}^n p(J_{i_0, m}) + p(D_m) + \sum_{i_0 = 1}^i p(F_{i_0}^m)\\
& \le \Delta_m - n \cdot  \alpha^3 - \alpha^2 - 2mt \cdot \alpha + (n + i ) \cdot \alpha^3  + mt\cdot \alpha + \alpha^2 + mt\cdot \alpha + i \cdot \alpha^3\\
& = \Delta_m+ 2i \cdot \alpha^3  < d(F_i^m)\,.\qedhere
\end{align*}
\end{proof}

\begin{lemma}
\label{lem:early-number-jobs}%
In the canonical schedule for a candidate set $\widetilde{\mathcal{J}}$, a number job~$J$ is early if and only if $J \in\widetilde{\mathcal{J}}$.
\end{lemma}

\begin{proof}
Let $i \in [n]$ and $j \in [m]$. We prove the lemma only for the number job $J^*_{i, j}$, as the proof for~$\neg J^*_{i, j}$, $J_{i, j}$, and $\neg J_{i, j}$ follows similar arguments and calculations. Assume first that $J^*_{i,j}\in\widetilde{\mathcal J}$. If $j=1$, then $J^*_{i,1}$ is scheduled after all jobs in $\{F_0\} \cup \{F_{i_0} :i_0 < i\}$, and exactly one of job in $\{J^*_{i_0,1},J^*_{i_0, 1}\}$ for each~$i_0 \le i$. Thus, $J^*_{i,1}$ is completed by time
\[
p(F_0) + \sum_{i_0 =1}^{i-1} p(F_{i_0}) + \sum_{i_0=1}^i p(\neg J^*_{i_0, 1})  = mt \cdot \alpha + (i-1) \cdot \alpha^3 + i \cdot \alpha^3 + \sum_{i_0 =1}^i a_i \le 2i \cdot \alpha^3 < d(J^*_{i,1})\,.
\]
Otherwise, if $j> 1$, job $J^*_{i,j}$ is scheduled after $\mathcal J^*_{\le j-1}$, $n+ i-1$ or $n+i$ jobs from $\mathcal{J}_{j-1}$ (depending on whether or not $J_{i,j-1} \in \widetilde{\mathcal J}$), $D_{j-1}$, and $i $ jobs from $\mathcal{J}^*_j$ (including $J^*_{i,j}$). The jobs from $\mathcal{J}_{j-1}$ each have a processing time at most $\alpha^3 + t \cdot \alpha$, while the jobs from $\mathcal{J}^*_j$ each have a processing time of at most $\alpha^3 + t$.  Consequently, using \Cref{obs:deadline}, $J^*_{i,j}$ is completed by time
\begin{align*}
p(\mathcal{J}^*_{\le j-1}) + &(n+i) \cdot (\alpha^3+ t \cdot \alpha )  + p(D_{j-1}) + i \cdot(\alpha^3+ t) \\
& = \Delta_{j-1} - n\cdot \alpha^3 - \alpha^2 - 2(j - 1)t \cdot \alpha + (n+2i)\cdot \alpha^3 + \alpha^2 + (n+i+j-1) t \cdot \alpha +it\\
&= \Delta_{j-1} + 2i \cdot \alpha^3 + (n+i-j+1) t \cdot \alpha + it < d(J^*_{i,j})\,. 
\end{align*}

Next consider the case where $J^*_{i,j} \notin \widetilde{\mathcal{J}}$. Then $J^*_{i,j}$ is scheduled after all jobs from~$\mathcal J_{\le j-1}$, as well as $n+i$ jobs from~$\mathcal{J}^*_{j}$ (including $J^*_{i,j}$), $D^*_{j}$, and $i$ jobs from~$\mathcal{J}_j$. The jobs in $\mathcal{J}^*_{j}$ and $\mathcal{J}_j$ each have a processing time of at least $\alpha^3$, and so consequently, using \Cref{eq:Jj-1_Dj-1}, job $J^*_{i,j}$ is completed not before time
\begin{align*}
p (\mathcal J_{\le j-1}) + & (n+i) \cdot \alpha^3 +p(D_j^*) + i \cdot \alpha^3 \\
& = \Delta_{j-1} + n \cdot \alpha^3 + (m-j+1) t \cdot \alpha + (n+2i) \cdot \alpha^3 + \alpha^2 + (m -j)t \cdot \alpha \\
& = \Delta_{j-1} + (2n + 2i) \cdot \alpha^3 + \alpha^2 + (2m-2j+1) t \cdot \alpha > d(J^*_{i,j}) \,.
\end{align*}
\end{proof}

\begin{lemma}\label{lem:early-delimiter-jobs}
In the canonical schedule for some candidate set~$\widetilde{\mathcal J}$, job $D_j^*$ is early if and only if ${\sum_{i : J^*_{i,j}\in \widetilde{\mathcal J}} a_i \ge jt}$.
Furthermore, $D_j$ is early if and only if $\sum_{i : J_{i,j}\in \widetilde{\mathcal J}} a_i \le jt$.
\end{lemma}

\begin{proof}
Let $j \in [m]$, and consider the delimiter job~$D_j^*$. This job is scheduled immediately after all jobs in~$\mathcal J_{\le j -1}$, followed by all jobs in $\{J^*_{i,j} : J^*_{i,j} \in \widetilde{\mathcal J}\} \cup  \{\neg J^*_{i,j} : J^*_{i,j} \notin \widetilde{\mathcal J}\}$. Thus, using \Cref{obs:ptildeJ}, the completion time of $D_j^*$ is given by
\begin{align*}
p(\mathcal J_{\le j -1}) &+ \sum_{i: J^*_{i,j} \in \widetilde{\mathcal J}} p(J^*_{i, j}) + \sum_{i : \neg J^*_{i,j}\in \widetilde{\mathcal J}} p(\neg J^*_{i,j})  + p(D_j^*) \\
& = \Delta_j^* - n \cdot \alpha^3 - \alpha^2 - (m-j)t \cdot \alpha - (m-j) t + n\cdot \alpha^3 + \sum_{i: \neg J^*_{i,j} \in \widetilde{\mathcal J}} a_i + \alpha^2 + (m-j)t\cdot \alpha\\
& = d(D_j^*) + \sum_{i:\neg  J^*_{i,j} \in \widetilde{\mathcal J}} a_i  - (m-j) t\,.
\end{align*}
Consequently, $D_j^*$ is early if and only if $\sum_{i:\neg J^*_{i,j} \in \widetilde{\mathcal J}} a_i  \le (m-j)t$. Since $\sum_{i=1}^n a_i = m t$, this is equivalent to $\sum_{i :J^*_{i,j}\in \widetilde{\mathcal J}} a_i  \ge jt$.

Next consider job $D_j$. This job is scheduled immediately after all all jobs in~$\mathcal{J}^*_{\le j}$, followed by all jobs in $\{J_{i,j} : J_{i,j} \in \widetilde{\mathcal J}\} \cup \{\neg J_{i,j} : \neg J_{i,j} \in \widetilde{\mathcal J}\}$, and
$D_j$. Using \Cref{obs:deadline}, the total processing time of these jobs is
\begin{align*}
p(\mathcal{J}^*_{\le j}) &+ \sum_{i: J_{i,j} \in \widetilde{\mathcal J}} p(J_{i, j}) + \sum_{i : \neg J_{i,j}\in\widetilde{ \mathcal J}} p(\neg J_{i,j})  + p(D_j)\\
& = \Delta_j - n \cdot \alpha^3 - \alpha^2 - 2jt \cdot \alpha + n\cdot \alpha^3 + \sum_{i :J_{i,j}\in \widetilde{\mathcal J}} a_i \cdot \alpha + \alpha^2 + jt\cdot \alpha\\
& = d(D_j) + \sum_{i : J_{i,j}\in\widetilde{ \mathcal J }} a_i \cdot \alpha  - j t \cdot \alpha\,.
\end{align*}
Consequently, $D_j$ is early if and only if $\sum_{i:J_{i,j} \in \widetilde{\mathcal J}} a_i  \le jt$.
\end{proof}

As discussed in Section~\ref{sec:construction}, we wish to encode the situation where an element $a_i$ of the \textsc{3-Partition} instance is selected to the one of the sets $S_1,\ldots,S_j$ by scheduling~$J_{i,j}$ and $J^*_{i,j}$ early, while the situation where $a_i \notin S_1 \cup \ldots \cup S_j$ should be encoded by scheduling~$\neg J_{i,j}$ and~$\neg J^*_{i,j}$ early. \Cref{lem:early-number-jobs} ensures that in any canonical schedule for some candidate set $\widetilde{\mathcal J}$, exactly one of~$\{J^*_{i,j},\neg J^*_{i,j}\}$ and one of $\{J_{i,j},\neg J_{i,j}\}$ are early for each $i \in [n]$ and $j \in [m]$, since these are contained in $\widetilde{\mathcal{J}}$ by definition. Note that this still allows, e.g., for both~$J^*_{i,j}$ and $\neg J_{i,j}$ to be early, or both $J^*_{i,j}$ and $\neg J^*_{i,j+1}$ to be early. The following lemma shows that such inconsistencies in the way we wish to encode solutions with candidate sets is avoided once considering schedules with maximum tardiness at most~$\ell$. 

\begin{lemma}\label{lem:candidate->feasible}
    Let $\widetilde{\mathcal J}$ be a candidate set.
    Then the canonical schedule for $\widetilde{\mathcal J}$ is feasible if and only if there is no $i \in[n]$ and $j\in [m-1]$ such that $\neg J^*_{i,j+1} ,J_{i,j}\in \widetilde{\mathcal{J}} $ and there is no $i\in [n]$ and $j\in [m]$ such that $\neg J_{i,j} , J^*_{i,j}\in \widetilde{\mathcal J}$.
\end{lemma}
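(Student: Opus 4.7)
The plan is to track the cumulative processing $P(\tau) := \sum_{J: \tilde d(J) \le \tau} p(J)$ in the canonical schedule and compare it against modified due dates. Since jobs are processed in non-decreasing $\tilde d$ order with no idle time, feasibility (max tardiness $\le \ell$) is equivalent to requiring $P(\tilde d(J)) \le \tilde d(J)$ for every \emph{unselected} job $J$ (whose modified due date already equals $d(J)+\ell$); for selected jobs the analogous constraint $P(\tilde d(J)) \le \tilde d(J) + \ell$ has $\ell$~slack and turns out to be automatic. Set $S^*_j := \{i : J^*_{i,j} \in \widetilde{\mathcal J}\}$ and $S_j := \{i : J_{i,j} \in \widetilde{\mathcal J}\}$. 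Condition~(b) failing at $(i,j)$ is equivalent to $i \in S^*_j \setminus S_j$, and condition~(a) failing at $(i,j)$ is equivalent to $i \in S_j \setminus S^*_{j+1}$, so the joint absence of (a) and (b) is exactly the monotone chain $S^*_1 \subseteq S_1 \subseteq \cdots \subseteq S^*_m \subseteq S_m$.

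For the $(\Rightarrow)$ direction I argue by contrapositive. If (b) fails at $(i,j)$, then $J^*_{i,j}$ and $\neg J_{i,j}$ are selected while $\neg J^*_{i,j}$ and $J_{i,j}$ are unselected. Take the cutoff $\tau_1 := d(\neg J^*_{i,j}) + \ell$, the modified due date of the unselected $\neg J^*_{i,j}$ (which lies in the second half of period~$j$). Counting the jobs contributing to $P(\tau_1)$ in the second half: for each $i' < i$ exactly two of the four slots $\{\neg J_{i',j}, \neg J^*_{i',j}, J_{i',j}, J^*_{i',j}\}$ are populated (the selected one of $\{\neg J_{i',j}, J_{i',j}\}$ together with the unselected one of $\{\neg J^*_{i',j}, J^*_{i',j}\}$), whereas for $i' = i$ \emph{both} the selected $\neg J_{i,j}$ and the unselected $\neg J^*_{i,j}$ contribute. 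This yields one extra $\alpha^3$ term in $P(\tau_1)$ compared to any monotone arrangement, giving $P(\tau_1) - \tau_1 \ge \alpha^3 - O(\alpha^2) > 0$ since $\alpha = 5n^2 t$ is chosen so that $\alpha^3$ dominates. The argument for (a) failing at $(i,j)$ is symmetric, taking the cutoff $\tau_3 := d(\neg J_{i,j}) + \ell$ at the unselected $\neg J_{i,j}$ pushed into the first half of period~$j+1$.

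For the $(\Leftarrow)$ direction assume the monotone chain. Each $(i,j)$ then falls into exactly one of three patterns: $i \in S^*_j \cap S_j$, $i \in \overline{S^*_j} \cap S_j$, or $i \in \overline{S^*_j} \cap \overline{S_j}$, and in each case exactly two slots are occupied in each half of period~$j$. Summing processing times together with the delimiters, the first half of period~$j$ accumulates $2n\alpha^3 + \alpha^2 + [(2m-j)t - A_{j-1}]\alpha + A_j^*$ and the second half accumulates $2n\alpha^3 + \alpha^2 + [jt + A_j]\alpha + (mt - A_j^*)$, where $A_j := \sum_{i \in S_j} a_i$, $A_j^* := \sum_{i \in \overline{S^*_j}} a_i$, and $A_0 := 0$. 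I would then verify $P(\tau) \le \tau$ at every unselected job's modified due date by direct substitution: the $\alpha^3$ coefficients cancel exactly by the slot-count alignment in the monotone case, and all residual terms are dominated by the $0.2\alpha^2$ slack that is present in every modified due date, since $0.2\alpha^2 \gg mt\alpha + mt$ when $\alpha = 5n^2 t$.

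The main obstacle is the careful bookkeeping: because unselected jobs have modified due dates pushed by~$\ell$, they interleave between halves (unselected $\mathcal J^*_j$ jobs land in the second half of period~$j$, and unselected $\mathcal J_{j-1}$ jobs land in the first half of period~$j$), so identifying exactly which jobs lie before each cutoff requires care. The central structural insight is that the $\alpha^3$ coefficient of $P(\tau)$ simply counts the jobs before~$\tau$, and this count jumps by exactly one precisely when condition (a) or (b) is violated, pinpointing the source of the infeasibility.
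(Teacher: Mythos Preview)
Your proposal is correct and follows essentially the same approach as the paper: both arguments compute, for each job, the total processing time of everything scheduled no later than it in the canonical order (your $P(\tau)$ is exactly this), observe that the $\alpha^3$-coefficient is decisive, and pinpoint that one extra large job lands before the cutoff $d(\neg J^*_{i,j})+\ell$ (resp.\ $d(\neg J_{i,j})+\ell$) precisely when condition~(b) (resp.~(a)) fails---which is the paper's ``$i$ versus $i-1$ jobs from $\mathcal J_j$'' distinction. Your restatement of the two conditions as the monotone chain $S^*_1\subseteq S_1\subseteq\cdots\subseteq S_m$ is a tidy repackaging, but the underlying case split and arithmetic are identical to the paper's job-by-job verification.
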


\begin{proof}
By \Cref{lem:early-filler-jobs}, all filler jobs are early and thus have tardiness at most $\ell$, independent on the candidate set $\mathcal{\widetilde{J}}$. We continue with the number jobs $J^*_{i, j}$, $\neg J^*_{i,j}$, $J_{i,j}$, and $\neg J_{i,j}$.
    Note that by \Cref{lem:early-number-jobs}, for each number job~$J$ we only need to consider the case that $J \notin \mathcal{\widetilde{J}}$ (as $J$ is early otherwise).
    
    We start with showing that job~$J^*_{i,j}$ for $i\in [n]$ and $j \in [m]$ always has tardiness at most $\ell$.
    By our previous observation, it suffices to consider the case $J^*_{i,j} \notin \mathcal{\widetilde{J}}$.
    Before $J^*_{i,j}$, all jobs from~$\mathcal J_{\le j-1}$ as well as $n+i$ jobs from~$\mathcal{J}^*_{j}$ (including $J^*_{i,j}$, and each with processing time at most $ \alpha^3 + t$), $D^*_{j}$, and $i$ jobs (each with processing time at most $p (J) \le \alpha^3 + t  \cdot \alpha$) from~$\mathcal{J}_j$ are scheduled.
    Using \Cref{eq:Jj-1_Dj-1}, $J^*_{i,j}$ is completed by time
    \begin{align*}
        p (\mathcal J_{\le j-1}) + & (n+i) \cdot (\alpha^3 + t) + \alpha^2 + (m-j)t \cdot \alpha + i \cdot (\alpha^3 + t\cdot \alpha)\\
        & = \Delta_{j-1} + n \cdot \alpha^3 + (m-j+1)t \cdot \alpha + (n+2i) \cdot \alpha^3  + \alpha^2 + (m-j+ i)t \cdot \alpha + (n+i)t\\
        & = \Delta_{j-1} + (2n +2i) \cdot \alpha^3 + \alpha^2 + (2m -2j + i + 1) t \cdot \alpha + (n + i) t\\
        & < \Delta_{j-1} + 2(n+i) \cdot \alpha^3 + \alpha^2 + 0.1 \cdot \alpha^2  = d(J^*_{i,j}) + 2n \cdot \alpha^3 + \alpha^2 < d(J^*_{i,j}) + \ell\,.
    \end{align*}
    
    We continue by characterizing when job~$\neg J^*_{i,j}$ for $i\in[n]$ and $j \in [m]$ has tardiness at most $\ell$.
    Again, we only consider the case that $\neg J^*_{i,j} \notin \mathcal{\widetilde{J}}$.
    In this case, we will show that job $\neg J^*_{i,j}$ has tardiness at most~$\ell$ if and only if $J_{i,j} \in \widetilde{ \mathcal{J}}$.
    Before $\neg J^*_{i,j}$, all jobs from~$\mathcal J_{\le j-1}$ as well as $n + i$ jobs from $\mathcal{J}^*_{j}$ (including~$\neg J^*_{i,j}$ itself), $D_{j}^*$, and $i$ or $i-1$ jobs from $\mathcal{J}_j$ (depending on whether $J_{i,j} \in \widetilde{\mathcal J}$) are scheduled.
    If $J_{i,j} \in \widetilde{\mathcal J}$, then there are only $i-1$ jobs from $\mathcal{J}_j$ before $\neg J^*_{i,j}$ and consequently,
    using analogous calculations as for $J^*_{i,j}$ and \Cref{eq:Jj-1_Dj-1}
    (note that the only difference to $J^*_{i,j}$ is that there are now $i-1$ instead of $i $ jobs from $\mathcal{J}_j$), job~$\neg J^*_{i,j}$ is completed by time 
    \begin{align*}
        p &(\mathcal J_{\le j-1})  + (n+i) \cdot (\alpha^3 + t) + \alpha^2 + (m-j)t \cdot \alpha + (i -1) \cdot (\alpha^3 + t\cdot \alpha)\\
        & = \Delta_{j-1} + n \cdot \alpha^3 + (m-j+1)t \cdot \alpha + (n + 2i - 1) \cdot \alpha^3 + \alpha^2 + (m-j + i-1) t \cdot \alpha + (n +i) t\\
        & = \Delta_{j-1} + (2n + 2i -1) \cdot \alpha^3 + \alpha^2 + (2m -2j + i) t \cdot \alpha + (n+i) t\\
        & < \Delta_{j-1} + 2(n+i-1) \cdot \alpha^3 + \alpha^2 + 0.1 \cdot \alpha^2  = d(\neg J^*_{i,j}) + 2n \cdot \alpha^3 + \alpha^2 < d(\neg J^*_{i,j}) + \ell\,.
    \end{align*}
    If, however, $\neg J_{i,j} \in \widetilde{\mathcal J}$, then there are $i$ jobs from $\mathcal{J}_j$ before $\neg J^*_{i,j}$ and thus $\neg J^*_{i,j}$ is not completed before (using \Cref{eq:Jj-1_Dj-1} and analogous calculations as for $J^*_{i,j}$)
    \begin{align*}
        p (\mathcal J_{\le j-1}) + & (n+i) \cdot \alpha^3 + \alpha^2 + (m-j)t \cdot \alpha + i \cdot \alpha^3\\
        & = \Delta_{j-1} + n \cdot \alpha^3 + (m-j+1)t \cdot \alpha + (n+i) \cdot \alpha^3 + \alpha^2 + (m-j)t \cdot \alpha + i \cdot \alpha^3\\
        & = \Delta_{j-1} + (2n + i) \cdot \alpha^3 + \alpha^2 + (2m -2j + 1)t \cdot \alpha\\
        & = d(\neg J^*_{i,j}) + (2n + 1) \cdot \alpha^3 + 0.9 \cdot \alpha^2 + (2m-2j + 1)t \cdot \alpha
         > d  (\neg J^*_{i,j}) + \ell\,.
    \end{align*}
    Consequently, $\neg J^*_{i,j } $ is has tardiness larger than $\ell$ if and only if $ J^*_{i,j} \in \widetilde{\mathcal{J}}$ and~$\neg J_{i, j} \in \mathcal{\widetilde{J}}$.

    The arguments for $J_{i, j}$ and $\neg J_{i,j}$ are analogous:
    To show that $J_{i,j}$ has tardiness at most $\ell$, we again only consider the case $J_{i,j} \notin \mathcal{\widetilde{J}}$.
    Before $J_{i,j}$, all jobs from $\mathcal{J}_{\le j}^*$ as well as $n+i$ jobs from~$\mathcal{J}_{j}$ (including $J_{i,j}$), $D_{j}$, and $i$ jobs from~$\mathcal{J}^*_{j+1}$ are scheduled.
    Thus, using \Cref{eq:tildeJ_Delta*}, $J_{i, j}$ is completed by time
    \begin{align*}
        p(\mathcal{J}_{\le j}^*)  + &(n +i) \cdot (\alpha^3 + t \cdot \alpha) + \alpha^2 + jt \cdot \alpha + i \cdot (\alpha^3  + t)\\
        & = \Delta_j^* + n \cdot \alpha^3 + jt + (n + 2i) \cdot \alpha^3 + \alpha^2 + (n+i+j) \cdot \alpha + it\\
        & < \Delta_j^* + 2(n +i) \cdot \alpha^3 + \alpha^2 +0.1 \cdot \alpha^2 < d(J_{i,j}) + \ell.
    \end{align*}
    
    We continue with job $\neg J_{i,j}$, again only considering the case that $\neg J_{i, j} \notin \widetilde{\mathcal J}$.
    Note that there are $i-1$ jobs from $\mathcal{J}^*_{j+1}$ scheduled before~$\neg J_{i,j}$ if $J^*_{i,j+1} \in \widetilde{\mathcal J}$, and $i $ jobs from $\mathcal{J}^*_{j+1}$ otherwise.
    Again using \Cref{eq:tildeJ_Delta*}, we get that if $ J^*_{i, j+1} \in \mathcal{\widetilde{J}}$, then $\neg J_{i,j}$ is completed not before time (using \Cref{eq:tildeJ_Delta*})
    \begin{align*}
        p(\mathcal{J}^*_{\le j}) & + (n+i) \cdot \alpha^3 + \alpha^2  + (m-j)t \cdot \alpha + i \cdot \alpha^3\\
        & = \Delta_j^* + n \cdot \alpha^3 + jt + (n + 2i) \cdot \alpha^3 + \alpha^2 + (m-j)t \cdot \alpha\\
        & > \Delta_j^* + (2n + 2i) \cdot \alpha^3 > d(\neg J_{i,j}) + \ell
    \end{align*}
    and thus $\neg J_{i,j}$ is has tardiness at most~$\ell$.
    If, on the other hand we have $\neg J^*_{i,j+1} \in \mathcal{\widetilde{J}}$, then $\neg J_{i,j}$ is completed by time (using \Cref{eq:tildeJ_Delta*})
    \begin{align*}
        p(\mathcal{J}^*_{\le j}) & + (n+i) \cdot (\alpha^3 + t \cdot \alpha) + \alpha^2  + (m-j)t \cdot \alpha + (i - 1) \cdot (\alpha^3 + t)\\
        & = \Delta_j^* + n \cdot \alpha^3 + jt + (n + 2i-1) \cdot \alpha^3 + \alpha^2 + (n + i + m-j)t \cdot \alpha + (i-1)t\\
        & < \Delta_j^* + (2n + 2i-1) \cdot \alpha^3 + 1.1 \cdot \alpha^2 < d(\neg J_{i,j}) + \ell
    \end{align*}
    and thus $\neg J_{i,j}$ has tardiness at most $\ell$.
    Combining the two statements, $\neg J^*_{i,j}$ has tardiness larger than $\ell$ if and only if $ J^*_{i,j} \in \mathcal{\widetilde{J}}$ and $\neg J^*_{i,j+1} \in \mathcal{\widetilde{J}}$.

    We continue with job~$D_j^*$.
    This job is scheduled after $\mathcal J_{\le j-1}$ and $n $ jobs from~$\mathcal{J}^*_j$.
    Thus, using \Cref{obs:ptildeJ}, it is completed by time
    \begin{align*}
        p(\mathcal{J}^*_j) + & n \cdot (\alpha^3 + t) + \alpha^2 + (m-j)t\\
        & = \Delta_j^* - n \cdot \alpha^3 - \alpha^2 -(m-j)t \cdot \alpha - (m-j) t +n \cdot (\alpha^3 + t) +  \alpha^2 + (m-j)t \cdot \alpha\\
        & = \Delta_j^* + (n+j-m) \cdot t < d(D_j^*) + \ell.
    \end{align*}
    Similarly, $D_j$ is scheduled after $\mathcal{J}_{\le j}^*$ and $n$ jobs from $\mathcal{J}_j$ and thus (using \Cref{obs:deadline}) completed by time
    \begin{align*}
        p(\mathcal{J}_j) + &n \cdot (\alpha^3 + t \cdot \alpha) + \alpha^2 + jt\\
        & = \Delta_j -n \cdot \alpha^3 - \alpha^2 - 2jt \cdot \alpha + n \cdot (\alpha^3 + t \cdot \alpha) + \alpha^2 + j t \cdot \alpha\\
        & = \Delta_j + (n-j)t \cdot \alpha < d(D_j ) + \ell.\qedhere
    \end{align*}
\end{proof}

\subsection{Correctness}
\label{sec:correctness}

In order to show the correctness of the reduction described in this section, the first (and hardest) part is to show that we can restrict ourselves to considering canonical schedules.
This will be done in the next lemma:
\begin{restatable}[]{lemma}{LemCandidate}\label{lem:only-candidate}
    If there is a feasible schedule with $k$ tardy jobs, then there is a feasible schedule with at most $k$ jobs that is the canonical schedule for a candidate set.
\end{restatable}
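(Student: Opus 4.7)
My plan is to start with any optimal feasible schedule and show that, after a due-date-based normalization, its set of early jobs is itself a candidate set.

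First, I will take an optimal feasible schedule $\sigma^*$ and let $\mathcal{E}$ denote the set of jobs that are early in $\sigma^*$. Because every job in $\mathcal{E}$ is early in the feasible schedule $\sigma^*$, Observation~\ref{obs:almost-edd} lets me replace $\sigma^*$ by the canonical schedule for $\mathcal{E}$, which is still feasible and still has every job in $\mathcal{E}$ early. After this normalization, it suffices to prove that $\mathcal{E}$ is a candidate set.

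The key structural claim I intend to establish is that, in any feasible schedule, at most one of each pair $\{J^*_{i,j}, \neg J^*_{i,j}\}$ and at most one of each pair $\{J_{i,j}, \neg J_{i,j}\}$ can be early. The proof of this claim will be a processing-time accounting argument in the spirit of \Cref{lem:candidate->feasible}: if both jobs of some pair $\{J_{i,j}, \neg J_{i,j}\}$ were early, then in the canonical schedule for $\mathcal{E}$ the later job $J_{i,j}$ would be preceded by all of $\mathcal{J}_{\le j}^*$ (forced there by Observation~\ref{eq:negJ}) together with both partners and the earlier $\alpha^3$-sized jobs of $\mathcal{J}_j$, and using the identity in Observation~\ref{obs:deadline} the resulting completion time would exceed $d(J_{i,j})$ by roughly $\alpha^3$, contradicting earliness; the pair $\{J^*_{i,j}, \neg J^*_{i,j}\}$ is handled symmetrically using Observation~\ref{obs:ptildeJ}. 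The choice $\alpha = 5n^2 t$ ensures that this $\alpha^3$ surplus dominates every lower-order correction, so the contradiction is robust.

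Once the structural claim is in hand, the finish is a counting argument. There are $2mn$ pairs of number jobs, $2m$ delimiter jobs, and $2n+1$ filler jobs, so $|\mathcal{E}| \le 2mn + 2m + 2n + 1$ in every feasible schedule, i.e.\ every feasible schedule has at least $k = 2mn$ tardy jobs. Since by hypothesis some feasible schedule attains $k$ tardy jobs, the optimum equals $k$ and the counting bound is tight for~$\sigma^*$; tightness forces every filler and delimiter into~$\mathcal{E}$ and exactly one job from each pair into~$\mathcal{E}$. Hence $\mathcal{E}$ is a candidate set and $\sigma^*$ is its canonical schedule. The main obstacle I expect is the arithmetic of the structural claim; everything else follows once that bookkeeping is carried out.
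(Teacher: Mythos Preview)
Your key structural claim and its justification do not go through. You assert that if both $J_{i,j}$ and $\neg J_{i,j}$ are early then, in the canonical schedule for $\mathcal{E}$, job $J_{i,j}$ is preceded by all of $\mathcal{J}^*_{\le j}$, invoking \Cref{eq:negJ}. But \Cref{eq:negJ} only forces the jobs of $\mathcal{J}^*_{\le j}\setminus\{D^*_j\}$ to finish by $\Delta_j$, not by $d(J_{i,j})<\Delta_j$. In the canonical schedule a job $J\notin\mathcal{E}$ precedes $J_{i,j}$ only when $d(J)+\ell<d(J_{i,j})$, and one checks that $d(J^*_{i',j})+\ell>d(J_{i,j})$ for every $i'\ge i$; so all those jobs (roughly $n-i$ many, each of size $\ge\alpha^3$) may sit \emph{after} $J_{i,j}$. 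Likewise the tardy members of $\mathcal{J}_j$ with index below $i$ are not forced before $J_{i,j}$. The processing-time lower bound you can actually establish is about $p(\mathcal{J}_{\le j-1})+(2i+1)\alpha^3$, which by \Cref{obs:ptildeJ} equals $\Delta^*_j+(2i+1-n)\alpha^3-\alpha^2+O(t\alpha)$; for $n\ge 2$ this is well below $d(J_{i,j})=\Delta^*_j+2i\alpha^3+0.1\alpha^2$, so the promised ``$\alpha^3$ surplus'' never appears. The symmetric argument for $\{J^*_{i,j},\neg J^*_{i,j}\}$ fails in the same way.

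The paper's proof is accordingly much more global. A volume argument (\Cref{lem:bound-k}, \Cref{claim:bound-k}) first shows that at most $n$ jobs of each block $\mathcal{J}^*_j$ can complete by $\Delta^*_j$ (and similarly for $\mathcal{J}_j$ and $\Delta_j$), so a schedule with $k$ tardy jobs has exactly $n$ tardy jobs per block and all delimiters and fillers early --- the latter requiring a separate swap argument between $F^1_i$ and its identical twin $J^*_{i,1}$, which your counting also glosses over. The one-per-pair property is then obtained by a \emph{propagation} argument and backward induction on~$i$: if both $J^*_{n,j}$ and $\neg J^*_{n,j}$ are early, the only jobs that can occupy the last two $\alpha^3$-slots before $\Delta_j$ are $J_{n,j}$ and $\neg J_{n,j}$, so that pair is forced early too; this pushes forward in $j$ until the fillers $F^m_i$ at $j=m$ finally give the contradiction, and the ``neither early'' case propagates backward to $j=1$ symmetrically. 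This global chain is precisely the missing idea; a purely local pairwise contradiction is not available in this construction.
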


Before proving \Cref{lem:only-candidate} in \Cref{sec:lem5proof}, we first show how to derive the main result in this section, namely the strong NP-completeness of $1|\sum U_j \le k, T_{\max} \le \ell|$, from \Cref{lem:only-candidate}:

\begin{proof}[Proof of \Cref{thm:strong}]
    $1|T_{\max} \le \ell, U_j \le k|$ is clearly contained in NP as a feasible schedule with at most $k$ tardy jobs is a certificate.
    It remains to show that the problem is NP-hard.
    We will do so via the reduction from \textsc{3-Partition} as described throughout this section.
    The reduction clearly runs in polynomial time, so it remains to show its correctness.
    
    We start with the forward direction.
    So let $(S_1, \ldots, S_m)$ be a solution to the \textsc{3-Partition} instance.
    Let 
    \begin{align*}
        \widetilde{\mathcal{J}} := & \{J^*_{i,j}, J_{i,j} : a_i \in S_1 \cup \ldots \cup S_j\} \cup \{ \neg J^*_{i,j},\neg J_{i,j}:a_i \notin S_1 \cup \ldots \cup S_j\} \\
        & \cup \{F_0, F_i^1, F_i^m: i \in [n]\} \cup \{D_j, D_j^*: j \in [m]\},
    \end{align*}
    and let $\sigma$ be the canonical schedule for $\widetilde{\mathcal{J}}$.
    By \Cref{lem:candidate->feasible}, $\sigma $ is feasible.
    By \Cref{lem:early-filler-jobs}, jobs~$F_0$, $F_i^1$, and $F_i^m$ for $i\in[n]$ are early.
    By \Cref{lem:early-number-jobs}, we have $2nm$ early and $2nm $ tardy number jobs.
    By \Cref{lem:early-delimiter-jobs} and since we have a solution to \textsc{3-Partition}, all delimiter jobs are early.
    Overall, we have $k = 2mn$ tardy jobs, finishing the forward direction.

    We continue with the backward direction.
    So let $\sigma$ be a feasible schedule with at most $k$ tardy jobs.
    By \Cref{lem:only-candidate}, we may assume that $\sigma$ is the canonical schedule for some candidate set $\widetilde{\mathcal J}$.
    By \Cref{lem:early-number-jobs}, there are $2nm$ tardy number jobs.
    This implies that $2nm$ number jobs and all non-number jobs are early.
    Let $I_j^* := \{i \in [n]: J^*_{i,j}\in \widetilde{\mathcal{J}}\}$ and $I_j := \{i\in[n] : J_{i,j} \in \widetilde{\mathcal J}\}$.
    Since $\sigma$ is feasible, \Cref{lem:candidate->feasible} implies that whenever $J_{i,j}^* \in \mathcal{\widetilde{J}}$, then also $J_{i,j} \in \mathcal{\widetilde{J}}$ (as otherwise $\neg J_{i,j}, J_{i,j}^* \in \widetilde{\mathcal J}$, contradicting the feasibility of $\sigma $ by \Cref{lem:candidate->feasible}).
    In other words, we have $I_j^* \subseteq I_j$.
    Similarly, the feasibility of $\sigma$ together with \Cref{lem:candidate->feasible} implies that whenever $J_{i,j} \in \widetilde{\mathcal{J}}$, then also $J_{i,j + 1}^* \in \widetilde{\mathcal{J}}$, i.e., $I_j \subseteq I_{j+1}^*$.
    Because every delimiter job is early, \Cref{lem:early-delimiter-jobs} implies that $\sum_{i \in I_j^*} a_i \ge jt$ and $\sum_{i\in I_j} a_i \le jt$.
    Combining the above two results, we have
    \[
        \sum_{i \in I_j^*} a_i  \ge jt \ge  \sum_{i \in I_j} a_i \,.
    \]
    Since $I_j^* \subseteq I_j$, this implies that $I_j^* = I_j$ and that the above inequality holds with equality.
    We claim that $(S_1, \ldots, S_m) := (I_1^*, I_2^*\setminus I_1^*, \ldots, I_m^* \setminus I_{m-1}^*)$ is a solution to the \textsc{3-Partition} instance.
    First note that $S_{j_0}$ and $S_{j_1}$ are disjoint for $j_0 \neq j_1$ as $I_1^* = I_1 \subseteq I_2^* = I_2 \subseteq I_3^* = I_3 \subseteq\ldots \subseteq I_m^* \subseteq I_m$.
    Furthermore, we have
    \[
        \sum_{i \in S_j } a_i = \sum_{i\in I_j^*} a_i - \sum_{i \in I_{j-1}^*} a_i = jt- (j-1)t = t\,.\qedhere
    \]
\end{proof}

\subsection{Proof of Lemma~\ref{lem:only-candidate}}
\label{sec:lem5proof}

We conclude this section with the proof of \Cref{lem:only-candidate}. That is, we show that if there exists a feasible schedule with at most $k$ tardy jobs then there must be such a schedule which is the canonical schedule of some candidate set. Recall that \Cref{obs:almost-edd} implies that there always is a feasible schedule with a minimum number of tardy jobs which is the canonical schedule of some subset $\mathcal{\widetilde{J}}$ of jobs. We show that if the number of tardy jobs in the feasible canonical schedule of some subset~$\mathcal{\widetilde{J}}$ of jobs is at most $k$, then $\mathcal{\widetilde{J}}$ must be a candidate set. In order to so, we start with very coarse statements on feasible canonical schedules, and then refine them step by step.
We start with the following very coarse structure:
\begin{lemma}\label{lem:first-restriction}
Any feasible canonical schedule~$\sigma$ satisfies that
\begin{itemize}
\item for every $j \in [m]$, jobs from~$\mathcal{J}^*_j$ and $D_j^*$ are scheduled after every job from~$\mathcal{J}^*_{\le j - 1}$ and before any job not contained in $\mathcal{J}_{\le j}$, and
\item for every $j \ge 2$, jobs from $\mathcal{J}_j$ and $D_j$ are scheduled after every job from~$\mathcal J_{\le j-1}$ and before every job not contained in~$\mathcal J^*_{\le j+1}$.
\end{itemize}
\end{lemma}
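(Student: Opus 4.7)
My plan is to work entirely in terms of the \emph{modified} due dates. Recall that the canonical schedule for $\widetilde{\mathcal{J}}$ sorts jobs in non-decreasing order of $\widetilde{d}(J)$, which equals $d(J)$ if $J \in \widetilde{\mathcal{J}}$ and $d(J) + \ell$ otherwise (ties between $F^1_i$ and $J^*_{i,1}$ being broken as stipulated). Thus, to prove that every job of one class precedes every job of the next, it suffices to show that the \emph{maximum} modified due date across the first class is strictly below the \emph{minimum} modified due date across the second.

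For the ``after'' direction of bullet one, the former is attained at $\widetilde{d}(J_1) \le \Delta^*_{j-1} + \ell$ (worst case: $J_1 = D^*_{j-1}$ is late), while the latter is $\widetilde{d}(J_2) \ge d(\neg J^*_{1,j}) = \Delta_{j-1} + \alpha^3 + 0.1\cdot \alpha^2$ (worst case: $\neg J^*_{1,j}$ is early). Substituting $\Delta_{j-1} - \Delta^*_{j-1} = 2n\cdot \alpha^3 + \alpha^2 + 2(j-1)t\cdot \alpha + (j-1)t$ and $\ell = 2n\cdot \alpha^3 + 1.1\cdot \alpha^2$, the gap collapses to $\alpha^3 + 2(j-1)t\cdot \alpha + (j-1)t > 0$, with the dominant $\alpha^3$ surplus supplied by the minimum due-date offset of the next period. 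The ``before'' direction is symmetric: Observation~\ref{eq:negJ} bounds $\widetilde{d}(J_2) \le \Delta^*_j + \ell$ for $J_2 \in \mathcal{J}^*_j \cup \{D^*_j\}$, and any $J_3 \notin \mathcal{J}_{\le j}$ satisfies $d(J_3) \ge \min(d(\neg J^*_{1,j+1}), d(F^m_1)) = \Delta_j + \alpha^3 + 0.1\cdot \alpha^2$; the analogous computation using $\Delta_j - \Delta^*_j$ closes the case. Bullet two is proved by the same template, with Observation~\ref{eq:negtildeJ} replacing Observation~\ref{eq:negJ} and the gap $\Delta^*_{j+1} - \Delta_j = 2n\cdot \alpha^3 + \alpha^2 + (2m - 2j - 1)t\cdot \alpha + (m - j - 1)t$ providing the required slack (the restriction $j \le m-1$ keeps this nonnegative).

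The hardest part will be the boundary situations rather than the central arithmetic. First, the coincidence $d(F^1_i) = d(J^*_{i,1})$ is resolved by the stipulated tie-breaking convention, which is precisely the order the lemma demands. Second, for $j = m$ in bullet two the set $\mathcal{J}^*_{\le m+1}$ is not formally defined; the most natural reading extends it to contain the trailing $F^m_i$, which renders the ``before'' claim vacuous. Under any stricter reading one would need the auxiliary observation that $D_m$ is early in every feasible canonical schedule, since otherwise $\widetilde{d}(D_m) = \Delta_m + \ell > \Delta_m + 2\alpha^3 + 0.1\cdot \alpha^2 = d(F^m_1)$ would force $F^m_1$ to precede $D_m$ and violate the claim; this reduces by \Cref{lem:early-filler-jobs} to a load-counting argument in the spirit of \Cref{obs:deadline}.
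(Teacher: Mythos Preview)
Your proposal is correct and follows essentially the same route as the paper: both proofs compare the maximum modified due date of the earlier class to the minimum (ordinary) due date of the later class, and the arithmetic you sketch matches the paper's computations line for line. One cosmetic remark: the bounds $\widetilde d(J_2)\le\Delta^*_j+\ell$ and $\widetilde d(J_2)\le\Delta_j+\ell$ follow directly from $d(D^*_j)=\Delta^*_j$ and $d(D_j)=\Delta_j$ being the largest due dates in their respective sets, not from Observations~\ref{eq:negJ} and~\ref{eq:negtildeJ} (which exclude the delimiter jobs and give a different, tighter statement), so you should drop those citations; your discussion of the $j=m$ boundary and the $F^1_i$/$J^*_{i,1}$ tie is more explicit than the paper's, which simply leaves these implicit.
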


\begin{proof}
Suppose $\sigma$ is canonical for some subset of jobs $\mathcal{\widetilde{J}}$, and let $\widetilde{d}(J)$ denote the modified due date of any job $J$ as defined in Definition~\ref{def:canonical}. We start with the first bullet point. Note that the maximal due date in $\mathcal{J}^*_{\le j-1}$ is~$d(D^*_{j-1})$. Thus, for each job~$J \in \mathcal{J}^*_{\le j-1}$, we have
\begin{align*}
 \widetilde{d}(J)  & \le d(D^*_{j-1}) + \ell \\
& = \Delta_{j-1}^* + 2n \cdot \alpha^3 + 1.1 \cdot \alpha^2 \\
& = \Delta_{j-2} + \delta_{j-1}^* + 2n \cdot \alpha^3 + 1.1 \cdot \alpha^2\\
& = \Delta_{j-2} + 2n \cdot \alpha^3 + \alpha^2 + (2m-2j+3) t \cdot \alpha + (m-j+1) t+ 2n \cdot \alpha^3 + 1.1 \cdot \alpha^2  \\
& = \Delta_{j-2} + 4n \cdot \alpha^3 + 2.1 \cdot \alpha^2 + (2m -2j + 3) t  \cdot \alpha + (m -j+1) t\\
& = \Delta_{j-1} + 0.1 \cdot \alpha^2 + (2 - 2j) t \cdot \alpha + (1-j) \cdot t < d (\neg J^*_{i,j}) < d(J^*_{i,j}) < d (D_j^*)\,.
\end{align*}
Furthermore, for any $J \notin \mathcal{J}_{\le j}$, we have
$$
\max\{\widetilde{d}(\neg J^*_{i,j}), \widetilde{d}(J^*_{i,j}), \widetilde{d}(D_j^*)\}  \le d (D^*_{j}) + \ell = \Delta_j^* + \ell = \Delta_{j } + 0.1 \cdot \alpha^2 - 2jt \cdot \alpha - j t < d(J)\,. 
$$
Thus, the first bullet point follows from the fact that $\sigma$ is canonical.

The proof of the second bullet point is similar to the one of the first bullet point. For each $J \in \mathcal{J}_{\le j-1}$, we have
\begin{align*}
\widetilde d(J) & \le d(D_{j-1}) + \ell \\
& = \Delta_{j-1} + 2n \cdot \alpha^3 + 1.1 \cdot \alpha^2 \\
& = \Delta_{j-1} + \delta_j^* + 0.1 \cdot \alpha^2 - (2m-2j+1 ) \cdot \alpha - (m- j) t\\
& = \Delta_j^* + 0.1 \cdot \alpha^2 - (2m-2j+1 ) \cdot \alpha - (m- j) t < d(\neg J_{i,j}) < d(J_{i, j}) < d( D_j)\,.
\end{align*}
Furthermore, for any $J \notin \mathcal{J}^*_{\le j+1}$, we have
$$
\max\{\widetilde d(\neg J_{i,j}), \widetilde d(J_{i,j}), \widetilde d(D_j) \} \le d (D_{j}) + \ell = \Delta_j + \ell = \Delta_{j +1}^* + 0.1 \cdot \alpha^2 - (2m -2j -1)t \cdot \alpha - (m -j-1) t < d(J)\,. 
$$        
The second bullet point now follows from the fact that $\sigma $ is canonical.
\end{proof}

In the following, we mostly focus on the jobs with processing time at least $\alpha^3$, i.e., all jobs but the delimiter jobs and~$F_0$. Thus, we will call a job \emph{large} if its processing time is at least $\alpha^3$. All other jobs are called \emph{small}.
Therefore, the large jobs are precisely the number jobs and the filler jobs except for $F_0$, i.e., the set of large jobs is $\{J^*_{i,j}, \neg J^*_{i,j}, J_{i, j}, \neg J_{i, j}: i\in [n], j \in [m]\} \cup \{F_i^1, F_i^m : i\in [n]\}$. The small jobs are precisely the delimiter jobs and $F_0$, i.e., the set of small jobs is $\{D_j, D_j^* : j \in [m]\} \cup \{F_0\}$.

We now bound the number of early jobs from each set~$\mathcal{J}^*_j$ or $\mathcal{J}_j$.

\begin{lemma}\label{lem:bound-k}
For any feasible canonical schedule,
\begin{itemize}
\item up to $2n $ jobs from~$\{F_i^1, J^*_{i, 1}, \neg J^*_{i, 1}: i \in [n]\}$ can be completed by~$\Delta_1^*$,
\item up to $n $ jobs from~$\mathcal{J}^*_j$ can be completed by~$\Delta_j^*$ for any $j \ge 2$, and
\item up to $n$ jobs from~$\mathcal{J}_j$ can be completed by~$\Delta_j$ for $j\in[m]$.
\end{itemize}
\end{lemma}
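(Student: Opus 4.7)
All jobs in each of the three sets are large, so every one of them has processing time at least~$\alpha^3$. Consequently, if $N$ jobs from one of these sets are completed by some time~$T$, then $N \cdot \alpha^3$ plus the processing time of any other jobs that any feasible schedule is forced to complete by time~$T$ must be at most~$T$. My strategy is therefore, in each of the three cases, to identify a set of ``mandatory'' jobs that any feasible schedule must also complete by the deadline in question, subtract their total processing time from the deadline, and show that the leftover is strictly less than $(n+1)\alpha^3$ (resp.\ $(2n+1)\alpha^3$ in the first bullet). This forces the job count to be at most $n$ (resp.\ $2n$). Throughout, I rely on the fact that $\alpha = 5n^2 t$ is chosen large enough that $\alpha^3$ dominates any polynomial in $\alpha^2$, $t \alpha$, $m$, and $t$.

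\textbf{First bullet.} The set $\{F_i^1, J^*_{i,1}, \neg J^*_{i,1}: i \in [n]\}$ contains $3n$ large jobs. Here I need no forcing argument: if $2n+1$ of them were completed by $\Delta^*_1$, their total processing time would be at least $(2n+1)\alpha^3$, and I just verify $(2n+1)\alpha^3 > \Delta^*_1 = 2n\alpha^3 + \alpha^2 + (2m-1)t\alpha + (m-1)t$, which reduces to $\alpha^3 > \alpha^2 + (2m-1)t\alpha + (m-1)t$ and follows directly from $\alpha = 5n^2 t$.

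\textbf{Second bullet.} For $j \ge 2$, \Cref{eq:negtildeJ} tells me that every job in $\mathcal{J}_{\le j-1} \setminus \{D_{j-1}\}$ has due date plus $\ell$ strictly less than $\Delta^*_j$, so any feasible schedule completes all of them by $\Delta^*_j$. Using \Cref{obs:ptildeJ} together with $p(D_{j-1}) = \alpha^2 + (j-1)t\alpha$, the time available for jobs from $\mathcal{J}^*_j$ before $\Delta^*_j$ is at most
\[
\Delta^*_j - \bigl(p(\mathcal J_{\le j-1}) - p(D_{j-1})\bigr) = n\alpha^3 + 2\alpha^2 + (m-1)t\alpha + (m-j)t,
\]
which is smaller than $(n+1)\alpha^3$. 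Hence at most $n$ jobs from $\mathcal{J}^*_j$ can fit.

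\textbf{Third bullet.} Symmetrically, \Cref{eq:negJ} forces every job in $\mathcal{J}^*_{\le j} \setminus \{D^*_j\}$ to be completed by $\Delta_j$ in any feasible schedule. Applying \Cref{obs:deadline} and $p(D^*_j) = \alpha^2 + (m-j)t\alpha$, the time available for jobs from $\mathcal{J}_j$ before $\Delta_j$ is at most
\[
\Delta_j - \bigl(p(\mathcal{J}^*_{\le j}) - p(D^*_j)\bigr) = n\alpha^3 + 2\alpha^2 + (m+j)t\alpha,
\]
again strictly less than $(n+1)\alpha^3$, so at most $n$ jobs from $\mathcal{J}_j$ fit. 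The only real work is the two processing-time bookkeeping identities; the main obstacle, if any, is just being careful to invoke the Observations as stated for \emph{any} feasible schedule (not only canonical ones), and to verify the final inequalities with the chosen value of $\alpha$.
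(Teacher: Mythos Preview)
Your proof is correct and follows essentially the same approach as the paper: for the first bullet you use only the lower bound $\alpha^3$ on each large job's processing time, and for the second and third bullets you invoke \Cref{eq:negtildeJ} (resp.\ \Cref{eq:negJ}) to force all of $\mathcal{J}_{\le j-1}\setminus\{D_{j-1}\}$ (resp.\ $\mathcal{J}^*_{\le j}\setminus\{D^*_j\}$) to finish by the relevant deadline, then use \Cref{obs:ptildeJ} (resp.\ \Cref{obs:deadline}) to compute the leftover time as strictly less than $(n+1)\alpha^3$. The computations and the resulting bounds match the paper's proof line for line.
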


\begin{proof}
Let $\sigma$ be a canonical feasible schedule.
        We first show the first bullet point.
        Note that each job from $\{F_i^1, J^*_{i, 1}, \neg J^*_{i, 1}: i \in [n]\}$ is large, i.e., has processing time at least~$\alpha^3$.
        Thus, at most $2n$ jobs from $\{F_i^1, J^*_{i, 1}, \neg J^*_{i, 1}: i \in [n]\}$ can be completed by~$\Delta_1^* < (2n +1) \cdot \alpha^3$.

        We continue with the second bullet point.
    By \Cref{eq:negtildeJ}, all jobs from~$\mathcal{J}_{\le j-1} \setminus \{D_{j-1}\}$ are completed by~$\Delta_j^*$.
        Thus, the total processing time of all jobs from~$\mathcal{J}^*_j$ that are completed by~$\Delta_j^* $ is upper-bounded by $\Delta_j^* - p(\mathcal{J}_{\le j-1} \setminus \{D_{j-1}\}) = n\cdot \alpha^3 + 2 \cdot \alpha^2 + (m-1) t \cdot \alpha + (m-j)t< (n + 1) \cdot \alpha^3$ using \Cref{obs:ptildeJ} for the equality.
        Because each job from~$\mathcal{J}^*_j$ has a processing time of at least~$\alpha^3$, this implies that at most~$n$ jobs from $\mathcal{J}^*_j$ can be completed by~$\Delta_j^*$.

        We conclude with the third bullet point which is analogous to the second bullet point.
        By \Cref{eq:negJ}, all jobs from~$\mathcal{J}^*_{\le j} \setminus \{D^*_{j}\}$ are completed by time~$\Delta_j$.
        Thus, the total processing time of all jobs from~$\mathcal{J}_j$ which are completed by time~$\Delta_j$ is upper-bounded by $\Delta_j- p(\mathcal{J}^*_{\le j} \setminus \{D^*_{j}\}) = n\cdot \alpha^3 + 2 \cdot \alpha^2 +  (m+j) t \cdot \alpha < (n + 1) \cdot \alpha^3$ using \Cref{obs:deadline} for the equality.
        Because each job from~$\mathcal{J}_j$ has processing time at least~$\alpha^3$, this implies that at most~$n$ jobs from $\mathcal{J}_j$ can be completed by time~$\Delta_j$.
    \end{proof}
    As an easy consequence of \Cref{lem:bound-k}, we get restrictions on the set of early jobs in an optimal solution:
    \begin{lemma}\label{claim:bound-k}
        For any canonical feasible schedule with at most $k$ tardy jobs, the set of tardy jobs looks as follows:
        \begin{itemize}
            \item $n $ jobs from~$\{F_i^1: i \in [n]\} \cup \mathcal{J}^*_1$ are tardy,
           and these tardy jobs are completed after $\Delta_1^*$,
            \item $n $ jobs from~$\mathcal{J}^*_j$ are tardy for any $j \ge 2$,
            and these tardy jobs are completed after $\Delta_j^*$, and
            \item $n$ jobs from~$\mathcal{J}_j$ are tardy for $j\in[m]$, and all these tardy jobs are completed after $\Delta_j$.
        \end{itemize}
    \end{lemma}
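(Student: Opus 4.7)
The plan is a tight counting argument on top of \Cref{lem:bound-k}. The construction has $4mn + 2n + 2m + 1$ jobs in total, so any feasible schedule with at most $k = 2mn$ tardy jobs has at least $2mn + 2n + 2m + 1$ early jobs.

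First I would observe that, for each of the three sets appearing in \Cref{lem:bound-k}, every job has due date strictly below the corresponding threshold: $d(J) < \Delta_1^*$ for $J \in \{F_i^1 : i \in [n]\} \cup \mathcal{J}^*_1$, $d(J) < \Delta_j^*$ for $J \in \mathcal{J}^*_j$ with $j \ge 2$, and $d(J) < \Delta_j$ for $J \in \mathcal{J}_j$. Consequently, an early job in any such set is in particular completed by the respective threshold, so \Cref{lem:bound-k} immediately yields upper bounds of $2n$, $n$, and $n$ on the number of early jobs in these sets, respectively. Combined with the trivial upper bound of $1 + n + 2m$ on the number of early jobs among the remaining jobs ($F_0$, the $F_i^m$'s, and the delimiter jobs $D_j^*, D_j$), the sum of all upper bounds is
\[
2n + (m-1)\,n + m\,n + 1 + n + 2m \;=\; 2mn + 2n + 2m + 1,
\]
which exactly matches the lower bound above. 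Hence every one of these upper bounds must be attained with equality. This immediately pins down the exact number of tardy jobs in each of the three sets and forces every remaining job ($F_0$, each $F_i^m$, and each delimiter job) to be early.

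For the ``completed after the threshold'' part I would then exploit that \Cref{lem:bound-k} actually bounds the number of jobs \emph{completed by} each threshold, not just the number of early ones. By tightness, the early jobs in each set already saturate that bound and are themselves completed by the threshold, so no tardy job from the same set can also be completed by the threshold; equivalently, every tardy job in the set is completed strictly afterwards. I do not expect any real obstacle beyond the bookkeeping: once the arithmetic above is written down and the due-date inequalities needed to turn \Cref{lem:bound-k}'s ``completed by threshold'' bounds into bounds on early jobs are noted, each of the three bullets of the claim drops out automatically.
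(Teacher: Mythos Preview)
Your proposal is correct and follows essentially the same approach as the paper: both arguments convert the bounds of \Cref{lem:bound-k} into bounds on early jobs via the due-date inequalities, sum them, and use that the total matches $k=2mn$ exactly to force tightness in every summand (and hence also that every filler and delimiter job is early); the ``completed after the threshold'' part is then derived identically. The only cosmetic difference is that the paper phrases the count as a lower bound on tardy jobs per period, whereas you phrase the dual count as an upper bound on early jobs over all sets—these are the same computation.
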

    \begin{proof}
        Note that each job from $\mathcal{J}^*_j$ that is early must be completed by time~$\Delta_j^*$.
        Further, each job from~$\mathcal{J}_j$ that is early must be completed by time~$\Delta_j$.
        If $F_i^1$ is early, then it is completed by time~$d(F_i^1) < \Delta_1^*$.
        Thus, \Cref{lem:bound-k} implies that for each $j \ge 2$, there are $2n$ tardy jobs from $\mathcal{J}^*_j \cup \mathcal{J}_j$, and there are $2n$ tardy jobs from $\mathcal{J}^*_1 \cup \mathcal{J}_1 \cup\{F_i^1 :i \in [n]\}$.
        Since there are at most $k =2mn$ tardy jobs overall, it follows that for each~$\mathcal{J}^*_j$ or~$\mathcal{J}_j$ with the exception of $\mathcal{J}^*_1$, there are exactly $n$ tardy jobs.
        
        The second part of each bullet point follows from \Cref{lem:bound-k} which implies that at most $n$ jobs from~$\mathcal{J}^*_j$, respectively $\mathcal{J}_j$, can be scheduled up to time $\Delta_j^* $, respectively $\Delta_j$.
    \end{proof}
    
We continue by showing strong restrictions on which large job may be scheduled in which time period:
\begin{lemma}
\label{lem:possible-jobs-in-halfperiod}%
Any canonical feasible schedule $\sigma$ with at most $k$ tardy jobs satisfies that the only large jobs which can be completed in
\begin{itemize}
\item the first half of the first period are $F_i^1$ for $i \in[n]$ and $\mathcal{J}^*_1 \cup \mathcal{J}_1$, and all these jobs are completed after time $\Delta_{j-1} + 0.1 \cdot \alpha^3$,
\item the first half of the $j$-th period for $j > 1$ are $\mathcal{J}^*_j \cup \mathcal{J}_{j-1}$, and all these jobs are completed after time $\Delta_{j-1} + 0.1 \cdot \alpha^3$, and
\item the second half of the $j$-th period are $\mathcal{J}^*_j \cup \mathcal{J}_{j}$, and all these jobs are completed after time $\Delta_{j}^* + 0.1 \cdot \alpha^3$.
\end{itemize}
\end{lemma}

\begin{proof}
    We start with the first bullet point.
    By \Cref{claim:bound-k}, $2n $ jobs from $\{F_i^1 : i \in [n]\} \cup \mathcal{J}^*_1$ are early and thus completed in the first half of the first period.
    These are the only large jobs completed in the first half of the first period as each large job has processing time at least $\alpha^3$ and $\Delta_1^* < (2n+1) \cdot \alpha^3$.

    We continue with the second bullet point.
    Let $j \ge 2$.
    Any large job~$J$ from~$\mathcal{J}^*_{\le j-1}$ must be completed by the end of the $(j-1)$-th period by \Cref{eq:negJ} and thus before the first half of the $j$-th period.
    As the schedule is feasible, any job from $\mathcal{J}_{j-1}$ is completed by time $\Delta_j^*$ by \Cref{eq:negtildeJ}.
    Further, $n$ jobs from $\mathcal{J}_j^*$ are early by \Cref{claim:bound-k} and thus completed by time $\Delta_j^*$.
    As $\Delta_j^* < \bigl((j-1) \cdot 4n  + 2n + 1\bigr) \cdot \alpha^3$, no other large job except from the $4(j-1) n + n$ large jobs from $\mathcal{J}_{\le j-1}$ and the $n$ early jobs from $\mathcal{J}_j^*$ are completed by time~$\Delta_j^*$.
    Consequently, all large jobs which can be scheduled in the first half of the $j$-th period are among~$J^*_{i,j}$, $\neg J^*_{i,j}$, $J_{i,j-1}$, and $\neg J_{i,j-1}$ for $i \in [n]$.

    Lastly, we show the third bullet point analogously to the second bullet point.
    Let $j \in [m]$.
    Any large job~$J$ from~$\mathcal{J}_{\le j-1}$ must be completed by the end of the first half of the $j$-th period by \Cref{eq:negtildeJ} and thus before the second half of the $j$-th period.
    As the schedule is feasible, any job from $\mathcal{J}^*_{j}$ is completed by time $\Delta_j$ by \Cref{eq:negJ}.
    Further, $n$ jobs from $\mathcal{J}_j$ are early by \Cref{claim:bound-k} and thus completed by time $\Delta_j$.
    As $\Delta_j < \bigl(j \cdot 4n + 1\bigr) \cdot \alpha^3$, no other large job except from the $4(j-1) n + 3n $ large jobs from $\mathcal{J}_{\le j}^*$ and the $n$ early jobs from $\mathcal{J}_j$ are completed by time $\Delta_j$.
    Consequently, all large jobs which can be scheduled in the second half of the $j$-th period are among $J^*_{i,j}$, $\neg J^*_{i,j}$, $J_{i,j}$, and $\neg J_{i,j}$ for $i \in [n]$.
\end{proof}

We continue with an observation which will be used in the next lemma.
\begin{observation}\label{obs:rounding}
    For each $x \in [4nm+2]$, there is one large job completed in the interval~$[x\cdot \alpha^3, (x+ 0.1) \cdot \alpha^3]$.
\end{observation}
\begin{proof}
    The observation follows from the fact that $\alpha$ was chosen large enough so that $0.1 \cdot \alpha^3$ is larger than all terms of order $\alpha^2$ or lower occurring in the processing times of the jobs together.
\end{proof}

    As a final step before proving \Cref{lem:only-candidate}, we show that exactly one of $J_{i,j} $ and $\neg J_{i, j}$ as well as exactly one of $J_{i, j}^*$ and $\neg J_{i, j}^*$ is early.
        \begin{lemma}\label{lem:upper-lower}
            If there exists a feasible canonical schedule with $k$ tardy jobs, then there is one where for each~$i \in [n]$ and $j \in [m]$,
            \begin{enumerate}[(i)]
                \item\label{item:at-least} at least one of $J_{i, j}$ and $\neg J_{i, j}$ and at least one of $J^*_{i, j}$ and $\neg J^*_{i, j}$ is tardy, and
               \item\label{item:at-most} at most one of $J_{i, j}$ and $\neg J_{i, j}$ and at most one of $J^*_{i, j}$ and $\neg J^*_{i, j}$ is tardy.
            \end{enumerate}
        \end{lemma}

        \begin{proof}
            Since~$F_i^1$ and $J^*_{i, 1}$ have the same characteristics, we may assume, without loss of generality, that if at least one of $F_i^1$ and $J^*_{i, 1}$ is early, then $F_i^1$ is early.
        
            We prove the lemma by backwards induction on~$i$, i.e., we will show that assuming that exactly one of~$J_{i_0, j}$ and $\neg J_{i_0, j}$ and exactly one of $J^*_{i_0, j}$ and $\neg J^*_{i_0, j}$ is tardy for every $i_0 > i$ and $j \in [m]$, also exactly one of $J_{i, j}$ and $\neg J_{i, j}$ and exactly one of $J^*_{i, j}$ and $\neg J^*_{i, j}$ is tardy for every $j \in [m]$.
            The proofs of \Cref{item:at-least,item:at-most} are analogous and both divided into three steps:
            First, we show that at least one of $J_{i, m}$ and~$\neg J_{i, m}$ (respectively at most one of $J^*_{i, 1} $ and $\neg J_{i, 1}^*$) is tardy.
            Second, we show that at least one of $J_{i, j}$ and $\neg J_{i, j}$ being tardy implies that at least one of $J_{i,j}^*$ and $\neg J_{i,j}^*$ is tardy (respectively at most one of $J^*_{i, j}$ and $\neg J^*_{i,j}$ being tardy implies that at most one of $J_{i,j}$ and $\neg J_{i,j}$ is tardy).
            Third, we show that at least one of $J^*_{i, j}$ and~$\neg J^*_{i, j}$ being tardy implies at least one of $J_{i, j-1}$ and~$\neg J_{i, j-1}$ is tardy (respectively at most one of $J_{i, j}$ and~$\neg J_{i,j}$ being tardy implies that at most one of $J^*_{i,j+1}$ and $\neg J^*_{i,j+1}$ is tardy).

            We continue by proving \Cref{item:at-least}, starting with the first step.
            After time~$T:= \Delta_m  + (2i -1.5) \cdot \alpha^3$, at least $2(n+1) - 2i$ jobs must be completed as $p(\mathcal{J}) - T >(2n + 2i - 1.5) \cdot \alpha^3$ and no job has processing time larger than $\alpha^3 + t \cdot \alpha$.
            The only jobs which can be completed after time~$T$ are $J_{i_0, m}$ and $\neg J_{i_0, m}$ for $i_0 \ge i$ (and in this case, these jobs are tardy) and $F_{i_0}^m$ for $i_0\in [m]$.
            For $i_0 < i$, job $F_{i_0}^m$ cannot be completed after time $T$ as $F_{i_0}^m$ is early for every $i_0 \in [n]$ by \Cref{claim:bound-k} and $d(F_{i_0}^m) < T$.
            Further, we have that exactly one of~$J_{i_0, m}$ and $\neg J_{i_0, m}$ for $i_0 > i$ is tardy.
            Consequently, apart from $J_{i,m}$ and $\neg J_{i,m}$, there are only $2n - 2i +1$ candidates to be completed after time~$T$, implying that at least one of $J_{i,m}$ and $\neg J_{i,m}$ is completed after time~$T$ and therefore tardy.

            We continue with the second step, showing that at least one of $J_{i, j}$ and $\neg J_{i, j}$ being tardy implies that at least one of $J_{i,j}^*$ and $\neg J_{i,j}^*$ is tardy.
        Consider the interval $I := [\Delta_j - (2n - 2i + 1.1) \cdot \alpha^3, \Delta_j+0.1\cdot \alpha^3]$.
        By \Cref{lem:possible-jobs-in-halfperiod}, the only large jobs which can be completed in this interval are $\mathcal{J}_j^* \cup \mathcal{J}_j$ and, if $j = 1$, also $\{F_{i_0}^1 : i_0 \in [n]\}$.
        Combining this with \Cref{lem:bound-k}, all jobs from $\mathcal{J}_j^*$ (and $\{F_{i_0}^1 : i_0 \in [n]\}$ if $j=1$) which are completed in the interval~$I$ are tardy and all jobs from $\mathcal{J}_j$ completed in the interval~$I$ are early.
        The only jobs from~$\mathcal{J}_j$ (and $\{F_{i_0}^1: i_0 \in [n]\}$) which can be completed in the interval~$I$ and be early are $J_{i_0, j}$ and $\neg J_{i_0, j}$ for $i_0 \ge i$.
        As we know that at least one of~$J_{i_0, j}$ and $\neg J_{i_0, j}$ is tardy for $i_0 \ge i$, at most $n + 1 -i$ jobs from $\mathcal{J}_j$ are scheduled in the interval~$I$.
        The only jobs from $\mathcal{J}^*_j$ which can be completed in the interval~$I$ are $J_{i_0, j}^*$ and $\neg J_{i_0, j}^*$ (and $F_{i_0}^1$ if $j=1$) for $i_0 \ge i$, and these jobs must be tardy (as the start point of~$I$ is later than the due date of every job from~$\mathcal J_j^*$).
        Note that exactly one of $J_{i_0, j}^*$ and $\neg J_{i_0, j}^*$ (and $F_{i_0}^1$ if $j= 1$) is tardy for $i_0 > i$ by the induction hypothesis.
        Thus, apart from $J^*_{i, j}$ and $\neg J_{i, j}^*$ (and $F_{i_0}^1$ if $j= 1$), at most $2n -2i + 1$ large jobs can be completed in the interval~$I$.
        We will show that at least $2n -2i + 2$ large jobs must be completed in the interval~$I$, implying that one of $J_{i,j}^*$ and $\neg J_{i, j}^*$ (and $F_{i_0}^1$ if $j= 1$) is tardy.
        Because $\alpha $ is chosen large enough so that all terms of order $\alpha^2$ or lower of all processing times together are smaller than $0.1 \cdot \alpha^3$, the start point of~$I$ is smaller than $(4nj - 2n + 2i -1) \cdot \alpha^3 $.
        Since the end point of the interval is larger than $(4nj + 0.1) \cdot \alpha ^3$ as $4nj \cdot \alpha^3 < \Delta_j$, \Cref{obs:rounding} implies that for each $x \in \{4nj -2n +  2i -1, 4nj-2i, \ldots, 4nj\}$, there is one large job completed in the interval~$[x \cdot \alpha^3, (x + 0.1) \cdot \alpha^3] \subseteq I$.
        Consequently, there are at least $2n -2i + 2$ large jobs which are completed within $I$.

        We continue with the third step, which is analogous to the second step.
        Consider the interval $I^* = [\Delta^*_j - (2n - 2i + 1.1) \cdot \alpha^3, \Delta^*_j + 0.1 \cdot \alpha^3]$.
        By \Cref{lem:possible-jobs-in-halfperiod}, the only large jobs which can be scheduled in~$I^*$ are $\mathcal{J}_j^* \cup \mathcal{J}_{j-1}$.
        Combining this with \Cref{lem:bound-k}, all jobs from $\mathcal{J}_{j-1}$ which are completed in~$I^*$ are tardy and all jobs from $\mathcal{J}^*_j$ completed in~$I^*$ are early.
        The only jobs from~$\mathcal{J}^*_j$ which can be completed in~$I^*$ and be early are $J^*_{i_0, j}$ and $\neg J^*_{i_0, j}$ for $i_0 \ge i$.
        As we know that at least one of $J^*_{i_0, j}$ and $\neg J^*_{i_0, j}$ is tardy for $i_0 \ge i$, at most $n + 1 -i$ jobs from $\mathcal{J}^*_j$ are scheduled in~$I^*$.
        The only jobs from $\mathcal{J}_{j-1}$ which can be completed in~$I^*$ are $J_{i_0, j-1}$ and $\neg J_{i_0, j-1}$ for $i_0 \ge i$, and these jobs must be tardy (as the start point of~$I^*$ is larger than the due date of every job from $\mathcal J_{j-1}$).
        Note that exactly one of $J_{i_0, j-1}$ and $\neg J_{i_0, j-1}$ is tardy for $i_0 > i$.
        Thus, apart from $J_{i, j-1}$ and $\neg J_{i, j-1}$, at most $2n -2i + 1$ large jobs can be completed in this interval.
        However, at least $2n -2i + 2$ large jobs must be completed in this interval, implying that one of $J_{i,j-1}$ and $\neg J_{i, j-1}$ is tardy:
        the start point of~$I^*$ is smaller than $(4n(j-1) + 2i -1) \cdot \alpha^3 $ since all lower-order terms together are smaller than $0.1 \cdot \alpha^3$.
        However, the end point of~$I^*$ is larger than $(4n(j-1) + 2n + 0.1) \cdot \alpha ^3$ as $(4n(j-1) + 2n ) < \Delta_j^*$.
        Consequently, \Cref{obs:rounding} implies that for each $ x \in \{4n(j-1) + 2i -1, \ldots, 4nj\}$, there is one large job completed in the interval $[x \cdot \alpha^3 , (x + 0.1) \cdot \alpha^3] \subseteq I^*$, implying that there are $2n - 2i + 2$ large jobs completed in~$I^*$.

        We continue with \Cref{item:at-most}, which is very similar to \Cref{item:at-least}.
        We again start with the first step.
        By \Cref{lem:bound-k}, there are $2n $ early jobs from $\{F_{i_0}^1 :i_0 \in [n]\} \cup \mathcal{J}_1^*$.
        Further, every large job completed until time~$\Delta_1^*$ is early.
        The only jobs from $\{F_{i_0}^1 :i_0 \in [n]\} \cup \mathcal{J}_1^*$ which can be early and completed after time~$T:= \Delta_1^* - (2n - 2i + 1.1) \cdot \alpha^3$ are $F_{i_0}^1$, $J^*_{i_0,1}$, and $\neg J^*_{i_0, 1}$ for $i_0 \ge i$.
        By the induction on~$i$, we know that exactly one of $F_{i_0}^1$, $J^*_{i_0, 1}$, and $\neg J^*_{i_0, 1}$ for $i_0 > i$ is tardy.
        As there are at least $2n - 2i +2 $ large jobs completed in the interval $[T, \Delta_1^*]$ (since at most $2i -2$ large jobs can be completed until time~$T < (2i-1) \cdot \alpha^3$), it follows that at least two jobs from~$F_i^1$, $J^*_{i, 1}$, and $\neg J^*_{i,1}$ are completed by time~$\Delta_1^*$ and thus early by \Cref{lem:bound-k}.
        Consequently, at most one of $F_i^1$, $J^*_{i,1}$, and $\neg J^*_{i,1}$ is tardy.
        
        We continue with the second step, showing that at most one of $J^*_{i, j}$ and $\neg J^*_{i, j}$ (and $F_{i}^1$ if $j=1$) being tardy implies that at most one of $J_{i,j}$ and $\neg J_{i,j}$ is tardy.
        Consider the interval $I = [\Delta_j - (2n - 2i + 1.1) \cdot \alpha^3, \Delta_j+0.1\cdot \alpha^3]$.
        By \Cref{lem:possible-jobs-in-halfperiod}, the only large jobs which can be scheduled in~$I$ are $\mathcal{J}_j^* \cup \mathcal{J}_j$ (and $\{F_{i_0}^1: i_0\in [n]\}$ if $j= 1$).
        Combining this with \Cref{lem:bound-k}, all jobs from $\mathcal{J}_j^*$ (and $\{F_{i_0}^1: i_0\in [n]\}$ if $j= 1$) which are completed in~$I$ are tardy and all jobs from $\mathcal{J}_j$ completed in~$I$ are early.
        The only jobs from $\mathcal{J}^*_j$ (and $\{F_{i_0}^1: i_0\in [n]\}$ if $j= 1$) which can be completed in~$I$ are $J_{i_0, j}^*$ and $\neg J_{i_0, j}^*$ (and $F_{i_0}^1$ if $j=1$) for $i_0 \ge n + 1 - i$, and these jobs must be tardy (as the start point of~$I$ is later than the due date of every job from $\mathcal J_j^*$).
        As we know that at most one of~$J_{i_0, j}$ and $\neg J_{i_0, j}$ is tardy for $i_0 \ge i$, at most $n + 1 -i$ jobs from $\mathcal{J}_j$ are scheduled in~$I$.
        The only jobs from~$\mathcal{J}_j$ (and $\{F_{i_0}^1: i_0\in [n]\}$ if $j= 1$) which can be completed in this interval and be early are $J_{i_0, j}$ and $\neg J_{i_0, j}$ (and $F_{i_0}^1$ if $j= 1$) for $i_0 \ge i$.
        As we know that at exactly of $J_{i_0, j}$ and $\neg J_{i_0, j}$ (and $F_{i_0}^1$ if $j= 1$) is tardy for $i_0 > i$, at most $n + 1 -i$ jobs from $\mathcal{J}_j$ (and $F_{i_0}^1$ if $j= 1$) are completed in~$I$.
        Thus, apart from $J_{i, j}$ and $\neg J_{i, j}$, at most $2n -2i + 1$ large jobs can be completed in~$I$.
        However, at least $2n -2i + 2$ large jobs must be completed in~$I$, implying that one of $J_{i,j}$ and $\neg J_{i, j}$ is tardy:
        the start point of~$I$ is smaller than $(4nj -2n +  2i -1) \cdot \alpha^3 $ while the end point of~$I$ is larger than $(4nj + 0.1) \cdot \alpha ^3$.
        By \Cref{obs:rounding}, for each $x \in \{4nj-2n + 2i-1, \ldots, 4nj\}$, there is a large job completed in~$[x \cdot \alpha^3, (x + 0.1) \cdot \alpha^3] \subseteq I$, implying that there are at least $2n -2i +2$ large jobs completed in~$I$.

        We conclude with the third step, which is again analogous to the second step.
        Consider the interval $[\Delta^*_{j + 1} - (2n - 2i + 1.1) \cdot \alpha^3, \Delta^*_{j + 1} + 0.1 \cdot \alpha^3]$.
        By \Cref{lem:possible-jobs-in-halfperiod}, the only large jobs which can be scheduled in this interval are $\mathcal{J}_{j + 1}^* \cup \mathcal{J}_{j}$.
        Combining this with \Cref{lem:bound-k}, all jobs from $\mathcal{J}_{j}$ which are completed in this interval are tardy and all jobs from $\mathcal{J}^*_{j + 1}$ completed in this interval are early.
        The only jobs from~$\mathcal{J}^*_{j +1}$ which can be completed in this interval and be early are $J^*_{i_0, j+1}$ and $\neg J^*_{i_0, j+1}$ for $i_0 \ge i$.
        As we know that at most one of $J^*_{i_0, j + 1}$ and $\neg J^*_{i_0, j + 1}$ is tardy for $i_0 \ge i$, at most $n + 1 -i$ jobs from $\mathcal{J}^*_j$ are scheduled in this interval.
        The only jobs from $\mathcal{J}_{j}$ which can be completed in this interval are $J_{i_0, j}$ and $\neg J_{i_0, j}$ for $i_0 \ge i$, and these jobs must be tardy (as the start point of the interval is later than the due date of every job from $\mathcal J_{j}$).
        Note that exactly one of $J_{i_0, j}$ and $\neg J_{i_0, j}$ is tardy for $i_0 > i$.
        Thus, apart from $J_{i, j}$ and $\neg J_{i, j}$, at most $2n -2i + 1$ large jobs can be completed in this interval.
        However, at least $2n -2i + 2$ large jobs must be completed in this interval, implying that one of $J_{i,j}$ and $\neg J_{i, j}$ is tardy:
        the start point of the interval is smaller than $(4nj+ 2i -1) \cdot \alpha^3 $ while the end point of the interval is larger than $(4nj + 2n + 0.1) \cdot \alpha ^3$.
        Thus, \Cref{obs:rounding} implies that for each $ x \in \{4nj + 2i-1, \ldots, 4nj\}$, there is one large job completed in the interval $[x \cdot \alpha^3, (x+0.1) \cdot \alpha^3]$.
    \end{proof}

    We can finally show \Cref{lem:only-candidate}:
    \begin{proof}[Proof of \Cref{lem:only-candidate}]
        Let $\sigma$ be an optimal schedule.
        By \Cref{obs:almost-edd}, we may assume that $\sigma $ is the canonical schedule for some set~$\widetilde{\mathcal{J}}$.
        By \Cref{lem:bound-k}, $\widetilde{\mathcal{J}}$ contains every delimiter job and job~$F_i^m$ for every $i \in [n]$.
        \Cref{lem:upper-lower} implies that $\mathcal{\widetilde{J}}$ contains exactly one of $J_{i,j}^*$ and $\neg J_{i,j}^*$ and exactly one of $J_{i,j}$ and $\neg J_{i,j}$ for every $i \in [n]$ and $j\in [m]$.
        Since $\sigma$ has at most $k$ tardy jobs and there are $k$ tardy number jobs, this implies that all filler jobs~$F_i^1$ for $i \in[n]$ are contained in $\mathcal{\widetilde{J}}$.
    \end{proof}

\section{Lexicographically first minimizing $T_{\max}$ and then $\sum U_j$}
In this section, we show that $1|T_{\max}\le \ell, \sum U_j \le k|$ can be reduced to the problem of first minimizing the maximum tardiness and then the number of tardy jobs:

\begin{theorem}\label{thm:lex:Tmax-Uj}
    $1||Lex(T_{\max}, \sum U_j)$ is strongly NP-complete.
\end{theorem}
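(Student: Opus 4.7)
The plan is to establish \Cref{thm:lex:Tmax-Uj} via a polynomial-time reduction from the constrained problem $1|T_{\max} \le \ell, \sum U_j \le k|$, which is strongly NP-hard by \Cref{thm:strong}. The idea is to add a single dummy job that forces the optimal $T_{\max}$ in the constructed instance to equal $\ell$, so that minimizing the lexicographic objective on the constructed instance is equivalent to solving the original constrained problem.

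Given an instance $(\mathcal{J}, p, d, \ell, k)$ of the constrained problem (which we may assume is feasible with respect to $T_{\max} \le \ell$, since feasibility can be verified in polynomial time via Jackson's EDD rule~\cite{Jackson1956}), we would build an instance of $1||Lex(T_{\max}, \sum U_j)$ on the job set $\mathcal{J} \cup \{J_0\}$, where $J_0$ is a new \emph{forcing job} with $p(J_0) = \ell$ and $d(J_0) = 0$, and the due date of every original job~$J$ is shifted to $d'(J) := d(J) + \ell$; the processing times of the original jobs are left unchanged.

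The correctness argument then proceeds in two steps. First, since $C(J_0) \ge p(J_0) = \ell$ in every schedule, $J_0$ has tardiness at least $\ell$, with strict inequality unless $J_0$ is scheduled first. Hence the minimum achievable $T_{\max}$ in the new instance is exactly~$\ell$ (attained by placing $J_0$ first and following any $T_{\max} \le \ell$ schedule for the original jobs), and every schedule optimal for the lexicographic objective must place $J_0$ in position one. Second, with $J_0$ in position one the original jobs start at time $\ell$, and the shift of $\ell$ applied to their due dates exactly cancels this delay, so that $T_{\text{new}}(J) = (\ell + C_{\text{orig}}(J)) - (d(J) + \ell) = T_{\text{orig}}(J)$ for every $J \in \mathcal{J}$. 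Consequently the total number of tardy jobs in the new instance equals one (for $J_0$) plus the number of tardy original jobs, and the original instance admits a schedule with $T_{\max} \le \ell$ and $\sum U_j \le k$ if and only if the new instance admits a lexicographically optimal schedule with $\sum U_j \le k+1$.

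The reduction runs in polynomial time and introduces only the value $\ell$, which is already part of the original input, so strong NP-hardness is preserved. There is no deep obstacle here: the main point to verify is that $J_0$ must be placed first in every schedule attaining $T_{\max} = \ell$, which is immediate from the observation that any non-first placement of $J_0$ delays $C(J_0)$ strictly past $\ell$ whenever some other job has positive processing time (a property that holds for the hard instances produced by \Cref{thm:strong}).
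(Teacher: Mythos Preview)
Your proof is correct and follows essentially the same approach as the paper's: reduce from the constrained problem $1|T_{\max}\le \ell,\sum U_j\le k|$ by adjoining a single forcing job that pins the minimum achievable $T_{\max}$ in the new instance to exactly~$\ell$, so that lexicographic optimality reproduces the constrained problem. The only difference is cosmetic: the paper places its forcing job at the \emph{end} of the schedule (processing time $P=\sum_J p(J)$, due date $2P-\ell$, so it must be last and then has tardiness exactly~$\ell$) and therefore needs no due-date shift, whereas you place yours at the \emph{beginning} and compensate by shifting all original due dates by~$\ell$; both constructions are equally valid and preserve strong NP-hardness.
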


\begin{proof}
    We reduce from $1|T_{\max}\le \ell, \sum U_j \le k|$ which is strongly NP-complete by \Cref{thm:strong}.
    Let $\mathcal I =(\mathcal J = \{J_1, \ldots, J_n\}, \ell, k)$ be an instance of $1|T_{\max}\le \ell, \sum U_j \le k|$.
    Let $P:= \sum_{i=1}^n p(J_i)$.
    We assume, without loss of generality, that $\ell < P$ and $d(J) \le P$ for every $J \in \mathcal J$, and that there is some schedule with maximum tardiness at most~$\ell$.
    We create a job~$J^*$ with $p(J^*) = P$ and $d(J^*) =2P -\ell$.
    We claim that $\mathcal I':= \mathcal{J} \cup \{J^*\}$ has a schedule minimizing the maximum tardiness with at most $k + 1 $ tardy jobs if and only if $\mathcal I $ is a yes-instance of $1|T_{\max} \le \ell, \sum U_j \le k|$.

    As a first step of showing the correctness of the reduction, we show that the minimal maximum tardiness for $\mathcal I'$ is precisely $\ell$.
    Let $\sigma'$ be any schedule for $\mathcal I'$.
    If $J^*$ is not the last job in $\sigma'$, then the last job~$J$ in $\sigma'$ is completed at time~$2P$ and thus has tardiness at least $2P - d(J) \ge P > \ell$.
    Otherwise, $J^*$ is the last job in~$\sigma'$ and has tardiness exactly $2P - d(J^*) = \ell$, implying that every schedule for $\mathcal I'$ has maximum tardiness at least $\ell$.
    To show that the minimum maximum tardiness is exactly $\ell$, it suffices to describe a schedule with maximum tardiness $\ell$.
    Note that taking the presumed schedule for $\mathcal{I}$ with maximum tardiness at most $\ell$ and appending $J^*$ is such a schedule.

    We conclude the proof by showing correctness.
    We start with the forward direction.
    Given a solution~$\sigma$ to $\mathcal I$, appending~$J^*$ to $\sigma$ results in a schedule for $\mathcal I'$ with one more tardy job (namely~$J^*$) and maximum tardiness $\ell$.

    Given a schedule $\sigma'$ for $\mathcal I'$ minimizing the maximum tardiness and having at most $k+1 $ tardy jobs, first note that any schedule minimizing $T_{\max}$ must schedule job~$J^*$ last.
    Consequently, job~$J^*$ is tardy.
    Further, the minimal maximum tardiness of a schedule is exactly~$\ell$, so job~$J$ has tardiness at most $\ell $ for every $J \in \mathcal J$.
    Thus, deleting $J^*$ from the schedule $\sigma $ leads to a schedule for $\mathcal{I}$ with maximum tardiness at most $\ell$ and at most $k$ tardy jobs.
\end{proof}

We will now show that the a priori version is strongly NP-complete:
\begin{corollary}
    For any $\alpha > 0$, $1||T_{\max} + \alpha \sum U_j$ is strongly NP-complete.
\end{corollary}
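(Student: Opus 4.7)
The plan is to reduce from $1||Lex(T_{\max}, \sum U_j)$, which is strongly NP-hard by \Cref{thm:lex:Tmax-Uj}. The key idea is a scaling trick: since $\alpha \sum U_j$ is bounded by $\alpha n$ on any schedule of an $n$-job instance, if every nonzero $T_{\max}$ value can be forced to exceed $\alpha n$, then any minimizer of $T_{\max} + \alpha \sum U_j$ is forced to first minimize $T_{\max}$ and only then $\sum U_j$, matching the lexicographic objective exactly.

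Concretely, given a lex-instance $\mathcal I$ with $n$ jobs, I would set $M := \lceil \alpha n \rceil + 1$ and construct $\mathcal I'$ by multiplying every processing time and every due date in $\mathcal I$ by $M$. Because every $p$-value and $d$-value of $\mathcal I'$ is an integer multiple of $M$, so are all completion times and hence all tardiness values; in particular, $T_{\max}$ on $\mathcal I'$ takes values in $\{0, M, 2M, \ldots\}$. A direct comparison then shows that for any two schedules $\sigma_1, \sigma_2$ on $\mathcal I'$ with $T_{\max}(\sigma_1) < T_{\max}(\sigma_2)$, the gap of at least $M$ strictly dominates the difference of at most $\alpha n < M$ contributed by $\alpha \sum U_j$, so the a priori objective strictly favors the schedule with smaller $T_{\max}$. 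This yields a bijective correspondence between optimal schedules of $1||T_{\max}+\alpha \sum U_j$ on $\mathcal I'$ and optimal schedules of $1||Lex(T_{\max}, \sum U_j)$ on $\mathcal I$, since scaling does not affect tardiness indicators.

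The reduction preserves strong NP-hardness because, for fixed $\alpha$, the scaling factor $M$ is linear in $n$, so the numerical magnitudes in $\mathcal I'$ grow by only a polynomial factor compared to those in $\mathcal I$. I do not foresee any substantive obstacle. The only subtlety is to keep in mind that the direction of scaling is opposite to the one used in the earlier corollary for $1||\alpha T_{\max} + \sum U_j$: there $\alpha$ itself is the coefficient that inflates $T_{\max}$ and can simply be taken large, whereas here $\alpha $ is a given fixed constant, so one must instead scale the instance's processing times and due dates to guarantee that $T_{\max}$ dominates.
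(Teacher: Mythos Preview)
Your proposal is correct and follows essentially the same approach as the paper: both reduce from $1||Lex(T_{\max},\sum U_j)$ by scaling all processing times and due dates by an integer factor large enough (the paper uses $2n\lceil\alpha\rceil$, you use $\lceil\alpha n\rceil+1$) so that every nonzero tardiness value exceeds $\alpha n \ge \alpha\sum U_j$, forcing any a~priori optimum to be a lexicographic optimum. Your final remark about an ``earlier corollary for $1||\alpha T_{\max}+\sum U_j$'' is a slight confusion---there is no separate such result in the paper, only the introductory restatement of this very corollary with $\alpha$ on the other summand---but this does not affect the argument.
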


\begin{proof}
    Containment in NP is obvious as an optimal schedule is a certificate.
    We reduce from $1|| Lex(T_{\max}, \sum U_j)$.
    The reduction just multiplies each due date and each processing time by $2n \cdot \lceil \alpha\rceil$.
    This implies that the tardiness of each job is a multiple of $2n  \cdot \lceil {\alpha}\rceil > \alpha \sum U_j$.
    Consequently, any schedule minimizing $T_{\max}  + \alpha \sum U_j$ is a schedule that minimizes $T_{\max}$, and among all such schedules (i.e., those minimizing $T_{\max}$), the schedule with minimum number of tardy jobs minimizes $T_{\max} + \alpha \cdot \sum U_j$.
\end{proof}

\section{Lexicographically first minimizing $\sum U_j$ and then $T_{\max}$}

\newcommand{\Jweight}{X}
\newcommand{\tildeweight}{W}
\newcommand{\oldx}{J^*}
\newcommand{\oldy}{J}

We next prove the following theorem:
\begin{theorem}\label{thm:lex}
$1||Lex ( \sum U_j, T_{\max})$ is weakly NP-complete.
\end{theorem}

We will reduce from \textsc{Partition}:
\decprob{Partition}{
A set of $n$ integers~$a_1, \ldots, a_{n}$.
}{
Is there a subset~$S \subset [n]$ such that $\sum_{i \in S} a_i = t$ where $t := 0.5 \cdot \sum_{i=1}^n a_i$?
}
Somewhat similar to \Cref{sec:strong-NP}, there will be four jobs~$\oldx_i$, $\neg \oldx_i$, $\oldy_i$, and $\neg \oldy_i$ for each $i \in [n]$.
    Again, scheduling~$\oldx_i$ and $\oldy_i$ early will encode $a_i$ being part of the solution to \textsc{Partition} while $\neg \oldx_i$ and $\neg \oldy_i$ being early will encode $a_i$ not being part of the solution to \textsc{Partition}.
However, while in \Cref{sec:strong-NP} all these jobs had roughly the same processing time, this is not true anymore; in order to prevent that scheduling e.g.\ $\oldx_1$ at the end allows to schedule one more early jobs, jobs~$\oldy_i$ and $\neg \oldy_i$ will have much larger processing times than $\oldx_i$ and $\neg \oldx_i$, and the processing times will be increasing for increasing~$i$.
We add many small filler jobs~$F_i$ to ensure that $\oldx_i$ or $\neg \oldx_i$ together with the filler jobs have roughly the same processing time as $\oldy_i$ and $\neg \oldy_i$.
Since all these filler jobs have the same characteristics, we will refer to each of these jobs as~$F_i$; the set of all $2W/X$ filler jobs $F_i$ will be denoted as $\mathcal{F}_i$.
The processing times of the filler jobs will be quite small, ensuring that any schedule with minimum number of tardy jobs schedules them early.
We fix sufficiently large constants $W \gg X \gg Y\gg Z \gg t$ such that $W$ is a multiple of $X$ ($Z = (2t + 1)$, $Y = (2t + 1) \cdot Z$, $X = n \cdot 2^{n+2} \cdot Y$, and $W = 2n^2 \cdot X$ are possible choices).

For an overview of processing times and due dates, we refer to \Cref{tab:lex}.
We set the target maximum tardiness to~$\ell := n \cdot \tildeweight + \sum_{i=1}^{n} i\cdot \Jweight  + \sum_{i =1}^n 2^i \cdot Y + t \cdot Z + t$.
\begin{table}[h!]
    \begin{center}
    \begin{tabular}{c|c|c | c}
        Job & processing time & due date & multiplicity\\
        \hline
        $\oldx_i$ & $i \cdot \Jweight$ & $i \cdot \tildeweight + \sum_{i_0=1}^i i_0 \cdot \Jweight +  t$ & 1\\
        $\neg \oldx_i $ & $ i \cdot \Jweight + a_i$ & $(i -1) \cdot \tildeweight + \sum_{i_0=1}^i i_0 \cdot \Jweight + t$& 1\\
        \hline
        $F_i$ & $X/2 $ & $i \cdot \tildeweight + \sum_{i_0=1}^i i_0 \cdot \Jweight + t $& $2W/X$\\
        \hline
        \hline
        $\oldy_i$ & $\tildeweight + 2^i \cdot Y + a_i\cdot  Z$ & $D_1^* + i \cdot \tildeweight + \sum_{i_0=1}^{i} i_0 \cdot \Jweight + \sum_{i_0=1}^i 2^{i_0}  \cdot Y + t\cdot Z + t $& 1\\
        $\neg \oldy_i$& $\tildeweight + 2^i \cdot Y $ & $D_1^* + i \cdot \tildeweight + \sum_{i_0=1}^{i-1} i_0 \cdot \Jweight +  \sum_{i_0=1}^i 2^{i_0} \cdot Y + t \cdot Z+ t$& 1
    \end{tabular}
    \end{center}
    \caption{Processing times and due dates where $D_1^* := n \cdot W+ \sum_{i_0 =1}^n i_0 \cdot X +  t$.}
    \label{tab:lex}
\end{table}

We now prove correctness.
We start with characterizing the set of early jobs of a schedule minimizing the number of tardy jobs.
We start with the early jobs completed by time $D_1^*$:
    \begin{lemma}\label{claim:one-of-each}
        Any set~$\mathcal{J}^*$ of $n \cdot 2\tildeweight/\Jweight + i $ early jobs completed by time $D_1^*$ 
        has total processing time at least~$n \cdot \tildeweight + \sum_{{i_0} =1}^i {i_0} \cdot \Jweight$.
        If the total processing time of $\mathcal J^*$ is smaller than~$n \cdot \tildeweight + \sum_{{i_0} =1}^i {i_0} \cdot\Jweight + 0.5 \cdot \Jweight$, then $\mathcal{J}^*$ contains exactly one of $\oldx_{i_0}$ and $\neg \oldx_{i_0}$ for each ${i_0} \in [i]$ as well as each filler job~$F_{i_0}$.
    \end{lemma}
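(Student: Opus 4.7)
The strategy is to leverage Jackson's rule: since every job in $\mathcal{J}^*$ is early, we may assume $\mathcal{J}^*$ is scheduled in EDD order within itself, and then progressively narrow down the composition of $\mathcal{J}^*$.

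First, no $\oldy$- or $\neg\oldy$-type jobs can belong to $\mathcal{J}^*$: each such job has processing time at least $\tildeweight + 2Y$, while $\mathcal{J}^*$'s jobs must all complete by $D_1^* = n\tildeweight + \sum_{j=1}^n j\Jweight + t$, so $p(\mathcal{J}^*) \le D_1^*$; including any one $\oldy$-type job together with the cheapest possible remainder ($\Jweight/2$ per filler) already exceeds $D_1^*$ because $\tildeweight = 2n^2\Jweight \gg n^2\Jweight$. With $\oldy$-type jobs excluded, set $\alpha_j := |\mathcal{J}^* \cap \mathcal{F}_j|$ and $\beta_j := |\mathcal{J}^* \cap \{\oldx_j, \neg\oldx_j\}| \in \{0,1,2\}$, so $\sum_j \alpha_j + \sum_j \beta_j = n \cdot 2\tildeweight/\Jweight + i$ and
\[
p(\mathcal{J}^*) \;\ge\; \frac{\Jweight}{2}\sum_j \alpha_j \;+\; \Jweight \sum_j j\beta_j.
\]
The decisive restriction is obtained by applying EDD feasibility at $d(\oldx_j)$ and computing the completion time of $\oldx_j$ directly: whenever $\beta_j = 2$, we obtain an upper bound on the processing time of smaller-index $\oldx/\neg\oldx$ jobs that rules out $j \le 2$ altogether (provided enough fillers from $\mathcal{F}_1, \ldots, \mathcal{F}_j$ are present) and severely limits small-index activity for $j \ge 3$. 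Minimizing the right-hand side under these restrictions yields the claimed bound $n\tildeweight + \sum_{j_0=1}^i j_0 \Jweight$, attained only by the canonical configuration $\alpha_j = 2\tildeweight/\Jweight$ for all $j$ and $\beta_j = [j \le i]$.

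For the characterization under the budget $p(\mathcal{J}^*) < n\tildeweight + \sum_{j_0=1}^i j_0 \Jweight + 0.5\Jweight$, each deviation from the canonical configuration costs more than $0.5\Jweight$. If a filler is missing, the additional $\oldx/\neg\oldx$ job required contributes at least $(i+1)\Jweight$ to the minimum, so the net cost is at least $(i + 0.5)\Jweight > 0.5\Jweight$ for $i \ge 1$; hence all fillers are in $\mathcal{J}^*$. With all fillers in, any $\beta_j = 2$ contributes at least $2j\Jweight \ge 6\Jweight$ and forces the other active indices upward, exceeding the budget; so $\beta_j \le 1$ for every $j$. Finally, any substitution of an active index $\le i$ by one $> i$ increases $\sum_j j\beta_j$ by at least $1$, contributing $\ge \Jweight > 0.5\Jweight$, so the $i$ active indices must coincide with $[i]$.

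The main obstacle is the unconditional lower bound: configurations with $\beta_j = 2$ at large $j$ may naively appear to concentrate processing time cheaply (a single index contributing twice rather than two distinct indices), but the EDD constraint --- specifically the cutoff $j \ge 3$ and the near-exclusion of smaller-index active jobs --- guarantees that such configurations always sum to strictly more than the canonical one. Making this precise requires explicit completion-time calculations for $\oldx_j$ and $\neg\oldx_j$ given the filler counts and other active indices, and this is where the technical weight of the proof lies.
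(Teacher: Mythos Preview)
Your approach is genuinely different from the paper's: you set up a direct optimization over configurations $(\alpha_j,\beta_j)$ and try to minimize $p(\mathcal J^*)\ge (X/2)\sum_j\alpha_j+X\sum_j j\beta_j$ subject to EDD feasibility, whereas the paper argues by induction on~$i$.

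There is, however, a real gap at the heart of your plan. You correctly recognize that without the earliness constraint the bound is false --- for instance, taking $\beta_1=\cdots=\beta_{\lceil i/2\rceil}=2$ with all fillers gives $\sum_j j\beta_j\approx i^2/4<i(i+1)/2$, beating the canonical configuration. So the EDD constraint is doing the work, exactly as you say. But you never write down what that constraint \emph{is} as an inequality on the $(\alpha_j,\beta_j)$: ``rules out $j\le 2$ provided enough fillers are present'' is conditional and the complementary case is untreated; ``severely limits small-index activity for $j\ge 3$'' is not a usable restriction. Consequently the sentence ``minimizing the right-hand side under these restrictions yields the claimed bound'' is an assertion, not an argument, and you explicitly concede that ``making this precise\ldots is where the technical weight of the proof lies'' without then supplying it. The same looseness infects the characterization paragraph: ``any $\beta_j=2$ contributes at least $2jX\ge 6X$ and forces the other active indices upward'' does not by itself exclude trading two high canonical indices for one doubled low index; only the unwritten completion-time calculation would.

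The paper sidesteps this optimization entirely. After excluding the $J$-type jobs it inducts on~$i$: let $J$ be the longest job in $\mathcal J^*$, necessarily some $J^*_{i_1}$ or $\neg J^*_{i_1}$. A pigeonhole argument on the due dates shows $i_1\ge i$: if $i_1<i$, then since $|\mathcal J^*|>n\cdot 2W/X+i_1$ there is a minimal~$i_2$ with $\bigl|\bigl(\{J^*_{i_0},\neg J^*_{i_0}:i_0\le i_2\}\cup\mathcal F_1\cup\cdots\cup\mathcal F_{i_2}\bigr)\cap\mathcal J^*\bigr|>i_2(2W/X+1)$; minimality forces all of $\{J^*_{i_2},\neg J^*_{i_2}\}\cup\mathcal F_{i_2}$ into $\mathcal J^*$, and then the total processing time of index~$\le i_2$ already exceeds $d(F_{i_2})$, contradicting earliness. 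Hence $p(J)\ge iX$, and applying the induction hypothesis to $\mathcal J^*\setminus\{J\}$ (which has $n\cdot 2W/X+(i-1)$ early jobs) gives both the bound and the structural characterization in one stroke. This single pigeonhole step replaces the entire constrained minimization you would otherwise have to carry out.
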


    \begin{proof}
        We first show that there is no $i \in [n]$ such that $\mathcal{J}^*$ contains $J_i $ or $\neg J_i$.
        Each job has processing time at least $X/2$.
        Thus, if $\mathcal{J}^*$ would contain job $J_i $ or $\neg J_i$ which has processing time at least $W$, then
        \[
         p(\mathcal{J}^*) \ge (n \cdot 2W/X) \cdot X/2 + W = (n + 1) \cdot W>D_1^*\,,
        \]
        a contradiction to all jobs from $\mathcal{J}^*$ finishing until $D_1^*$.
    
        We prove the statement by induction on~$i$.
        For $i =1$, note that each non-filler job has processing time at least~$\Jweight$ and only $\oldx_1$ and $\neg \oldx_1$ have a processing time smaller than $2 \cdot \Jweight$ while each filler job has processing time~$\Jweight/2$.
        Thus, each set of $n \cdot 2\tildeweight/\Jweight + 1$ jobs has processing time at least~$n\cdot \tildeweight + \Jweight$.
        Further, each set of $n \cdot 2\tildeweight /\Jweight + 1$ early jobs containing at least two non-filler has processing time at least 
        \[
            (n \cdot 2\tildeweight/\Jweight - 1) \cdot (X/2) + 2\cdot X = n \cdot W + 1.5\cdot X,
        \]
        and each set of $n  \cdot 2\tildeweight /\Jweight + 1$ early jobs containing at least one non-filler job apart from $J_1^*$ and $\neg J_1^*$ has processing time at least
        \[
            (n \cdot 2\tildeweight/\Jweight) \cdot (X/2) + 2\cdot X = n \cdot W + 2\cdot X.
        \]

        We continue with the induction step.
        Let $i>1$ and assume that the lemma holds for $i-1$.
        Let~$ J^* $ be the longest job from~$\mathcal{J}^*$.
        As there are only $n \cdot 2W/X$ filler jobs and $\mathcal{J}^*$ contains no job from $\{J_{i_0}, \neg J_{i_0} :i_0 \in [n]\}$ by our initial observation, $|\mathcal J^*| > n \cdot 2W/X$ implies that $J^* = \oldx_{i_1}$ or $J^* = \neg \oldx_{i_1}$ for some~$i_1 \in [n]$.
        We will now show that $i_1 \ge i$, i.e., that the longest job from $\mathcal{J}^*$ has index at least $i$.
        So assume towards a contradiction that $i_1 < i$,  i.e., all non-filler jobs from $\mathcal{J}^*$ have index smaller than $i$.
        Since $|\mathcal J^*| =  n \cdot 2W /X + i >  n \cdot 2W /X + i_1$, this implies that there is some minimal~$i_2$ such that $|\bigl(\{\oldx_{i_0}, \neg \oldx_{i_0} : i_0 \le i_2\} \cup  \mathcal F_{1} \cup \ldots \cup \mathcal F_{i_2} \bigr) \cap \mathcal J^*| > i_2 \cdot (2W/X  + 1)$.
        By the minimality of~$i_2$, we have $\{\oldx_{i_2}, \neg \oldx_{i_2}\}  \cup \mathcal F_{i_2}\subset \mathcal J^*$.
        Further, due to the minimality of~$i_2$, we have that $p\Bigl(\bigl(\{\oldx_{i_0}, \neg \oldx_{i_0}: i_0 < i_2\} \cup  \mathcal F_{1} \cup \ldots \cup \mathcal F_{i_2-1}\bigr)  \cap \mathcal J^*\Bigr) \ge (i_2 - 1) \cdot W + \sum_{i_0 = 1}^{i_2-1} i_0 \cdot X$.
        Consequently, we have 
        \begin{align*}
        p\Bigl(\bigl(\{\oldx_{i_0}, \neg \oldx_{i_0}: i_0 \le i_2\} \cup  \mathcal F_{1} \cup \ldots \cup \mathcal F_{i_2}\bigr)  \cap \mathcal J^*\Bigr) & > (i_2 -1 ) \cdot W + \sum_{i_0 = 1}^{i_2-1} i_0 \cdot X+ W + 2\cdot i_2 \cdot X \\
        &= i_2 \cdot W + \sum_{i_0 = 1}^{i_2} i_0 \cdot X + i_2 \cdot X > d(F_{i_2})\,,
        \end{align*}
        a contradiction to all jobs from~$\mathcal J^*$ being early.
        Consequently, we have $p(J^*) \ge i \cdot X$.
        By induction, $\mathcal{J}^* \setminus \{J^*\}$ has a total processing time of at least $n\cdot \tildeweight + \sum_{{i_0}=1}^{i-1} {i_0} \cdot \Jweight$.
        Therefore, we have
        \[
            p(\mathcal J^*) = p(\mathcal J^*\setminus \{J^*\}) + p( J^*) \ge n \cdot W + \sum_{{i_0}=1}^{i-1} {i_0} \cdot \Jweight + i \cdot \Jweight = n \cdot W + \sum_{{i_0} =1}^i {i_0} \cdot \Jweight\,
        \]
        and if $p(\mathcal J^*) < n \cdot W + \sum_{{i_0} =1}^{i} {i_0} \cdot \Jweight +0.5 \cdot \Jweight$, then $\mathcal J^*$ contains exactly one of $\oldx_{i_0}$ and $\neg \oldx_{i_0}$ for each ${i_0} \in [i]$.
    \end{proof}

Using \Cref{claim:one-of-each}, we now characterize the set of schedules with minimum number of early jobs (ignoring the secondary condition of minimizing the maximum tardiness for a moment).
In order to do so, we introduce {EDD-schedules}, a class of schedules used to solve $1||\sum w_j U_j$.
An \emph{EDD-schedule for a set $\widetilde{\mathcal{J}}$} is a schedule which schedules all jobs from $\widetilde{\mathcal{J}}$ by non-decreasing order of their due dates, and then all other jobs in arbitrary order afterwards.
It is a well-known fact that if there is a schedule where a set $\widetilde{\mathcal{J}}$ of jobs is early, then the EDD-schedule for~$\widetilde{\mathcal{J}}$ is such a schedule (see e.g.~\cite{AdamuAdewumi2014}).
Note that if $\ell$ exceeds the total processing time of all jobs together, then the canonical schedule for $\widetilde{\mathcal{J}}$ (as defined in \Cref{sec:prelims}) is an EDD-schedule for~$\widetilde{\mathcal{J}}$.

\begin{lemma}\label{lem:structure-min-tardy-jobs}
    A schedule minimizes the number of tardy jobs if and only if the set~$\widetilde{\mathcal{J}}$ of early jobs consists of
    \begin{itemize}
        \item all filler jobs,
        \item for every $i\in[n]$, exactly one of $\oldx_i$ and $\neg \oldx_i$, 
        \item $\sum_{i \in [n] : \neg \oldx_i \in \widetilde{\mathcal{J}}} a_i  \le t$, and
        \item exactly $n$ jobs from $\{\oldy_i, \neg \oldy_i : i \in [n]\}$ such that for each $i \in [n]$, at most $i $ jobs from $\{\oldy_{i_0}, \neg \oldy_{i_0}: i_0 \in [i]\}$ are early.
    \end{itemize}
\end{lemma}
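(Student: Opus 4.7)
The strategy is to prove both directions by combining Claim~\ref{claim:one-of-each} (which restricts the early jobs completing by $D_1^*$) with an analogous nesting argument for the $\oldy/\neg\oldy$-jobs, and by exhibiting the EDD-schedule as the certificate for sufficiency. Since every job with due date at most $D_1^*$ is a filler or an $\oldx/\neg\oldx$-job, any early job completing after $D_1^*$ is some $\oldy_i$ or $\neg\oldy_i$.

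For the lower bound on tardy jobs, let $k_1$ count the early jobs completing by $D_1^*$ and $k_2$ the early $\oldy/\neg\oldy$-jobs in a given schedule. Claim~\ref{claim:one-of-each} applied with $i=n+1$ gives $k_1 \le n\cdot 2\tildeweight/\Jweight + n$, else their total processing exceeds $D_1^*$. Writing $k_1 = n\cdot 2\tildeweight/\Jweight + i$, the total processing of all early jobs is at least $n\tildeweight + i(i+1)\Jweight/2 + k_2(\tildeweight + 2Y)$ (combining Claim~\ref{claim:one-of-each} with $p(J) \ge \tildeweight + 2Y$ for every $\oldy/\neg\oldy$-job). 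Since all early jobs complete by $d(\oldy_n)$, this yields, after a case distinction on $i\in \{0,\ldots, n\}$ in which the scale $\tildeweight = 2n^2\Jweight$ dominates all residual $n(n+1)\Jweight$, $2^{n+1}Y$, $tZ$, $t$ terms, the bound $k_1 + k_2 \le n\cdot 2\tildeweight/\Jweight + 2n$, with equality forcing $i = n$ and $k_2 = n$. Together with the reverse direction (which constructs a schedule achieving $2n$ tardy jobs), the minimum equals exactly $2n$.

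An optimal schedule therefore saturates both bounds with $k_1 = n\cdot 2\tildeweight/\Jweight + n$ and $k_2 = n$. Applying Claim~\ref{claim:one-of-each} with $i = n$ to these $k_1$ early jobs of total processing at most $D_1^* < n\tildeweight + n(n+1)\Jweight/2 + 0.5\Jweight$ shows they comprise all fillers and exactly one of each $\oldx_i, \neg\oldx_i$ pair, with $\sum_{i: \neg\oldx_i \in \widetilde{\mathcal J}} a_i \le t$ coming from the processing bound. The nesting condition follows by a parallel argument at scale $i$: if $k \ge i + 1$ jobs from $\{\oldy_{i_0}, \neg\oldy_{i_0} : i_0 \in [i]\}$ were early, they together with the $n\cdot 2\tildeweight/\Jweight + n$ other early jobs would all complete by $d(\oldy_i)$, giving total processing at least $n\tildeweight + n(n+1)\Jweight/2 + (i+1)\tildeweight$; this exceeds $d(\oldy_i) = (n+i)\tildeweight + n(n+1)\Jweight/2 + i(i+1)\Jweight/2 + (2^{i+1}-2)Y + tZ + 2t$, because $\tildeweight$ alone dominates $i(i+1)\Jweight/2$ plus all lower-order terms.

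For the reverse direction, given $\widetilde{\mathcal J}$ satisfying the four conditions, the EDD-schedule for $\widetilde{\mathcal J}$ keeps every $J \in \widetilde{\mathcal J}$ early. The fillers and chosen $\oldx/\neg\oldx$-jobs are verified by a direct prefix-sum check using the hypothesis $\sum_{i_0:\neg\oldx_{i_0} \in \widetilde{\mathcal J},\, i_0 \le i} a_{i_0} \le t$. Listing the chosen $\oldy/\neg\oldy$-indices as $i_1 < \cdots < i_n$ (so $i_\ell \ge \ell$ by nesting), the completion of the $\ell$-th such job is at most $D_1^* + t + \ell\tildeweight + \sum_{l=1}^{\ell} 2^{i_l} Y + 2tZ$, while its due date is at least $d(\neg\oldy_{i_\ell}) = D_1^* + i_\ell \tildeweight + \sum_{i_0=1}^{i_\ell - 1} i_0 \Jweight + \sum_{i_0=1}^{i_\ell} 2^{i_0} Y + tZ + t$; using $\ell \le i_\ell$ and $\{i_l\}_{l \le \ell} \subseteq [i_\ell]$, the required inequality reduces to $tZ \le \sum_{i_0 \le i_\ell - 1} i_0 \Jweight + t$, which holds since $\Jweight \gg tZ$. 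The main obstacle throughout is calibrating the hierarchy $\tildeweight \gg \Jweight \gg Y \gg Z \gg t$ supplied by the definitions $Z = 2t+1$, $Y = (2t+1)Z$, $\Jweight = n\cdot 2^{n+2} Y$, $\tildeweight = 2n^2 \Jweight$, so that every bookkeeping inequality collapses in the intended direction and the structural conditions emerge at exactly the right tightness.
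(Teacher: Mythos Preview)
Your overall architecture matches the paper's: use Claim~\ref{claim:one-of-each} to bound and then pin down the early jobs finishing by~$D_1^*$, use a processing-time-versus-due-date count for the $\oldy/\neg\oldy$ layer, and certify sufficiency by the EDD-schedule. Your necessity direction is in fact more explicit than the paper's on the nesting condition in the fourth bullet, and that argument is correct.

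There are, however, two genuine gaps in your sufficiency direction. First, you list the chosen $\oldy/\neg\oldy$-indices as $i_1<\cdots<i_n$ with strict inequalities, but the fourth bullet of the lemma allows both $\oldy_j$ and $\neg\oldy_j$ in~$\widetilde{\mathcal J}$ (with neither of $\oldy_{j'},\neg\oldy_{j'}$ for some other~$j'$). In that case indices repeat, and the implicit step $\sum_{l\le\ell}2^{i_l}\le\sum_{i_0\le i_\ell}2^{i_0}$ behind your $Y$-cancellation is no longer valid as a multiset inequality. Second, even granting distinct indices, your final reduction ``$tZ\le\sum_{i_0\le i_\ell-1}i_0\,\Jweight+t$'' fails at $i_\ell=1$: the right-hand side is just~$t$, while $Z=2t+1>1$. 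Concretely, if $\neg\oldy_1\in\widetilde{\mathcal J}$ is the first chosen $\oldy/\neg\oldy$-job, your completion bound $D_1^*+t+W+2Y+2tZ$ exceeds $d(\neg\oldy_1)=D_1^*+W+2Y+tZ+t$, so your inequality does not close.

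Both gaps are easily repaired along the paper's lines: instead of trying to match $Y$-terms exactly, bound the total $Y$- and $Z$-contribution of the first $\ell\le i$ chosen $\oldy/\neg\oldy$-jobs crudely by $i\cdot 2^{i}Y+2tZ<\Jweight$ (using $\Jweight=n\cdot 2^{n+2}Y$), compare against the $\sum_{i_0\le i-1}i_0\,\Jweight$ term in $d(\neg\oldy_i)$ for $i\ge 2$, and verify the single boundary case $\neg\oldy_1$ by a direct one-line computation (its processing has no $Z$-term, so the $2tZ$ slack you introduced is not actually incurred).
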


\begin{proof}
    We first show that for set $\mathcal{\widetilde{J}}$ fulfilling the four bullet points, the EDD-schedule~$\sigma$ for $\widetilde{\mathcal{J}}$ schedules every job from $\mathcal{\widetilde{J}}$ early.
    We start with the filler jobs and jobs~$J_i^*$.
    Until~$F_i$ or $J_i^*$, exactly one of $\oldx_{i_0}$ and $\neg \oldx_{i_0}$ for $i_0 \le i$ and $\mathcal{F}_{i_0}$ for $i_0\le i$ can be scheduled.
    Thus, $F_i$ or $J_i^*\in \mathcal{\widetilde{J}}$ is completed by time
    \begin{align*}
        \sum_{i_0 \in[i] : \oldx_{i_0} \in \mathcal{\widetilde{J}}} p(J_{i_0}^*) +\sum_{i_0 \in[i] : \neg \oldx_{i_0} \in \mathcal{\widetilde{J}}} p(\neg J_{i_0}^*) + \sum_{i_0 =1}^{i}p(\mathcal{F}_{i_0})& = \sum_{i_0=1}^i i_0 \cdot X + \sum_{i_0\in[i] : \neg \oldx_{i_0} \in \mathcal{\widetilde{J}}} a_{i_0} + i \cdot W \\
        & \le i \cdot W + \sum_{i_0 = 1}^i i_0 \cdot X + t = d(F_i) = d(J_i^*)
    \end{align*}
    using the third bullet point for the inequality.

    We continue with job $\neg J_i^*\in \mathcal{\widetilde{J}}$.
    Until $\neg J_i^*$, exactly one of $\oldx_{i_0}$ and $\neg \oldx_{i_0}$ for $i_0 \le i$ and $\mathcal{F}_{i_0}$ for $i_0 < i$ is scheduled.
    Thus, $\neg J_i^*\in \mathcal{\widetilde{J}}$ is completed by time
    \begin{align*}
        \sum_{i_0 \in[i] : \oldx_{i_0} \in \mathcal{\widetilde{J}}} p(J_{i_0}^*) +\sum_{i_0 \in[i] : \neg \oldx_{i_0} \in \mathcal{\widetilde{J}}} p(\neg J_{i_0}^*) + \sum_{i_0 =1}^{i-1}p(\mathcal{F}_{i_0}) & = \sum_{i_0=1}^i i_0 \cdot X + \sum_{i_0\in[i] : \neg \oldx_{i_0} \in \mathcal{\widetilde{J}}} a_{i_0} + (i -1)\cdot W \\
        & \le (i-1) \cdot W + \sum_{i_0 = 1}^i i_0 \cdot X + t = d(\neg J_i^*)
    \end{align*}
    using again the third bullet point for the inequality.

    Finally, we consider job $J_i\in \mathcal{\widetilde{J}}$ or $\neg J_i\in \mathcal{\widetilde{J}}$.
    Until $J_i$ or $\neg J_i$, all filler jobs, exactly one of $\oldx_{i_0} $ and $\neg \oldx_{i_0}$ for each $i_0 \in [n]$, and up to $i$ jobs from $\{\oldy_{i_0}, \neg \oldy_{i_0}: i_0 \in [i]\}$ (by the fourth bullet point) are scheduled.
    Thus, the job is completed by time
    \begin{align*}
        \sum_{i_0=1}^n \bigl(p(\mathcal{F}_{i_0} ) + p(\neg J_i^*) \bigr) + \sum_{J \in \{\oldy_{i_0}, \neg \oldy_{i_0}: i_0 \in [i]\} \cap \widetilde{\mathcal{J}}} p(J) & = n \cdot W + \sum_{i_0=1}^n i_0 \cdot  X+ 2t +  \sum_{J \in \{\oldy_{i_0}, \neg \oldy_{i_0}: i_0 \in [i]\} \cap \widetilde{\mathcal{J}}} p(J)\\
        & \le D_1^* + t + i \cdot W + i \cdot 2^{i} \cdot Y  + \sum_{i_0 :\oldy_{i_0}\in \widetilde{\mathcal{J}}} a_{i_0} \cdot Z\\
        & < D_1^* + i \cdot W + X\,.
    \end{align*}
    Note that this last term is smaller than $d(J_i)$ and $d(\neg J_i)$ except for $\neg J_1$.
    For $\neg J_1$, note that $\{\oldy_{i_0}: i_0 \in [i]\} \cap \widetilde{\mathcal{J}} = \emptyset$ and thus, $\neg J_1$ is completed by time $D_1^* + t + W + 2 \cdot Y < d(\neg J_1)$.

    We continue with the backwards direction.
    We show that any schedule has at most $n \cdot (2W/X) + 2n$ early jobs, and equality only holds if the four bullet points hold.
    By \Cref{claim:one-of-each}, there are at most $ 2 n \cdot W/X +n$ early jobs completed by time~$D_1^*$.
    We continue by analyzing the early jobs completed after $D_1^*$.
    The only jobs with due date larger than $D_1^*$ are $\{\oldy_i, \neg \oldy_i : i\in [n]\}$, all of which have processing time at least~$W$.
    Since the largest due date is $d(\oldy_n) < D_1^* + (n+1) \cdot W$, it follows that at most $n$ jobs completed after~$D_1^*$ are early.
    As there is a schedule with $2n \cdot W/X + 2n $ early jobs (as shown in the forward direction), this implies that the fourth bullet point holds.

    We continue by showing the first two bullet points.
    Note that $D_1^* < n \cdot W +   \sum_{i_0=1}^n i_0\cdot X + 0.5 \cdot X$.
    \Cref{claim:one-of-each} then implies that all filler jobs and exactly one of $\oldx_i$ and $\neg \oldx_i$ for $i \in [n]$ is early.

    It remains to show the third bullet point.
    We may assume, without loss of generality and due to the EDD order, that the last early job from $\{J_i^*, \neg J_i^* :i \in [n]\} \cup \mathcal{F}_1 \cup \ldots \cup \mathcal F_n$ is~$F_n$.
    Then $F_n$ is completed by time
    \begin{align*}
        \sum_{i_0 \in[n] : \oldx_{i_0} \in \mathcal{\widetilde{J}}} p(J_{i_0}^*) +\sum_{i_0 \in[n] : \neg \oldx_{i_0} \in \mathcal{\widetilde{J}}} p(\neg J_{i_0}^*) + \sum_{i_0 =1}^{n}p(\mathcal{F}_{i_0})& = \sum_{i_0=1}^n i_0 \cdot X + \sum_{i_0\in[n] : \neg \oldx_{i_0} \in \mathcal{\widetilde{J}}} a_{i_0} + n \cdot W \\
        & = d(F_n) + \sum_{i_0\in[n] : \neg \oldx_{i_0} \in \mathcal{\widetilde{J}}} a_{i_0} - t\,.
    \end{align*}
    Because $F_n$ is early, this implies the third bullet point.
\end{proof}

Since \Cref{lem:structure-min-tardy-jobs} implies that the minimum number of tardy jobs is $2n$ in any schedule, we will use $k:=2n$ in the following.
If we require that the maximum tardiness is at most $\ell$ in addition to the schedule having only $k$ tardy jobs, then we may restrict ourselves to canonical schedules by \Cref{obs:almost-edd} and we also get structural restrictions on which jobs from $\{\oldy_i, \neg \oldy_i : i \in [n]\}$ are early:
\begin{lemma}\label{lem:y}
    Any canonical schedule with minimum number of tardy jobs and tardiness at most $\ell$ schedules exactly one of $\oldy_i $ and $\neg \oldy_i$ early for each $i \in [n]$.
\end{lemma}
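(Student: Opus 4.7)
The plan is to argue by contradiction. Suppose $\sigma$ is a canonical schedule for an early-job set~$\widetilde{\mathcal{J}}$ satisfying the characterization of \Cref{lem:structure-min-tardy-jobs}, with $T_{\max}(\sigma) \le \ell$, but for some $i$ the quantity $k_i := |\{\oldy_i, \neg \oldy_i\} \cap \widetilde{\mathcal{J}}|$ differs from~$1$. Let $j_0 := \min\{i : k_i \ne 1\}$. By minimality, $k_i = 1$ for all $i < j_0$, and then the partial sum constraint $\sum_{i \le j_0} k_i \le j_0$ from \Cref{lem:structure-min-tardy-jobs} forces $k_{j_0} \le 1$, so $k_{j_0} = 0$; in particular, both~$\oldy_{j_0}$ and~$\neg \oldy_{j_0}$ are tardy in~$\sigma$.

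The core task is to show that the tardiness of~$\oldy_{j_0}$ in~$\sigma$ already exceeds~$\ell$. First I would identify all jobs that precede~$\oldy_{j_0}$ in the canonical order by comparing modified due dates~$\widetilde d$. Using the arithmetic identity $d(\oldx_n) + \ell = D_1^* + \ell = d(\oldy_n) < d(\oldy_{j_0}) + \ell$, these predecessors comprise: every filler and every $\oldx/\neg\oldx$ job (each with $\widetilde d \le D_1^* + \ell$); every early $\oldy/\neg\oldy$ job (whose due date is at most $d(\oldy_n) = D_1^* + \ell$); and every tardy $\oldy/\neg\oldy$ job with due date below $d(\oldy_{j_0})$, namely one per index $i < j_0$ (since $k_i = 1$) together with $\neg \oldy_{j_0}$. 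Hence $\oldy_{j_0}$ is the $(j_0 + 1)$-th tardy $\oldy$-type job in the canonical order.

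Summing processing times and subtracting $d(\oldy_{j_0})$, a short computation yields
\[
T_\sigma(\oldy_{j_0}) - \ell \;=\; W \;-\; \sum_{i \le j_0} iX \;+\; \sum_{i > j_0}(k_i - 1)\,2^i Y \;+\; (s'' + r + a_{j_0} - 2t)\,Z \;-\; t,
\]
where $s''$ and $r$ are auxiliary sums of $a_i$-values each bounded by~$2t$. The decisive feature is the leading~$W$: the $W$-coefficient in the completion time of~$\oldy_{j_0}$ is $2n + j_0 + 1$, whereas in $d(\oldy_{j_0}) + \ell$ it is only $2n + j_0$. The one delicate term is the $Y$-sum, since $k_i - 1 \in \{-1, 0, 1\}$, but the partial sum constraint from \Cref{lem:structure-min-tardy-jobs} gives $\sum_{j_0 < i \le i_0}(k_i - 1) \le 1$ for every~$i_0$, and Abel summation then yields $\sum_{i > j_0}(k_i - 1)\,2^i Y \ge 2^{j_0 + 1} Y > 0$. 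Combined with $W = 2n^2 X$, $\sum_{i \le j_0} iX \le n^2 X$, and the global ordering $X \gg Y \gg Z \gg t$, the right-hand side is strictly positive, contradicting $T_{\max}(\sigma) \le \ell$. The main obstacle is this Abel-summation lower bound on the $Y$-terms, without which the $Y$-contribution could in principle cancel part of the dominant~$W$.
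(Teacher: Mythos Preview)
Your argument is correct and takes a genuinely different route from the paper's proof. The paper argues by a two-case analysis on a \emph{doubled} index (an index $i$ with both $\oldy_i$ and $\neg\oldy_i$ early): either such an index is preceded by at least $i-1$ early $\oldy$-type jobs, in which case a direct processing-time count shows $\oldy_i$ cannot in fact be early; or else one can pick the last doubled index and derive that $\oldy_n$ (or $\neg\oldy_n$) fails to be early, using the tardiness bound only to force all $\oldx$- and $\neg\oldx$-jobs to precede it. Your approach instead locates the \emph{first} irregular index $j_0$, observes that the partial-sum constraint forces $k_{j_0}=0$, and then bounds the tardiness of the tardy job $\oldy_{j_0}$ directly. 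This is arguably more streamlined: a single case, a single completion-time computation, and the hypothesis $T_{\max}\le\ell$ is used in the most natural way. The paper's approach, on the other hand, avoids tracking the canonical position of a tardy job and works purely with earliness violations.

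One small remark: your Abel-summation bound $\sum_{i>j_0}(k_i-1)\,2^iY\ge 2^{j_0+1}Y$ is correct and elegant, but with the stated constants it is not actually needed. Since $|k_i-1|\le 1$, the $Y$-term is crudely bounded in absolute value by $2^{n+1}Y$, and with $W=2n^2X$ and $X=n\cdot 2^{n+2}Y$ one has $W-\sum_{i\le j_0}iX\ge n^2X=n^3\cdot 2^{n+2}Y>2^{n+1}Y$, so the leading $W$ already dominates any possible $Y$-contribution. Your Abel argument sharpens this to a positive $Y$-term, which is nice but not essential for the conclusion.
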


\begin{proof}
    By \Cref{lem:structure-min-tardy-jobs}, there are $n$ early jobs from $\{\oldy_i, \neg \oldy_i : i \in[n]\}$.
    Assume towards a contradiction that the statement is not true, i.e., there exists a canonical schedule $\sigma $ and some~$i\in[n]$ such that both $\oldy_i$ and~$\neg \oldy_i$ are early (note that neither $\oldy_i$ nor~$\neg \oldy_i$ being early for some~$i \in [n]$ implies that there is some $i_0 \in [n]$ with~$\oldy_{i_0}$ and $\neg \oldy_{i_0}$ being early as there are $n$ early jobs from $\{J_i, \neg J_i : i\in [n]\}$ by \Cref{lem:structure-min-tardy-jobs}).
    First consider the case that there exist some $i$ so that there are at least $i+1$ early jobs from $\{\oldy_{i_0}, \neg \oldy_{i_0} : i_0 < i\}$.
    By \Cref{lem:structure-min-tardy-jobs}, all filler jobs are early.
    Since $d(F_{i_0}) < d(\neg J_i) < d(J_i)$ for every $i_0 \in[n]$, every filler job $F_{i_0}$ is completed before~$\neg J_i$ and $J_i$ since $\sigma $ is canonical.
    Consequently, $\oldy_i$ is completed not before time
    \[
        n \cdot W + (i-1) \cdot W + W + W = (n+i+1) \cdot W > d(\oldy_i)\,,
    \]
    a contradiction to $\oldy_i$ being early.

    We now consider the other case, i.e., that for every $i \in [n] $  there are at most $i$ early jobs from $\{\oldy_{i_0}, \neg \oldy_{i_0} : i_0 \le i\}$.
    Let $i$ be maximal so that $\oldy_i$ and $\neg \oldy_i$ are early (such an $i$ exists by our initial assumption).
    As there are exactly $n$ early jobs from $\{\oldy_{i_0}, \neg \oldy_{i_0} : i_0 \in [n]\}$, this implies that for all $i_1 > i$, exactly one of $\oldy_{i_1} $ and~$\neg \oldy_{i_1}$ is early.
    In particular, at least one of $\oldy_n$ and $\neg \oldy_n$ is early (both are early if $i=n$).
    If $\oldy_n $ is early, then when $\oldy_n $ is completed, the following jobs are completed as well:
all filler jobs,
jobs $\oldx_{i_0}$ and $\neg \oldx_{i_0} $ for $i_0 \in [n]$ (because they have tardiness at most $\ell$ by assumption and $d(\neg \oldx_{i_0}) + \ell < d(\oldx_{i_0}) + \ell < d(J_i)$),
$i-2$ jobs from $\{\oldy_{i_0}, \neg \oldy_{i_0} : i_0 < i\}$,
$\oldy_{i}$ and $\neg \oldy_i$, and
exactly one of $\oldy_{i_0}$ and $\neg \oldy_{i_0}$ for $i_0 > i$.
    Therefore, $\oldy_n$ is completed not before time
    \[
        n \cdot W + 2 \cdot \sum_{i_0=1}^n i_0 \cdot X + t + n \cdot W + 2 \cdot 2^i \cdot Y + \sum_{i_0 =i+1}^n 2^{i_0} \cdot Y > d(\oldy_n)\,,
    \]
    a contradiction to $\oldy_n$ being early.
    If $\oldy_n$ is tardy, then $i< n$ and $\neg \oldy_n$ is early.
    The observation that $d(\neg \oldy_n) < d(\oldy_n)$ together with the above calculation leads to a contradiction.
\end{proof}

We call a set $\widetilde{\mathcal J}$ fulfilling the conditions of \Cref{lem:structure-min-tardy-jobs,lem:y} (that is, containing every filler job and for each $i \in [n]$, exactly one of $\oldx_i$ and $\neg \oldx_i$ as well as exactly one of $\oldy_i$ and $\neg \oldy_i$, and fulfilling $\sum_{i: \neg J_i^*\in \mathcal{\widetilde{J}}} a_i \le t$) a \emph{candidate set}.
By \Cref{lem:structure-min-tardy-jobs}, we may restrict ourselves to schedules~$\sigma$ constructed by applying \Cref{obs:almost-edd} to a candidate set.
We will further assume that ties between $\oldx_i$ and $F_i$ will be broken in favor of $\oldx_i$ (i.e., if $\oldx_i \in \mathcal{\widetilde{J}}$ (note that $F_i$ is always in $\mathcal{\widetilde{J}}$), then we schedule $\oldx_i$ before $F_i$), and a possible tie between $\oldx_n$ and $\oldy_n$ will be broken in favor of $\oldy_n$.
Thus, given a candidate set~$\widetilde{\mathcal J}$, the corresponding schedule looks as follows:
\begin{enumerate}
    \item First, $i=1$ to $n$, job~$\oldx_i$ if $\oldx_i \in \widetilde{\mathcal J}$ or $\neg \oldx_i$ otherwise (i.e., $\neg \oldx_i \in \widetilde{\mathcal J}$). In both cases, $\mathcal F_i$ follows.
    \item Second, $i=1$ to $n$, the following jobs are scheduled:
    If $\oldx_i \in \widetilde{\mathcal J}$, then $\neg \oldx_i$ is scheduled, followed by~$\oldy_i$ if $\oldy_i \in \widetilde{\mathcal J}$ or $\neg \oldy_i$ if $\neg \oldy_i \in \widetilde{\mathcal J}$.
    Otherwise (i.e., if $\oldx_i \notin \widetilde{\mathcal J}$), we first schedule $\oldy_i$ if $\oldy_i \in \widetilde{\mathcal J}$ or $\neg \oldy_i$ if $\neg \oldy_i \in \widetilde{\mathcal J}$ and afterwards schedule~$\oldx_i$.
    \item Third, for $i=1$ to $n$, we schedule $\oldy_i $ if $\neg \oldy_i \in \widetilde{\mathcal J}$ or $\neg  \oldy_i$ if $\oldy_i \in \widetilde{\mathcal J}$.
\end{enumerate}
An example of the schedule for a candidate set can be found in \Cref{ex:lex}.
\begin{example}\label{ex:lex}
    Assume that $n =4$ and $\widetilde{\mathcal{J}} = \{\oldx_1, \neg \oldx_2, \oldx_3, \neg \oldx_4\} \cup \{\oldy_1, \neg \oldy_2 , \oldy_3, \neg \oldy_4\} \cup \{\mathcal F_1, \mathcal F_2, \mathcal F_3, \mathcal F_4\}$.
    Then the schedule for $\widetilde{\mathcal J }$ is
    \[
        \begin{array}{rrrrrrrr}
            \oldx_1, & \mathcal F_1, & \neg \oldx_2, & \mathcal F_2, &\oldx_3, & \mathcal F_3 ,& \neg \oldx_4, & \mathcal F_4, \\
            \neg \oldx_1, & \oldy_1, & \neg \oldy_2, & \oldx_2, & \neg \oldx_3 ,& \oldy_3 ,& \neg \oldy_4, & \oldx_4,\\
            \neg \oldy_1, & \oldy_2, & \neg \oldy_3, & \oldy_4.
        \end{array}
    \]
\end{example}
For a candidate set~$\widetilde{\mathcal J}$, we define $\widetilde S^* := \{i \in [n]: \oldx_i\in \widetilde{ \mathcal J}\}$ and $\widetilde S := \{i \in [n] : \oldy_i \in \widetilde{\mathcal J}\}$.

We now want to characterize when a candidate set~$\mathcal{\widetilde{J}}$ corresponds to a solution to the constructed instance of $1||Lex(\sum U_j, T_{\max})$.
We first show that all filler jobs as well as jobs $J_i^*$ or $\neg J_i^*$ from $\mathcal{\widetilde{J}}$ are early:

\begin{lemma}\label{lem:weak-early-x}
    Let $\widetilde{\mathcal J}$ be a candidate set.
    Then all jobs from $J \in \mathcal J^* := \widetilde{\mathcal J}  \cap \bigl(\{\oldx_i,\neg \oldx_i\} \cup \mathcal F_i\bigr)$ are early in the canonical schedule for $\widetilde{\mathcal J}$.
\end{lemma}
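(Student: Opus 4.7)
The plan is to directly compute the completion time of each job in $\mathcal J^*$ under the schedule defined by the three-phase description immediately preceding the lemma, and then verify it is at most the corresponding due date using only the candidate-set property $\sum_{i_0 : \neg \oldx_{i_0} \in \widetilde{\mathcal J}} a_{i_0} \le t$. Since every job in $\mathcal J^*$ is scheduled in the first phase, we only need to analyze the initial block of the schedule, which consists of the blocks $(\oldx_{i_0} \text{ or } \neg \oldx_{i_0}) , \mathcal F_{i_0}$ for $i_0 = 1, \ldots, n$ in order.

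First I would handle the filler jobs. Fix any $i \in [n]$ and consider the last filler in $\mathcal F_i$. By the structure of the schedule, it is completed at time
\[
    \sum_{i_0=1}^{i}\!\!\Bigl(\,p(\mathcal F_{i_0}) + p(\oldx_{i_0}) \cdot [\oldx_{i_0} \in \widetilde{\mathcal J}]  + p(\neg \oldx_{i_0}) \cdot [\neg \oldx_{i_0} \in \widetilde{\mathcal J}]\Bigr) = i \cdot W + \sum_{i_0=1}^i i_0 \cdot X + \!\!\sum_{i_0 \in [i] :\, \neg \oldx_{i_0} \in \widetilde{\mathcal J}}\!\! a_{i_0}.
\]
Since $\widetilde{\mathcal J}$ is a candidate set, $\sum_{i_0 \in [i] :\, \neg \oldx_{i_0} \in \widetilde{\mathcal J}} a_{i_0} \le \sum_{i_0 : \neg \oldx_{i_0} \in \widetilde{\mathcal J}} a_{i_0} \le t$, so the right-hand side is at most $i \cdot W + \sum_{i_0=1}^i i_0 \cdot X + t = d(F_i)$. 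Every filler in $\mathcal F_i$ finishes no later than the last one, so each is early.

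Next I would handle the jobs $\oldx_i$ and $\neg \oldx_i$ that lie in $\widetilde{\mathcal J}$. Each such job is scheduled immediately before $\mathcal F_i$, so its completion time is
\[
    \sum_{i_0=1}^{i-1}\!\!\Bigl(p(\mathcal F_{i_0}) + p(\oldx_{i_0}) \cdot [\oldx_{i_0} \in \widetilde{\mathcal J}]  + p(\neg \oldx_{i_0}) \cdot [\neg \oldx_{i_0} \in \widetilde{\mathcal J}]\Bigr) + p(\oldx_i \text{ or } \neg \oldx_i).
\]
If $\oldx_i \in \widetilde{\mathcal J}$, this equals $(i-1)\cdot W + \sum_{i_0=1}^{i} i_0 \cdot X + \sum_{i_0 < i:\, \neg \oldx_{i_0}\in \widetilde{\mathcal J}} a_{i_0}$, which is bounded by $i \cdot W + \sum_{i_0=1}^i i_0 \cdot X + t = d(\oldx_i)$ by the same candidate-set bound. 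If instead $\neg \oldx_i \in \widetilde{\mathcal J}$, the completion time equals $(i-1)\cdot W + \sum_{i_0=1}^{i} i_0\cdot X + \sum_{i_0 \le i:\, \neg \oldx_{i_0}\in \widetilde{\mathcal J}} a_{i_0}$, again at most $(i-1) \cdot W + \sum_{i_0=1}^i i_0 \cdot X + t = d(\neg \oldx_i)$.

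This handles every job in $\mathcal J^*$, so the lemma follows. There is no real obstacle: all of the work is a book-keeping computation, with the only non-trivial input being the inequality $\sum_{i_0 : \neg \oldx_{i_0} \in \widetilde{\mathcal J}} a_{i_0} \le t$ guaranteed by the candidate-set definition, which is precisely the slack absorbed by the $+t$ term built into the relevant due dates.
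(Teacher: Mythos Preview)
Your argument is correct and is essentially the same computation the paper relies on. The only difference is that the paper does not repeat the calculation here: it observes that since $\ell > D_1^* \ge d(J)$ for every $J \in \mathcal J^*$, the canonical schedule and the EDD-schedule for $\widetilde{\mathcal J}$ coincide on this initial block, and then invokes \Cref{lem:structure-min-tardy-jobs}, whose forward direction already carried out exactly the completion-time bounds you wrote down.
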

\begin{proof}
    Note that $\ell > D_1^* \ge d(J) $ for every $J \in \mathcal J^* := \widetilde{\mathcal J}  \cap \bigl(\{\oldx_i,\neg \oldx_i\} \cup \mathcal F_i\bigr)$.
    Therefore, these jobs are the first jobs in the canonical schedule~$\sigma$ for $\mathcal{\widetilde{J}}$, implying that they are early in $\sigma$ if and only if they are early in the EDD-schedule for $\mathcal{\widetilde{J}}$ which holds by \Cref{lem:structure-min-tardy-jobs}.
\end{proof}

Next, we analyze when the remaining jobs are early, i.e., the jobs from $\widetilde{\mathcal J} \cap \{\oldy_i, \neg \oldy_i: i\in[n]\}$.
\begin{lemma}\label{lem:weak-early-y}
    Let $\widetilde{\mathcal J}$ be a candidate set.
    Then all $n$ jobs from $\widetilde{\mathcal J} \cap \{\oldy_i, \neg \oldy_i : i\in[n]\}$ are early in the canonical schedule for $\widetilde{\mathcal J}$ if and only if 
    \begin{itemize}
        \item for each $i \in [n]$ with $\oldx_i \in\widetilde{\mathcal J }$, we have $\oldy_i\in \widetilde{\mathcal J}$ and $\sum_{i_0 \in [i] \cap \widetilde S} a_{i_0} \le t$, and 
        \item for each $i \in [n] $ such that $\neg J_i, \neg J_i^* \in \mathcal{\widetilde{J}}$, we have $\sum_{i_0 \in [i] \cap \widetilde S} a_{i_0} \le t$.
    \end{itemize}
\end{lemma}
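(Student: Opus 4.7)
The plan is to compute, for each $i \in [n]$, the completion time of the unique job $Y_i \in \widetilde{\mathcal J} \cap \{\oldy_i, \neg \oldy_i\}$ in the canonical schedule for $\widetilde{\mathcal J}$ and compare it with $d(Y_i)$. Using the three-phase description of the canonical schedule, the jobs preceding (and including) $Y_i$ are: all filler jobs (total processing time $n \cdot \tildeweight$); for each $i_0 \in [n]$, the phase-1 $\oldx$-type copy, i.e.\ the one lying in $\widetilde{\mathcal J}$; for every $i_0 < i$, the phase-2 $\oldx$-type copy together with $Y_{i_0}$; and, in iteration $i$ itself, $\neg \oldx_i$ immediately before $Y_i$ when $i \in \widetilde S^*$, and nothing else (since $\oldx_i$ then follows $Y_i$) when $i \notin \widetilde S^*$. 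Summing these processing times and subtracting $d(Y_i)$, the $\tildeweight$, $2^{i_0} Y$, and baseline $i_0 \cdot \Jweight$ contributions telescope, yielding
\[
C(Y_i) - d(Y_i) \;=\; \varepsilon_i \cdot i \cdot \Jweight \;+\; R_i \;+\; Z \cdot (s_i - t),
\]
where $s_i := \sum_{i_0 \in [i] \cap \widetilde S} a_{i_0}$, $R_i := \sum_{i_0 \le i} a_{i_0} + \sum_{i_0 > i,\ i_0 \notin \widetilde S^*} a_{i_0} - 2t \in [-2t, 0]$, and $\varepsilon_i$ depends only on the membership of $i$ in $\widetilde S$ and $\widetilde S^*$: $\varepsilon_i = 0$ when $i$ is in both sets or in neither, $\varepsilon_i = +1$ when $i \in \widetilde S^* \setminus \widetilde S$, and $\varepsilon_i = -1$ when $i \in \widetilde S \setminus \widetilde S^*$.

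The lemma then follows by a four-case analysis on this membership, using the hierarchy $\Jweight \gg Y \gg Z > 2t$. If $i \in \widetilde S^* \setminus \widetilde S$ (so $\varepsilon_i = +1$), then $C(Y_i) - d(Y_i) \ge i \cdot \Jweight - 2t - tZ > 0$, so $Y_i$ is tardy; this forces the implication ``$i \in \widetilde S^* \Rightarrow i \in \widetilde S$'' of the first bullet. If $i \in \widetilde S \setminus \widetilde S^*$ ($\varepsilon_i = -1$), the $-i \cdot \Jweight$ term dominates: $C(Y_i) - d(Y_i) \le -i \cdot \Jweight + tZ < 0$, so $Y_i$ is early unconditionally, matching the fact that the two bullets impose no condition in this case.

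In the two remaining cases, where $\varepsilon_i = 0$---namely $i \in \widetilde S \cap \widetilde S^*$ (governed by the first bullet's auxiliary condition $s_i \le t$), and $i \notin \widetilde S \cup \widetilde S^*$, i.e.\ $\neg \oldx_i, \neg \oldy_i \in \widetilde{\mathcal J}$ (governed by the second bullet)---the sign of $C(Y_i) - d(Y_i)$ is determined by $R_i + Z(s_i - t)$. Since $R_i \in [-2t, 0]$, both $s_i$ and $t$ are integers, and $Z \ge 2t + 1$, this quantity is $\le 0$ if and only if $s_i \le t$, which is precisely the arithmetic constraint stated in both bullets. Combining the four cases gives the claimed ``if and only if''.

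The main technical obstacle is the careful bookkeeping in the telescoping that produces the clean formula above: every contribution apart from the leading $Z \cdot (s_i - t)$ and $\varepsilon_i \cdot i \cdot \Jweight$ must cancel to within $O(t)$. The parameter choice $Z = 2t + 1$, $Y = (2t+1)Z$, $\Jweight = n \cdot 2^{n+2} Y$ was tailored precisely so that no lower-order term ever flips a sign in the case analysis, and so that the $\varepsilon_i = \pm 1$ terms strictly dominate the worst-case $Z \cdot (s_i - t)$ contribution they must overpower.
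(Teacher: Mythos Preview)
Your proof is correct and follows essentially the same approach as the paper: compute the completion time of the chosen job $Y_i\in\{\oldy_i,\neg\oldy_i\}$ in the canonical schedule and compare it with $d(Y_i)$, then do a four-case analysis on the memberships of $i$ in $\widetilde S$ and $\widetilde S^*$. Your presentation is slightly more streamlined than the paper's: where the paper splits first on whether $\oldx_i$ or $\neg\oldx_i$ lies in $\widetilde{\mathcal J}$ and performs two separate completion-time computations (each time replacing the phase-1 job for indices $i_0>i$ by $\neg\oldx_{i_0}$ and absorbing a $\le 2t<Z$ error), you carry out one exact computation and capture the two cases through the single indicator $\varepsilon_i=[i\in\widetilde S^*]-[i\in\widetilde S]$ together with the explicit residual $R_i\in[-2t,0]$. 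The resulting sign analysis is identical to the paper's conclusions case by case.
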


\begin{proof}
    First consider the case that $\oldx_i \in \widetilde{\mathcal J}$.
    By the definition of a candidate set, exactly one of $\oldy_i$ and $\neg \oldy_i$ is contained in $\mathcal{\widetilde{J}}$.
    We will now show that job~$\neg \oldy_i$ will be tardy even if $\neg \oldy_i \in \mathcal{\widetilde{J}}$ while $\oldy_i$ is tardy if and only if $\oldy_i \notin \mathcal{\widetilde{J}}$ or $\sum_{i_0 \in [i] \cap \widetilde S} a_{i_0} > t$.
    Job $\oldy_i $ if $\oldy_i \in \widetilde{\mathcal J}$ or $\neg \oldy_i$ if $\neg \oldy_i \in \widetilde{\mathcal J}$ is completed after
all filler jobs,
jobs $\oldx_{i_0}$ and $\neg \oldx_{i_0} $ for $i_0 \in [i]$,
exactly one of $\oldx_{i_0}$ and $\neg \oldx_{i_0}$ for $i_0 > i$, 
job~$\oldy_{i_0}$ for $i_0 \in [i] \cap \widetilde S$, and
$\neg \oldy_{i_0}$ for $i_0 \in [i]\setminus \widetilde S$.
   To simplify the following equations, we assume that for each $i_0 > i$, job $\neg \oldx_{i_0}$ and not $ \oldx_{i_0}$ is scheduled before~$\oldy_i$ (this increases the total completion time by at most $2t< Z$).
    Thus, $\oldy_i$ or $\neg \oldy_i$ is completed by time
    \begin{align*}
        \sum_{i_0=1}^n p(\mathcal F_{i_0}) + & \sum_{i_0 =1}^{i} \Bigl( p(\oldx_{i_0}) + p(\neg \oldx_{i_0}) \Bigr) + \sum_{i_0 = i+1}^n p( \neg \oldx_{i_0}) + \sum_{i_0 \in [i] \cap \widetilde S} p(\oldy_{i_0}) + \sum_{i_0 \in [i]\setminus \widetilde S} p(\neg \oldy_{i_0}) \\
        & = n \cdot W + \sum_{i_0=1}^n i_0 \cdot X + \sum_{i_0=1}^i i_0 \cdot X + 2t + i \cdot W + \sum_{i_0 =1}^i 2^{i_0 } \cdot Y + \sum_{i_0 \in [i] \cap \widetilde S} a_{i_0} \cdot Z\\
        & = (n+i) \cdot W +  \sum_{i_0=1}^n i_0 \cdot X + \sum_{i_0=1}^i i_0 \cdot X +\sum_{i_0 =1}^i 2^{i_0 } \cdot Y + \sum_{i_0 \in [i] \cap \widetilde S} a_{i_0} \cdot Z  + 2t\\
        & = D_1^* + i \cdot W + \sum_{i_0=1}^i i_0 \cdot X + \sum_{i_0=1}^i 2^{i_0} \cdot Y  + \sum_{i_0 \in [i] \cap \widetilde S} a_{i_0} \cdot Z + t\\
        & = d(\oldy_i) + \bigl(\sum_{i_0 \in [i] \cap \widetilde S} a_{i_0} - t\bigr) \cdot Z \\
        & = d(\neg \oldy_i) + i \cdot X + \bigl(\sum_{i_0 \in [i] \cap \widetilde S} a_{i_0} - t\bigr) \cdot Z \,.
    \end{align*}
    Thus, if $\neg \oldy_i \in \widetilde{\mathcal J}$, then $\neg \oldy_i$ is tardy.
    If $\oldy_i \in \widetilde{\mathcal J}$, then $\oldy_i$ is early if and only if $\sum_{i_0 \in [i] \cap \widetilde S} a_{i_0} \le t$.

    We continue with the case $\neg \oldx_i \in \widetilde{\mathcal J}$.
    We will show that $\oldy_i $ is early if and only if $\oldy_i \in \mathcal{\widetilde{J}}$ while $\neg \oldy_i$ is early if and only if $\neg \oldy_i \in \mathcal{\widetilde{J}}$ and $\sum_{i_0 \in [i] \cap \widetilde S} a_{i_0} \le t$.
    Job $\oldy_i $ if $\oldy_i \in \widetilde{\mathcal J}$ or $\neg \oldy_i$ if $\neg \oldy_i \in \widetilde{\mathcal J}$ is completed after
all filler jobs,
jobs $\oldx_{i_0}$ and $\neg \oldx_{i_0} $ for $i_0 \in [i-1]$,
exactly one of $\oldx_{i_0}$ and $\neg \oldx_{i_0}$ for $i_0 \ge i$, 
$\oldy_{i_0}$ for $i \in [i] \cap \widetilde S$, and
$\neg \oldy_{i_0}$ for~$i \in [i]\setminus \widetilde S$.
    As in the case $\oldx_i \in \widetilde{\mathcal J}$, we assume that for each $i_0 \ge i$, job $\neg \oldx_{i_0}$ and not $ \oldx_{i_0}$ is scheduled before $\oldy_i$ (this increases the total completion time by at most $2t < Z$).
    Thus, $\oldy_i$ or $\neg \oldy_i$ is completed by time
    \begin{align*}
        \sum_{i_0=1}^n p(\mathcal F_{i_0}) + & \sum_{i_0 =1}^{i-1} \Bigl( p(\oldx_{i_0}) + p(\neg \oldx_{i_0}) \Bigr) + \sum_{i_0 = i}^n p(\neg \oldx_{i_0}) + \sum_{i_0 \in [i] \cap \widetilde S} p(\oldy_{i_0}) + \sum_{i_0 \in [i]\setminus \widetilde S} p(\neg \oldy_{i_0}) \\
        & = (n+i) \cdot W +  \sum_{i_0=1}^n i_0 \cdot X + \sum_{i_0=1}^{i-1} i_0 \cdot X +2t + \sum_{i_0 =1}^i 2^{i_0 } \cdot Y + \sum_{i_0 \in [i] \cap \widetilde S} a_{i_0} \cdot Z\\
        & = D_1^* + i \cdot W + \sum_{i_0=1}^{i-1} i_0 \cdot X + \sum_{i_0=1}^i 2^{i_0} \cdot Y  + \sum_{i_0 \in [i] \cap \widetilde S} a_{i_0} \cdot Z + t\\
        & = d(\oldy_i) - i \cdot X+ \bigl(\sum_{i_0 \in [i] \cap \widetilde S} a_{i_0} - t\bigr) \cdot Z \\
        & = d(\neg \oldy_i) + \bigl(\sum_{i_0 \in [i] \cap \widetilde S} a_{i_0} - t\bigr) \cdot Z
    \end{align*}
    Therefore, $\oldy_i$ is always early if $J_i \in \widetilde{\mathcal{J}}$ while $\neg \oldy_i$ is early if and only if $\neg \oldy_i \in \mathcal{\widetilde{J}}$ and $\sum_{i_0 \in [i] \cap \widetilde S} a_{i_0} \le t$.

    Since a candidate job contains exactly one of $\oldy_i$ and $\neg \oldy_i$ for every $i\in[n]$, $n$ jobs from $\mathcal{\widetilde{J}}$ are early if and only if exactly one of $\oldy_i$ and $\neg \oldy_i$ is early for every $i\in[n]$.
    As shown above, this is in turn equivalent to the two bullet points.
\end{proof}

Finally, we analyze when the tardiness is at most $\ell$.

\begin{lemma}\label{lem:weak-tardy}
    Let $\widetilde{\mathcal J }$ be a candidate set.
    Then the maximum tardiness is at most $\ell$ if and only if $\oldx_n \in \widetilde{\mathcal J}$ or $\sum_{i \in \widetilde S} a_i \le t$.
\end{lemma}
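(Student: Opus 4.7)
The plan is to exhaustively compute the tardiness of every job in the canonical schedule associated with $\widetilde{\mathcal J}$, exploiting the three-phase structure introduced just before the lemma, and identify which jobs could ever violate the bound~$\ell$. Phase~1 jobs are early by \Cref{lem:weak-early-x} and have tardiness $0$; repeating the completion-time computation from the proof of \Cref{lem:weak-early-y} bounds the tardiness of each candidate $\oldy$-type job in Phase~2 by $n\cdot\Jweight + t\cdot Z + t$, which is far below $\ell \approx n\cdot\tildeweight$ by the hierarchy $\tildeweight \gg \Jweight \gg Y \gg Z \gg t$. Thus the analysis reduces to the tardy $\oldy$-type jobs in Phase~3 and the non-candidate $\oldx$-type jobs inserted throughout Phase~2.

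For Phase~3, I would compute the completion time of the $i$-th Phase~3 job by summing the Phase~1 and Phase~2 total processing times with the processing times of the first $i$ Phase~3 jobs, and then subtract its due date. Straightforward algebraic cancellation yields
\[
\ell - T \;=\; \sum_{i_0=1}^{i'} i_0\cdot\Jweight \;+\; \Bigl(2t - \sum_{i_0 \in \widetilde S \cup [i]} a_{i_0}\Bigr)\cdot Z \;+\; t,
\]
where $i' = i-1$ if $i \in \widetilde S$ (the $i$-th Phase~3 job is $\neg\oldy_i$) and $i' = i$ otherwise. This is nonnegative because $\sum_{i_0 \in \widetilde S \cup [i]} a_{i_0} \le \sum_{i_0=1}^n a_{i_0} = 2t$, so every Phase~3 job is within the bound, independently of~$\widetilde{\mathcal J}$.

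The crux is the tardiness of the non-candidate $\oldx$-type job at index~$i$ in Phase~2. Summing all processing times scheduled up to and including this job, a direct calculation gives
\[
T \;=\; n\cdot\tildeweight + \sum_{i_0=1}^n i_0\cdot\Jweight + t - \sum_{\substack{i_0>i \\ i_0 \in \widetilde S^*}} a_{i_0} + \sum_{i_0=1}^{i'} 2^{i_0}\cdot Y + \Bigl(\sum_{i_0 \in [i'] \cap \widetilde S} a_{i_0}\Bigr)\cdot Z,
\]
with $i' = i-1$ in Case~A ($\oldx_i \in \widetilde{\mathcal J}$, so the non-candidate $\neg\oldx_i$ is scheduled \emph{before} the candidate $\oldy$-type job at step~$i$) and $i' = i$ in Case~B ($\neg\oldx_i \in \widetilde{\mathcal J}$, so the non-candidate $\oldx_i$ is scheduled \emph{after} it). Comparing with $\ell = n\cdot\tildeweight + \sum_{i_0=1}^n i_0\cdot\Jweight + \sum_{i_0=1}^n 2^{i_0}\cdot Y + t\cdot Z + t$, the slack $\ell - T$ contains the term $\sum_{i_0 > i'} 2^{i_0}\cdot Y$; whenever $i' < n$ this contributes at least $2^n\cdot Y$, which dominates the worst-case deficit $(\sum_{[i']\cap\widetilde S} a_{i_0})\cdot Z \le 2t\cdot Z$ because $Y = (2t+1)\cdot Z$. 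Hence $T < \ell$ unless $i' = n$, and the only way to have $i' = n$ is Case~B at $i = n$.

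In that single remaining case---$\neg\oldx_n \in \widetilde{\mathcal J}$, so $\oldx_n$ is non-candidate and scheduled last in Phase~2---the subtracted term vanishes and the formula collapses to
\[
\ell - T(\oldx_n) \;=\; \Bigl(t - \sum_{i \in \widetilde S} a_i\Bigr)\cdot Z,
\]
so $T(\oldx_n) \le \ell$ exactly when $\sum_{i \in \widetilde S} a_i \le t$. Combining everything, the maximum tardiness exceeds~$\ell$ if and only if Case~B holds at $i = n$ \emph{and} $\sum_{i \in \widetilde S} a_i > t$, which is the negation of the condition in the lemma. The main obstacle is the bookkeeping: keeping straight which partial sums over $\widetilde S^*$ and $\widetilde S$ appear in each of the two scheduling cases, and verifying the many side inequalities that follow from the hierarchy $\tildeweight \gg \Jweight \gg Y \gg Z \gg t$.
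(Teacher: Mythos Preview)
Your proposal is correct and follows essentially the same approach as the paper: both exhaustively compute (or bound) the tardiness of every job in the canonical schedule and isolate the non-candidate job $J^*_n$ in the case $\neg J^*_n \in \widetilde{\mathcal J}$ as the only one whose tardiness can reach~$\ell$, with the exact threshold $\sum_{i\in\widetilde S}a_i\le t$. The organization differs slightly---you proceed by the three phases of the schedule while the paper proceeds by job type ($J^*_i$, $\neg J^*_i$, $F_i$, $J_i/\neg J_i$)---and you track the exact correction term $-\sum_{i_0>i,\,i_0\in\widetilde S^*}a_{i_0}$ whereas the paper just uses the upper bound $p(\neg J^*_{i_0})$ for $i_0>i$; but since this correction vanishes at $i=n$, both arguments coincide at the critical case.
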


\begin{proof}
    We make a distinction between the different jobs.
    First consider a job~$\oldx_i$, and assume that $\oldx_i \notin \widetilde{\mathcal J}$ (otherwise $\oldx_i$ will be completed earlier than the time we compute below).
    Until $\oldx_i$, the following jobs are scheduled:
all filler jobs,
jobs $\oldx_{i_0}$ and $\neg \oldx_{i_0}$ for $i_0 \le i$,
exactly one of $\oldx_{i_0} $ and $\neg \oldx_{i_0}$ for $i_0  > i$, and
$\oldy_{i_0} $ for~$i_0 \in [i] \cap \widetilde S$, and
$\neg \oldy_{i_0}$ for $i_0 \in [i]\setminus \widetilde S$.
    Consequently, using that $p(J_{i_0}^*) \le p(\neg J_{i_0}^*)$, job~$\oldx_i$ is completed by time
    \begin{align*}
        \sum_{i_0=1}^n p(\mathcal F_{i_0}) + & \sum_{i_0 =1}^{i} \Bigl( p(\oldx_{i_0}) + p(\neg \oldx_{i_0}) \Bigr) + \sum_{i_0 = i+1}^n p(\neg \oldx_{i_0}) + \sum_{i_0 \in [i] \cap \widetilde S} p(\oldy_{i_0}) + \sum_{i_0 \in [i]\setminus \widetilde S} p(\neg \oldy_{i_0}) \\
        & = n \cdot W + \sum_{i_0=1}^n i_0 \cdot X + \sum_{i_0=1}^i i_0 \cdot X + \sum_{i_0=1}^n a_{i_0} + i \cdot W + \sum_{i_0 =1}^i 2^{i_0 } \cdot Y + \sum_{i_0 \in [i] \cap \widetilde S} a_{i_0} \cdot Z\\
        & = (n+i) \cdot W +  \sum_{i_0=1}^n i_0 \cdot X + \sum_{i_0=1}^i i_0 \cdot X +\sum_{i_0 =1}^i 2^{i_0 } \cdot Y + \sum_{i_0 \in [i] \cap \widetilde S} a_{i_0} \cdot Z + 2t\\
        & = d(\oldx_i) + n \cdot W + \sum_{i_0=1}^n i_0 \cdot X + \sum_{i_0=1}^i 2^{i_0} \cdot Y  + \sum_{i_0 \in [i] \cap \widetilde S} a_{i_0} \cdot Z + t\\
        & = d(\oldx_i) + \ell - \sum_{i_0 =i +1}^n 2^{i_0} \cdot Y + \bigl(\sum_{i_0 \in [i] \cap \widetilde S} a_{i_0} - t\bigr) \cdot Z
    \end{align*}
    which is at most $d(\oldx_i) + \ell$ for $i < n$ or $\sum_{i \in \widetilde S} a_{i_0} \le t$.

    For $\neg \oldx_{i}$, the calculations are identical to the one for $\oldx_i$, except that neither $\oldy_{i}$ nor $\neg \oldy_i$ is scheduled before~$\neg \oldx_i$ and the due date of $\neg \oldx_i$ is by $W$ smaller than the due date of $\oldx_i$.
    Consequently, $\neg \oldx_i$ is completed by time
    \begin{align*}
        \sum_{i_0=1}^n p(\mathcal F_{i_0}) + & \sum_{i_0 =1}^{i} \Bigl( p(\oldx_{i_0}) + p(\neg \oldx_{i_0}) \Bigr) + \sum_{i_0 = i+1}^n p(\neg \oldx_{i_0}) + \sum_{i_0 \in [i-1] \cap \widetilde S} p(\oldy_{i_0}) + \sum_{i_0 \in [i-1]\setminus \widetilde S} p(\neg \oldy_{i_0}) \\
        & = d(\neg \oldx_i) + n \cdot W + \sum_{i_0=1}^n i_0 \cdot X + \sum_{i_0=1}^{i-1} 2^{i_0} \cdot Y  + \sum_{i_0 \in [i-1] \cap \widetilde S} a_{i_0} \cdot Z + t\\
        & = d(\neg \oldx_i) + \ell - \sum_{i_0 =i}^n 2^{i_0} \cdot Y + \bigl(\sum_{i_0 \in [i-1] \cap \widetilde S} a_{i_0} - t\bigr) \cdot Z  \,.
    \end{align*}
    Thus, $\neg \oldx_{i}$ always has tardiness smaller than $\ell$.

    For the filler job~$F_i$, note that $F_i$ is completed before $\oldx_i$ for $\oldx_i \notin \widetilde{\mathcal J}$ but has the same due date.
    Thus, $F_i$ has tardiness smaller than $\ell$.

    We continue with job $\oldy_i \notin \widetilde{\mathcal{J}}$ or $\neg \oldy_i\notin \widetilde{\mathcal J}$.
    This job is completed after all filler jobs, jobs $\oldx_{i_0}$ and $\neg \oldx_{i_0}$ for $i_0\in[n]$, and $(n+i)$ jobs from $\{\oldy_{i_0}, \neg \oldy_{i_0}: i_0 \in [n]\}$.
    Consequently, the job is completed by time
    \begin{align*}
        n \cdot W + & 2\cdot \sum_{i_0 =1}^n i_0 \cdot X + \sum_{i_0=1}^n a_{i_0} + (n+i ) \cdot W + \sum_{i_0=1}^n 2^{i_0} \cdot Y+ \sum_{i_0=1}^i 2^{i_0} \cdot Y + \sum_{i_0=1}^n a_{i_0} \cdot Z \\
        & = D_1^* + (n+ i) \cdot W +  \sum_{i_0 =1}^n i_0 \cdot X + \sum_{i_0=1}^n 2^{i_0} \cdot Y+ \sum_{i_0=1}^i 2^{i_0} \cdot Y + 2t \cdot Z + t\\
        & = d(\neg \oldy_i) + n \cdot W + \sum_{i_0 = i}^n i_0 \cdot X +\sum_{i_0=1}^n 2^{i_0} \cdot Y + t \cdot Z\\
        & < d(\neg \oldy_i) + \ell < d(\oldy_i ) + \ell\,.
    \end{align*}
    Thus, jobs $\oldy_i$ and $\neg \oldy_i$ have tardiness smaller than $\ell$.
\end{proof}

We can now show the correctness of the reduction:
\begin{proof}[Proof of \Cref{thm:lex}]
    We start with the forward direction, so assume that there is a solution $S$ to the \textsc{Partition} instance.
    Let $\widetilde{\mathcal J} := \{\oldx_i, \oldy_i : i\in S\} \cup \{\neg \oldx_i, \neg \oldy_i : i \notin S\} \cup \mathcal F_1 \cup \ldots \cup \mathcal F_n$.
    By \Cref{lem:weak-early-y,lem:weak-early-x}, the schedule for~$\widetilde{\mathcal J}$ has $k$ tardy jobs.
    By \Cref{lem:weak-tardy}, the schedule has maximum tardiness at most $\ell$.

    We continue with the reverse direction.
    So assume that there is a schedule $\sigma$ with $k$ tardy jobs and maximum tardiness at most $\ell$.
    By \Cref{lem:structure-min-tardy-jobs,lem:y}, we may assume that $\sigma$ is the canonical schedule for some candidate set $\widetilde{\mathcal J}$.
    By \Cref{lem:structure-min-tardy-jobs}, we have $\sum_{i \in \widetilde S^*} a_i \ge t$.
    Further, we have $\sum_{i \in \widetilde S} a_i \le t$:
    If $\oldx_n \in \widetilde{\mathcal J}$, then this follows from \Cref{lem:weak-early-y}.
    Otherwise we have $ \oldx_n \notin \widetilde{\mathcal J}$ and the inequality follows from \Cref{lem:weak-tardy}.
    By \Cref{lem:weak-early-y}, we have $\widetilde S^* \subseteq \widetilde S$.
    Combining these statements, we get
    \[
        t \le \sum_{i \in \widetilde S^*} a_i \le \sum_{i \in \widetilde S} a_i \le t\,. 
    \]
    Consequently, we have $\widetilde S^* = \widetilde S$, and the inequalities hold with equality.
    Therefore, $\widetilde S^*$ is a solution to the \textsc{Partition} instance.
\end{proof}

\section{Conclusions}

In this paper we resolved one of the most fundamental problems in bicriteria scheduling which involves combining the maximal tardiness objective with the total number of tardy jobs objective. We proved that the lexicographic version of this problem (i.e.\ the $1||Lex(T_{\max},\sum U_j)$ problem) is strongly NP-complete when the maximal tardiness is used as the primary criterion, while it is at least weakly NP-complete when the number of tardy jobs is used as the primary criterion (i.e.\ the $1||Lex(\sum U_j, T_{\max})$ problem). We also classified the two other variants of the problem, $1|T_{\max} \le \ell, \sum U_j \le k|$ and $1||\alpha T_{\max}+\sum U_j$, as strongly NP-complete.

The first obvious question raised by our work is whether $1||Lex(\sum U_j, T_{\max})$ is strongly NP-complete or not. However, this is only one out of several other objectives that may be considered in future work. For example, there is lack of practical (polynomial time) approximation algorithms for solving bicriteria scheduling problems in general. It is interesting to see if one can design a polynomial time approximation scheme (PTAS) to solve our specific bicriteria problem, and if not maybe one can rule out the existence of such an algorithm.
Another very interesting research direction which is still unexplored is whether we can provide FPT algorithms to hard multicriteria scheduling problems with respect to some of the more natural parameters, e.g., the number of different due dates or the number of different processing times in the instance.

\section*{Acknowledgments}

We are very grateful to the anonymous reviewers of a previous version of the paper for their thoughtful and constructive feedback, pointing us to several mistakes in a previous draft.

\bibliography{bib}

\begin{thebibliography}{10}

\bibitem{AdamuAdewumi2014}
Muminu~O. Adamu and Aderemi~O. Adewumi.
\newblock A survey of single machine scheduling to minimize weighted number of
  tardy jobs.
\newblock {\em Journal of Industrial and Management Optimization},
  10(1):219--241, 2014.

\bibitem{MultiagentSchedulingBook}
Alessandro Agnetis, Jean{-}Charles Billaut, Stanislaw Gawiejnowicz, Dario
  Pacciarelli, and Ameur Soukhal.
\newblock {\em Multiagent Scheduling - Models and Algorithms}.
\newblock Springer, 2014.

\bibitem{ChenBul}
Chuen-Lung Chen and Robert~L. Bulfin.
\newblock Scheduling a single machine to minimize two criteria: Maximum
  tardiness and number of tardy jobs.
\newblock {\em IIE Transactions}, 26:76--84, 1994.

\bibitem{GareyJohnson}
Michael.~R. Garey and David~S. Johnson.
\newblock {\em Computers and Intractability: {A} Guide to the Theory of
  NP-Completeness}.
\newblock W. H. Freeman, 1979.

\bibitem{HeegerHS23}
Klaus Heeger, Danny Hermelin, and Dvir Shabtay.
\newblock Single machine scheduling with few deadlines.
\newblock In {\em Proceedings of the 18th International Symposium on
  Parameterized and Exact Computation ({IPEC} 2023)}, pages 24:1--24:15, 2023.

\bibitem{DBLP:journals/eor/Hoogeveen05}
Han Hoogeveen.
\newblock Multicriteria scheduling.
\newblock {\em European Journal of Operational Research}, 167(3):592--623,
  2005.

\bibitem{HuoLZ07}
Yumei Huo, Joseph~Y.{-}T. Leung, and Hairong Zhao.
\newblock Bi-criteria scheduling problems: Number of tardy jobs and maximum
  weighted tardiness.
\newblock {\em European Journal of Operational Research}, 177(1):116--134,
  2007.

\bibitem{Huo2}
Yumei Huo, Joseph~Y.{-}T. Leung, and Hairong Zhao.
\newblock Complexity of two-dual criteria scheduling problems.
\newblock {\em Operations Research Letters}, 35:211--220, 2007.

\bibitem{Jackson1956}
James~R. Jackson.
\newblock An extension of johnson's results on job lot scheduling.
\newblock {\em Naval Research Logistics Quarterly}, 3:201--203, 1956.

\bibitem{LeeV93}
Chung-Yee Lee and George~L. Vairaktarakis.
\newblock {\em Complexity of Single Machine Hierarchical Scheduling: A Survey},
  pages 269--298.
\newblock World Scientific, 1993.

\bibitem{Moore1968}
J.~M. Moore.
\newblock An $n$ job, one machine sequencing algorithm for minimizing the
  number of late jobs.
\newblock {\em Management Science}, 15:102--109, 1968.

\bibitem{Shanthikumar}
George~J. Shanthikumar.
\newblock Scheduling n jobs on one machine to minimize the maximum tardiness
  with minimum number tardy.
\newblock {\em Computers Operations Research}, 10:255--266, 1983.

\bibitem{DBLP:journals/rairo/TKindtB01}
Vincent T'kindt and Jean{-}Charles Billaut.
\newblock Multicriteria scheduling problems: a survey.
\newblock {\em RAIRO - Operations Research}, 35(2):143--163, 2001.

\bibitem{MulticriteriaSchedulingBook}
Vincent T'kindt and Jean{-}Charles Billaut.
\newblock {\em Multicriteria Scheduling - Theory, Models and Algorithms {(2.}
  ed.)}.
\newblock Springer, 2006.

\bibitem{Yuan17}
Jinjiang Yuan.
\newblock Unary {NP}-hardness of minimizing the number of tardy jobs with
  deadlines.
\newblock {\em Journal of Scheduling}, 20(2):211--218, 2017.

\end{thebibliography}

\end{document}